\newcommand{\cal}{\mathcal}
\newtheorem{theorem}{Theorem}[section]
\newtheorem{proposition}[theorem]{Proposition}
\newtheorem{corollary}[theorem]{Corollary}
\newtheorem{lemma}[theorem]{Lemma}
\newtheorem{problem}{Problem}
\theoremstyle{definition}
\newtheorem{definition}[theorem]{Definition}
\newtheorem{example}[theorem]{Example}
\newtheorem{remark}[theorem]{Remark}
\newcommand{\C}{\mathcal{C}}
\newcommand{\A}{\mathcal{A}}
\newcommand{\cs}{\mathrm{colsp}}
\newcommand{\GL}{\mathrm{GL}}
\newcommand{\F}{\mathbb{F}}
\newcommand{\fq}{\mathbb{F}_q}
\newcommand{\N}{\mathbb{N}}
\newcommand{\rk}{\mathrm{rk}}
\newcommand{\trk}{\mathrm{trk}}
\newcommand{\rs}{\mathrm{rowsp}}
\newcommand{\csup}{\mathrm{csupp}}
\newcommand{\rsup}{\mathrm{rsupp}}
\newcommand{\mU}{\mathcal{U}}
\newcommand{\mC}{\mathcal{C}}
\newcommand{\mat}{\F_q^{n \times m}}
\newcommand{\drk}{d}
\newcommand{\Fm}{\mathbb{F}_{q^m}}
\newcommand{\Fq}{\mathbb{F}_{q}}
\newcommand{\FN}{\F^{k \times n \times m}}
\renewcommand{\ss}{\mathrm{ss}}
\renewcommand{\trk}{\mathrm{trk}}
\newcommand{\diag}{\mathrm{diag}}
\renewcommand{\longrightarrow}{\to}
\newcommand{\Fqk}{\Fq\text{-}[n\times m, k]}
\newcommand{\Fqkd}{\Fq\text{-}[n\times m, k, d]}
\newcommand{\npmatrix}[1]{\left( \begin{matrix} #1 \end{matrix} \right)}
\definecolor{light-gray}{gray}{0.90}
\title{Tensor Representation of Rank-Metric Codes}
\author[Eimear Byrne]{Eimear Byrne}
\address{School of Mathematics and Statistics, University College Dublin, Belfield, Ireland}
\curraddr{}
\email{ebyrne@ucd.ie}
\thanks{}
\author[Alessandro Neri]{Alessandro Neri$^*$}
\address{Institute of Mathematics, University of Zurich, Switzerland}
\curraddr{}
\email{alessandro.neri@math.uzh.ch}
\thanks{}
\author[Alberto Ravagnani]{Alberto Ravagnani$^*$}
\address{School of Mathematics and Statistics, University College Dublin, Belfield, Ireland}
\curraddr{}
\email{alberto.ravagnani@ucd.ie}
\thanks{$^*$Alessandro Neri was supported by the Swiss National Science Foundation through grant no. 169510. Alberto Ravagnani was supported by the Swiss National Science Foundation through grant no. P2NEP2\_168527 and by the Marie Curie Research Grants
Scheme, grant no. 740880.}
\author[John Sheekey]{John Sheekey}
\address{School of Mathematics and Statistics, University College Dublin, Belfield, Ireland}
\curraddr{}
\email{john.sheekey@ucd.ie}
\thanks{}
\keywords{rank metric, tensor, tensor rank, tensor extremal code, minimal tensor rank, MTR code, MRD code, Delsarte-Gabidulin code, generator tensor}
\begin{document}


\maketitle
\thispagestyle{empty}

\begin{abstract}
    We present the theory of rank-metric codes with respect to the 3-tensors that generate them. We define the generator tensor and the parity check tensor of a matrix code, and describe the properties of a code through these objects. 
    We define the tensor rank of a code to be the tensor rank of its generating tensors, and propose that this quantity is a significant coding theoretic parameter. By a result on the tensor rank of Kruskal from the 1970s, the tensor rank of a rank-metric code of dimension $k$ and minimum rank distance~$d$ is at least $k+d-1$. We call codes that meet this bound \textit{minimal tensor rank} (MTR) codes.
    It is known from results in algebraic complexity theory that an MTR code implies the existence of an MDS code. In this paper, we also address the converse problem, that of the existence of an MTR code, given an MDS code.
    We  identify several parameters for which the converse holds and give explicit constructions of MTR codes using MDS codes.  
    We furthermore define generalized tensor ranks, which give a refinement of the tensor rank as a code invariant.
Moreover, we use these to distinguish inequivalent rank-metric codes.    
\end{abstract}

\bigskip
  
\bigskip

\section{Introduction}

The theory of rank-metric codes is an important topic in coding theory, which has seen a resurgence of interest in the last decade.
Any linear space of matrices can be viewed as a rank-metric code, where the rank distance between a pair of matrices is the rank of their difference and indeed the most general class of (linear) rank-metric codes
are the $\fq$-subspaces of $\fq^{n\times m}$, which we refer to as {\em matrix codes}.
Much of the initial focus of the theory was on subspaces of $\F_{q^m}^n$, the so-called $\F_{q^m}$-linear {\em vector rank-metric codes},
where the rank of a vector in $\F_{q^m}^n$ is the dimension of the span of its coordinates as an $\F_q$-linear space. 
Given a basis of $\F_{q^m}$ over $\fq$, any vector rank-metric code can be represented as a linear space of matrices in $\fq^{n \times m}$.
The most prominent family in this class of codes are the Delsarte-Gabidulin codes \cite{del,gabid,roth}, which can be conveniently described 
in terms of generator and parity check matrices that are $q$-analogues of those for the well-studied Reed-Solomon codes of classical coding theory.

The role of matrices in classical (Hamming metric) coding theory is crucial. Efficient encoding and decoding rely on generator and parity check matrices. Several properties of a code are characterized by such matrices, including duality, equivalence, and minimum distance.
These matrices also yield connections between coding theory and finite geometry, from which
optimal codes have been constructed from sets of points in projective space.

In this paper, we present rank-metric codes in the framework of 3-tensors. More precisely, we define the generator tensor and parity check tensor of 
an $\fq$-linear space of matrices and describe the properties of such codes in relation to these objects.

An important and well-studied parameter of a tensor is given by its {\em tensor rank}. This aspect of bilinear forms is central to algebraic complexity theory \cite{brockett,algcplex,kruskal}.
The definition of tensor rank considered here is the minimum number of simple tensors that appear in the expression of a tensor as a sum of simple tensors.
It extends the notion of matrix rank and gives a measure of the complexity of tensor multiplication.  
Precise computation of tensor rank is elusive for an arbitrary tensor; indeed computing the rank of a 3-tensor over a finite field is NP-complete~\cite{np}.
We propose that tensor rank is a significant parameter in the theory of rank-metric codes. This extends the notion of the tensor rank of a rank-metric code corresponding to a finite semifield \cite{Lavrauw}. A rank-metric code in $\F_q^{n \times m}$ is a {\em slice space} of
an associated {\em generator tensor}, just as a code in $\fq^n$ is the row-space of a generator matrix. The smaller the tensor rank of the generating tensor, the more efficient the encoding. It is therefore of interest to obtain codes whose generating tensors have minimum tensor rank. 

Lower bounds on tensor rank have been known for some time \cite{kruskal}. If $X$ is a generating tensor for an $\F_q$-linear code in $\Fq^{n\times m}$ of dimension $k$ and minimum rank distance $d$ then this lower bound on $\trk(X)$, the tensor rank of $X$, can be expressed as:
\begin{eqnarray}\label{eq:intro}
\trk(X) \geq k + d - 1.
\end{eqnarray} 
Coding theorists will immediately notice the similarity of this inequality to the Singleton bound. We will refer to a code having a generating tensor meeting this bound as a {\em minimum tensor rank} (MTR) code.
It is known that any (nondegenerate) tensor of rank $R$ gives rise to a linear block code of length $R$, and in particular that any lower bound on the length of a linear block code provides an immediate lower bound on the tensor rank \cite{brockett,algcplex}. It can therefore be deduced that any MTR code gives a construction of an MDS block code. A central problem posed in this paper is to address the inverse problem: given an MDS block code of length $R$, find a construction of an MTR code with tensor rank $R$. We solve this problem for a range of parameter sets.
  
  We introduce the \textit{generalized  ranks} of a matrix code, which turn out to be an invariant of code equivalence. In particular, such values can be used to distinguish between inequivalent codes and, remarkably, even between MRD codes that otherwise share many invariants. 
Moreover, generalized tensor ranks lead to a refinement of the tensor rank bound, from which the existing tensor rank bound (\ref{eq:intro}) can be deduced. The coding theoretic arguments used in these proofs are very simple and compact. 

A further aspect of the tensor description of a matrix code is that many of its coding theoretic parameters are encoded in its generating and parity check tensors. For example, the minimum rank distance of a matrix code can be characterized by the dimensions of its slice spaces, in direct analogy with relation of the minimum Hamming distance of a block code in relation to its parity check matrix.  

\subsection*{Outline.} In Section~\ref{sec:2} we cover preliminary results on rank-metric codes and in Section~\ref{sec:3} we recall basic results on 3-tensors and characterize tensor rank. 
In Section~\ref{sec:4} we relate matrix codes and tensors and introduce the generator tensor of a code. We define optimality of a code with respect to tensor rank and describe a connection between matrix codes of a fixed tensor rank and linear block codes, which allows the construction of one code, given the other. In Section~\ref{sec:5} we define the notion of an {\em extremal triple} $(C,V,W)$. It  consists of an optimal $\fq$-code $C$, which is a code with smallest possible length for given dimension and minimum distance,  and full-rank matrices $V \in \fq^{n \times R},W\in\fq^{m \times R} $; such a triple yields a \emph{tensor rank extremal  code}. In the special case where $C$ is an MDS code, the extremal triple yields an MTR code. We identify parameters $m,n,R$ for which $(C,V,W)$ is always an extremal triple via a characterization theorem. For some values outside these parameter sets, we give an explicit construction of an extremal triple using Cauchy codes. 
We furthermore consider the Delsarte-Gabidulin codes and obtain an upper bound on their tensor rank. We apply these results to  give constructions of matrix codes with upper-bounded tensor rank.  
In Section~\ref{sec:6} we define the $r$-th generalized tensor rank of a matrix code and establish their main properties. We show that the $r$-th generalized tensor rank is an invariant of equivalent matrix codes and use this to distinguish between codes. We also show that these ranks are not an invariant of duality. Finally, in Section~\ref{sec:7}, we define the parity check tensor of a matrix code. 
We consider the standard coding theoretic operations of shortening and puncturing of matrix codes, and use these as a tool to relate the parameters of a code to its generator and parity check tensors.

\section{Rank-Metric Codes}
\label{sec:2}

Throughout this paper, $q$ is a prime power and $m,n$ are positive integers. In this section, we assume $m \ge n$ to simplify the presentation. Analogous results hold for $n \le m$.
We denote by $\mat$ the $\F_q$-linear space of $n \times m$ matrices with entries in $\F_q$. For an integer $i \ge 0$, we let $[i]:=\{1,\ldots,i\}$. All dimensions are computed over~$\F_q$, unless otherwise stated.

The main objects studied in this paper are rank-metric codes. They were 
introduced in by Delsarte \cite{del} for combinatorial interest.

\begin{definition}
The \textbf{rank distance} between $X,Y \in \mat$ is $\drk(X,Y):=\rk(X-Y)$. A (\textbf{rank-metric}) \textbf{code} is an $\F_q$-linear subspace $\mC \subseteq \mat$. If $\mC \neq \{0\}$, then the \textbf{minimum distance} of $\mC$ is the integer 
$$\drk(\mC):= \min\{\rk(X) : X \in \mC, \ X \neq 0\} = \min\{\drk(X,Y) : X,Y \in \mC, \ X \neq Y\}.$$
\end{definition}

From now on, we will refer to a rank-metric code $\C \subseteq \Fq^{n\times m}$ of dimension $k$  as an $\Fqk$ code. When the minimum distance $d$ is known, we will call it an $\Fqkd$ code.

In this paper we adopt the following definition of code equivalence.

\begin{definition}
Codes $\mC,\mC' \subseteq \mat$ are \textbf{equivalent} if there exists an $\F_q$-linear isometry $\varphi: (\mat,d) \to (\mat,d)$ such that $\varphi(\mC)=\mC'$.
\end{definition}

As a linear isometry of $\mat$ is necessarily bijective, equivalent codes have the same dimension and minimum distance. According to \cite{classif,wan}, in which all the $\Fq$-linear isometries are classified,  
 codes $\C,\C'\subseteq \mat$ are equivalent if and only if there exist invertible matrices $A \in \GL(n,q)$, $B\in \GL(m,q)$ such that 
$$\C'=A\C B:=\left\{AXB : X \in \C\right\},$$
or, when $m=n$,
$$\C'=A\C^\top B:=\left\{AX^\top B : X \in \C\right\}.$$

The following result is the rank-metric analogue of the Singleton bound for codes with the Hamming metric.

\begin{theorem}[Theorem 5.4 of \cite{del}] \label{singbound}
Let $\mC \subseteq \mat$ be a non-zero code. Then $$\dim(\mC) \le m(n-\drk(\mC)+1).$$
\end{theorem}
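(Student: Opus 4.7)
The plan is to mimic the classical proof of the Singleton bound via puncturing, adapted to the rank-metric setting. Let $d := d_{rk}(\mC)$ and consider the $\F_q$-linear projection
$$\pi : \mat \to \F_q^{(n-d+1) \times m}$$
that keeps only the first $n-d+1$ rows of a matrix (any fixed choice of $n-d+1$ rows works). I will bound $\dim(\mC)$ by showing that $\pi|_\mC$ is injective and then comparing dimensions.

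First I would analyze the kernel. If $X \in \mC$ satisfies $\pi(X) = 0$, then $X$ has at least $n-d+1$ zero rows, which forces $\rk(X) \le n - (n-d+1) = d-1$. By the definition of minimum distance, any nonzero element of $\mC$ has rank at least $d$, so $X$ must be the zero matrix. Hence $\ker(\pi|_\mC) = \{0\}$ and $\pi|_\mC$ is injective.

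From injectivity I obtain
$$\dim(\mC) \le \dim \F_q^{(n-d+1) \times m} = m(n-d+1),$$
which is the claimed bound. The argument relies only on the elementary observation that the rank of a matrix is at most the number of its nonzero rows, combined with the minimum distance hypothesis; there is no serious obstacle. The same strategy works with the roles of $m$ and $n$ swapped (puncturing columns instead of rows), which is why the analogous statement holds when $n \le m$, as remarked at the start of the section.
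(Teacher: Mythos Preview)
Your argument is correct and is the standard puncturing proof of the rank-metric Singleton bound. Note, however, that the paper does not supply its own proof of this statement: it is quoted as Theorem~5.4 of Delsarte~\cite{del} and left unproved here, so there is no in-paper argument to compare against. Your proof would serve perfectly well as a self-contained justification, and the same idea (restricting to $n-d+1$ rows and using that a nonzero codeword has rank at least $d$) is essentially what underlies Delsarte's original result.
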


\begin{definition}
A code $\mC$ is  \textbf{maximum rank distance} (\textbf{MRD}) if it meets the bound of Theorem \ref{singbound}, or if it is the zero code.
\end{definition}

Recall that the \textbf{trace-product} of $X,Y \in \mat$ is $\langle X,Y \rangle:=\mbox{Tr}(XY^\top)$. It is well-known and easy to see that the map
$(X,Y) \mapsto \mbox{Tr}(XY^\top)$ defines a bilinear, symmetric and nondegenerate form on $\mat$.

\begin{definition}
The \textbf{dual} of an $\Fqk$ code is $$\mC^\perp:=\{X \in \mat : \langle X,Y \rangle =0 \mbox{ for all } Y \in \mC\}.$$
Note that $\mC^\perp$ is an $\Fq[n\times m, nm-k]$ code.
\end{definition}

In \cite{gabid}, Gabidulin introduces a class of rank-metric codes that are linear over the extension field $\F_{q^m}$. They are defined as follows.

\begin{definition}
A \textbf{vector rank-metric code} is an $\F_{q^m}$-subspace $C \subseteq \F_{q^m}^n$. 
\end{definition}

To obtain a matrix code from a vector code, it suffices to use that fact that
$\F_{q^m}^n$ and $\mat$ are isomorphic as $\F_q$-linear spaces. An isomorphism can be constructed as follows. Let $\Gamma=\{\gamma_1,\ldots,\gamma_m\}$ be a basis of $\F_{q^m}/\F_q$. For $v \in \F_{q^m}$, denote by $\Gamma(v) \in \mat$ the matrix whose $(i,j)$ entry is the $j$-th coordinate of $v_i$ over the basis $\Gamma$. Then the map $v \mapsto \Gamma(v)$ is an $\F_q$-isomorphism. We denote by
$\Gamma(C)$ the image of a vector rank-metric code $C \subseteq \F_{q^m}^n$ under~$\Gamma$, i.e., we let
$\Gamma(C)=\left\{\Gamma(v) : v \in C  \right\}$.

\begin{lemma}[see e.g. \cite{costch}]
Let $C \subseteq \F_{q^m}^n$ be a non-zero vector code. The minimum distance of $\Gamma(C)$ does not depend on the choice of the basis
$\Gamma$ for $\F_{q^m}/\F_q$. Moreover, for any such basis we have
$$\dim_{\F_q}(\Gamma(C)) = m \cdot \dim_{\F_{q^m}}(C).$$
\end{lemma}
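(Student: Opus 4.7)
The plan is to reduce both statements to the observation that $\Gamma : \F_{q^m}^n \to \mat$ is an $\F_q$-linear isomorphism and that, for any individual vector $v$, the matrix rank of $\Gamma(v)$ coincides with the vector rank of $v$, defined as $\dim_{\F_q}\langle v_1,\ldots,v_n\rangle_{\F_q}$. Once this intrinsic description of $\rk(\Gamma(v))$ is established, both basis-independence of the minimum distance and the dimension formula will follow immediately.

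First I would fix a basis $\Gamma=\{\gamma_1,\ldots,\gamma_m\}$ of $\F_{q^m}/\F_q$ and let $\gamma : \F_{q^m} \to \F_q^m$ be the $\F_q$-linear isomorphism sending an element to its coordinate vector with respect to $\Gamma$. For $v=(v_1,\ldots,v_n)\in\F_{q^m}^n$, the rows of $\Gamma(v)$ are precisely $\gamma(v_1),\ldots,\gamma(v_n)$. Therefore the row space of $\Gamma(v)$ is the $\gamma$-image of the $\F_q$-subspace $\langle v_1,\ldots,v_n\rangle_{\F_q}\subseteq \F_{q^m}$, and since $\gamma$ is an $\F_q$-isomorphism,
$$\rk(\Gamma(v))=\dim_{\F_q}\langle v_1,\ldots,v_n\rangle_{\F_q}.$$
The right-hand side depends only on $v$ and not on $\Gamma$.

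For the first assertion I would then write
$$\drk(\Gamma(C))=\min\{\rk(\Gamma(v)) : v\in C,\ v\neq 0\}=\min\{\dim_{\F_q}\langle v_1,\ldots,v_n\rangle_{\F_q} : v\in C,\ v\neq 0\},$$
where the second equality uses the identity above termwise. Since the final expression makes no reference to $\Gamma$, the minimum distance is a function of $C$ alone.

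For the dimension formula I would argue that $\Gamma : \F_{q^m}^n \to \mat$ is $\F_q$-linear and bijective (its inverse reads the rows of a matrix back as elements of $\F_{q^m}$ via $\Gamma$), so it sends the $\F_q$-subspace $C$ onto an $\F_q$-subspace of the same $\F_q$-dimension. Taking an $\F_{q^m}$-basis of $C$ of size $\dim_{\F_{q^m}}(C)$ and expanding each element in the $\F_q$-basis $\Gamma$ yields an $\F_q$-basis of $C$ of size $m\cdot\dim_{\F_{q^m}}(C)$, which gives $\dim_{\F_q}(\Gamma(C))=m\cdot\dim_{\F_{q^m}}(C)$. There is no real obstacle here; the content of the lemma is the identification of $\rk(\Gamma(v))$ with an invariant of $v$, and everything else is bookkeeping with $\F_q$-linear isomorphisms.
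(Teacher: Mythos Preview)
Your argument is correct and is the standard one: you identify $\rk(\Gamma(v))$ with the intrinsic quantity $\dim_{\F_q}\langle v_1,\ldots,v_n\rangle_{\F_q}$, which immediately gives basis-independence of the minimum distance, and the dimension formula is straightforward $\F_q$-linear algebra. The paper itself does not give a proof of this lemma; it is stated with a reference to \cite{costch} and used as a known fact. Your proof is exactly the argument one would expect to find in that reference, so there is nothing to compare.
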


\begin{definition}
The \textbf{minimum distance} of a non-zero vector code $C \subseteq \F_{q^m}^n$ is the minimum distance of $\Gamma(C)$, where $\Gamma$ is any basis of
$\F_{q^m}/\F_q$. It is denoted by $\drk(C)$.
\end{definition}

With these definitions, it is easy to see that a vector code $C\subseteq \Fm^n$ is MRD if and only if
$$\drk(C)=n-\dim_{\Fm}(C)+1.$$
MRD vector codes (and therefore MRD matrix codes) exist for every set of parameters. The first construction was found by Delsarte \cite{del} and independently by Gabidulin \cite{gabid} and Roth \cite{roth}. It was then
generalized in~\cite{gabidgen}. 

Let $K,s,m$ be positive integers such that $1\leq K \leq m$ and $1\leq s <m$. Define the set $\mathcal G_{K,s}$ of linearized polynomials as
$$\mathcal G_{K,s}:=\left\{ \sum_{i=0}^{K-1}f_ix^{q^{si}} : f_i \in \Fm \mbox{ for } i=0,\ldots, K-1 \right\}.$$

\begin{definition}\label{def:Gabcode}
Let $\alpha=(\alpha_1,\ldots,\alpha_n)\in \Fm^n$ be a vector such that $\alpha_1,\ldots, \alpha_n \in \Fm$ are linearly independent over $\Fq$. Let moreover $K,n,s,m$ be positive integers such that
$1\leq K \leq n \leq m$, $1\leq s <m$ and $\gcd(s,m)=1$. The \textbf{generalized Delsarte-Gabidulin code} $\mathcal G_{K,s}(\alpha)$ is defined as
$$\mathcal G_{K,s}(\alpha):=\left\{(f(\alpha_1),\ldots,f(\alpha_n)) : f \in \mathcal G_{K,s} \right\}.$$
When $s=1$, we will simply refer to $\mathcal G_{K,1}(\alpha)$ as a \textbf{Delsarte-Gabidulin code}.
\end{definition}


Another property on vector codes that we will need is the following, that is different matrix representations of the same vector code lead to equivalent rank-metric codes.

\begin{remark}\label{rem:equivvector}
Let $C \subseteq \Fm^n$ be a vector code, and let $\Gamma=\{\gamma_1,\ldots, \gamma_m\}$, $\Gamma'=\{\gamma_1',\ldots, \gamma_m'\}$ be bases of $\F_{q^m}/\F_q$. Note that the matrix codes $\Gamma(C)$ and $\Gamma'(C)$ are equivalent. 
\end{remark}


We also define the column support and row support of a rank-metric code.
See~\cite{gorla2019} for a detailed analysis of the various definitions of rank-support proposed in the literature.

\begin{definition}
	Let $\C$ be a rank-metric code. The {\bf column support} and the {\bf row support} of $\C$ are defined to be the $\Fq$-subspaces of $\Fq^n$ and $\Fq^m$, respectively,
	defined by 
	$$\csup(\C):=\sum_{M \in \C} \mathrm{colsp}(M), \qquad \rsup(\C):=\sum_{M\in \C} \mathrm{rowsp}(M),$$
	where the sums are sums of vector subspaces. The code $\C\subseteq \Fq^{n\times m}$ is said to be {\bf nondegenerate} if $\csup(\C)=\Fq^n$ and $\rsup(\C)=\Fq^m$.
\end{definition}



\section{3-Tensors}
\label{sec:3}

We recall some definitions and results from tensor algebra. The interested reader is referred to \cite{algcplex,advlinalg} for more details. 
In this section, $\F$ denotes an arbitrary field.

Recall that a tensor product of $\F$-spaces $U$ and $V$, denote by $U \otimes V$, is defined as a pair $(T,\varphi)$, where $\varphi: U \times V \to T$ is a bilinear map to the $\F$-space $T$ such that
for any bilinear map $f: U \times V \to W$ to an $\F$-space $W$, there exists a unique $\F$-linear map $\hat{f}: T \longrightarrow W$ satisfying $f=\hat{f} \circ \varphi$. We say that 
$(T, \varphi)$ satisfies the {\em universal mapping property}. The existence and uniqueness of $(T,\varphi)$, and hence the well-definedness of $U \otimes V$, can be shown by its construction 
as a quotient space of the free $\F$-linear space on $U \times V$ (see, for example, \cite[Chapter~10]{advlinalg}). 
Tensors of the form $u \otimes v$ 
are called {\bf simple} tensors (also called {\bf fundamental} or {\bf pure} tensors in the literature). Arbitrary elements of $U \otimes V$ are expressed as sums of simple tensors: $\sum_{i=1}^\ell u_i \otimes v_i$, with $u_i \in U$ and $v_i \in V$.
Since the tensor product of a pair of spaces is itself a vector space, we may construct the tensor product $(U \otimes V) \otimes W = U \otimes (V \otimes W)$, for $\F$-spaces $U,V,W$, which we therefore express as 
$U \otimes V \otimes W$. The corresponding map associated with such a tensor product is a trilinear map $\varphi: U \times V \times W \longrightarrow U \otimes V \otimes W$. 

If $\{u_1,\ldots,u_k\}$, $\{v_1,\ldots,v_n\}$ and $\{w_1,\ldots,w_m\}$ are bases of $U$, $V$ and $W$, respectively, then a basis of $U \otimes V 
\otimes W$ is given by
$$\{u_i \otimes v_j \otimes w_\ell : 1 \le i \le k, \ 1 \le j \le n, \ 1 \le \ell \le m\}.$$ 
In particular,  $\dim_\F(U \otimes V \otimes W) = \dim_\F(U)  \dim_\F(V) \dim_\F(W).$

In this paper we shall be mainly interested in tensor products of the form
\begin{equation*} \label{stan}
\F^{k} \otimes \F^{n} \otimes  \F^{m},
\end{equation*}
whose elements are called $3$-tensors, 3rd-order tensors, or triads.
The elements of this space can be represented as 3-dimensional arrays. As with matrices (2nd-order tensors), one can define a 3-dimensional array of size $k \times n \times m$ as a function 
$$X: \{1,\ldots,k\} \times \{1,\ldots,n\} \times \{1,\ldots,m\} \to \F,$$
which we represent as
\begin{equation*}\label{eq:1}
	X=(X_{ij\ell} : 1 \leq i\leq k,  \ 1 \leq j\leq n, \ 1\leq \ell \leq m).
\end{equation*}
These representations of the tensor $X=\sum_{r=1}^R u_r\otimes v_r \otimes w_r$ are related by
$$X_{ij\ell}=\sum_{r=1}^Ru_{ir}v_{jr}w_{\ell r},$$
where $u_r=(u_{ir}: 1 \leq i \leq k)$, $v_r=(v_{jr}: 1 \leq j \leq n)$, and $w_r=(w_{\ell r}: 1 \leq \ell \leq m)$.
We hence identify $\F^{k} \otimes \F^{n} \otimes  \F^{m}$ with the space $\FN$. The representation of $X$ as an element of $\FN$ is called its \textbf{coordinate tensor}.

For the remainder of the paper, given vectors $z_r \in \F^N$, we will write $z_{jr}$ to denote the $j$th coefficient of $z_r$ for each $r$. 
That is, $z_r:=(z_{jr}: 1 \leq j \leq N)$.





We introduce the following maps, which defines multiplication of 3-tensors with vectors (corresponding to $s=1$) and matrices ($s>1$).
$$m_1: \F^{s\times k} \times  \FN  \longrightarrow  \F^{s\times n \times m}: (A,X) \mapsto m_1(A, X)=\sum_i (Au_i)\otimes v_i \otimes w_i,$$

$$m_2: \F^{s\times n} \times  \FN  \longrightarrow  \F^{s\times k \times m}: (B,X) \mapsto m_2(B, X)=\sum_i u_i\otimes (Bv_i) \otimes w_i,$$

$$m_3: \F^{s\times m} \times  \FN  \longrightarrow  \F^{s\times k \times n}: (C,X) \mapsto m_3(C, X)=\sum_i u_i\otimes v_i \otimes (Cw_i),$$
for any $X =\sum_i u_i\otimes v_i \otimes w_i \in \F^{k \times n \times m}$.

	Let $X \in \F^{N_1 \times N_2 \times N_3}$. For each $i \in \{1,2,3\}$ and for any $A \in \F^{s \times \ell}, \ B \in \F^{\ell \times N_i}$ it is easy to see that
	\begin{equation}\label{eq:assoc}
	     m_i(AB,X)=m_i(A,m_i(B ,X)).
	\end{equation}
	Indeed, $\GL(N_i,q)$ acts on the set of tensors $\F^{N_1 \times N_2 \times N_3}$.

\begin{remark}
	Notice that, in the case that $s=1$, the operation $m_1$ yields a 3-tensor of the form 
	$\sum_i \lambda_i\otimes v_i \otimes w_i,$ 
	for some scalars $\lambda_i \in \F$, which can be identified with the 2-tensor $\sum_i (\lambda_i v_i) \otimes w_i \in \F^{n \times m}$ ($\F\otimes V$ and $V$ are isomorphic).
	Similarly, $m_2$ and $m_3$ yield 2-tensors for the case $s=1$.
	With abuse of notation, in this case we will consider the images of the $m_i$ to be in the space of matrices over $\F$.
\end{remark}

\begin{definition}
	Let $X\in \F^{N_1 \times N_2 \times N_3}$. For each $i \in \{1,2,3\}$, we define the $i$-th {\bf slice space} of $X$ to be the $\F$-span
	of $\{m_i(e_j,X) : 1 \leq j \leq N_i\}$, that is, 
	$$\mathrm{ss}_i(X):=\langle m_i(e_1,X), \ldots, m_i(e_{N_i},X)\rangle.$$ We write $\dim_i(X)$ to denote the 
	dimension of $\mathrm{ss}_i(X)$ as an $\F$-vector space.
	We say that $\ss_i(X)$ is {\bf nondegenerate} if $\dim_i(X)=N_i$, in which case we say that $X$ is {\bf $i$-nondegenerate}.
\end{definition}
If $X = \sum_{r=1}^R u_r \otimes v_r \otimes w_r \in \FN$, then clearly 
$$\ss_1(X) = \left\langle  \sum_{r=1}^R u_{jr}v_r \otimes w_r : 1 \leq j \leq k  \right\rangle, $$
where for each $r$, $u_r = (u_{jr}:1 \leq j \leq k ) \in \F^k$. In particular, $\ss_1(X)$ is the $\F$-span
of $k$ matrices 
$$A_j = \sum_r u_{jr}v_r \otimes w_r = m_1(e_j,X) \in \F^{n \times m},$$ of rank at most $R$, which form a basis of $\ss_1(X)$ if $X$ is $1$-nondegenerate.

We also point out the simple fact that for every basis $g_1,\ldots, g_{N_i}$ of $\F^{N_i}$
we have  
$$\ss_i(X)=\langle m_i(g_1,X), \ldots, m_i(g_{N_i},X) \rangle.$$
In particular, for every $G\in \GL(N_i,q)$ we have
$$\ss_i(X)=\ss_i(m_i(G,X)).$$

Of particular interest in this paper, is the 1st slice space $\ss_1(X)$ of a nondegenerate $3$-tensor $X \in \F_q^{k \times n \times m}$, which will be a $k$-dimensional subspace of matrices in $\F_q^{n \times m}$.

A notable parameter of a tensor that relates to algebraic complexity is its {\em tensor rank}, which we now define. 

\begin{definition}
	Let $X \in \FN$. The {\bf tensor rank} of $X$ is the minimum integer $R$ such that there exists $u_r \in \F^{k}, v_r \in \F^{n}, w_r \in \F^{m}$ such that 
	$$X = \sum_{r=1}^R u_r \otimes v_r \otimes w_r.$$
	We write $\mathrm{trk}(X)$ to denote the tensor rank of $X$.
	A representation of the $X$ as sum of $R=\mathrm{trk}(X)$ simple tensors is called a {\bf minimal rank form} of $X$.
\end{definition}

The reader will easily verify that 
\begin{equation}\label{eq:trkmi}
     \trk(m_1(A,X)) \leq \trk(X),
\end{equation} 
for any $A \in \F^{s \times k}$ (with analogous statements for elements in the image of $m_2$ and $m_3$). Moreover, it is straightforward to check that the tensor rank is invariant under any permutation of the spaces $\F^k,\F^n,\F^m$.

The following result gives various characterizations of the tensor rank; see for example \cite[Proposition~14.45]{algcplex}. As we will use the construction of these characterizations in Lemma \ref{lem:phipsi}, we include a proof. 

\begin{proposition}\label{prop:trkcharact}
	Let $X \in \FN$ and let $R>0$ be an integer. The following are equivalent.
	\begin{enumerate}
		\item $\trk(X)\leq R$.
		\item There exist $A_1,\ldots, A_R\in \F^{n \times m}$ of rank 1 such that $\ss_1(X)\subseteq \langle A_1,\ldots, A_R\rangle$.
		\item \label{pthree} There exist diagonal matrices $D_1,\ldots, D_{k} \in \F^{R \times R}$, and matrices $P \in \F^{n\times R}$, $Q \in \F^{m\times R}$ such that 
		$$\ss_1(X)= P\langle D_1,\ldots, D_{k}\rangle Q^\top:=\langle PD_1Q^\top, \ldots, PD_{k}Q^\top\rangle.$$
	\end{enumerate}
\end{proposition}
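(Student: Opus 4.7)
The plan is to prove the implications $(1) \Rightarrow (2) \Rightarrow (3) \Rightarrow (1)$, exploiting the fact that the first slice space along the standard basis completely determines the tensor.

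For $(1) \Rightarrow (2)$, I would start with a rank decomposition $X = \sum_{r=1}^{R} u_r \otimes v_r \otimes w_r$ and identify each simple tensor $v_r \otimes w_r$ with the rank-one matrix $v_r w_r^\top \in \F^{n \times m}$. A direct calculation shows $m_1(e_j,X) = \sum_{r=1}^R u_{jr}\, v_r w_r^\top$, so the generators of $\ss_1(X)$ lie in the span of the rank-one matrices $A_r := v_r w_r^\top$, giving the desired containment.

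For $(2) \Rightarrow (3)$, given rank-one matrices $A_1,\ldots,A_R$ whose span contains $\ss_1(X)$, I would factor each as $A_r = p_r q_r^\top$ and collect the $p_r$ and $q_r$ as the columns of $P \in \F^{n \times R}$ and $Q \in \F^{m \times R}$ respectively. Then every element $\sum_r \lambda_r A_r$ of the ambient span equals $P D Q^\top$ for the diagonal matrix $D = \diag(\lambda_1,\ldots,\lambda_R)$. In particular, for each $j$ there is a diagonal $D_j$ with $m_1(e_j,X) = P D_j Q^\top$, and using linearity of the map $D \mapsto P D Q^\top$ one obtains the equality $\ss_1(X) = P\langle D_1,\ldots,D_k\rangle Q^\top$.

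For $(3) \Rightarrow (1)$, which I expect to be the most subtle step, I would reconstruct $X$ from its first slices. Writing the columns of $P$ and $Q$ as $p_r$ and $q_r$ and setting $D_j = \diag(d_{j1},\ldots,d_{jR})$, the identity $m_1(e_j,X) = P D_j Q^\top = \sum_{r=1}^R d_{jr}\, p_r q_r^\top$ holds up to choosing $D_j$ inside $\langle D_1,\ldots,D_k\rangle$ (a priori the equality in (3) only gives span equality, but since $m_1(e_j,X)\in \ss_1(X)$ one may replace $D_j$ by a suitable element of this span without loss). Defining $u_r := (d_{jr} : 1 \leq j \leq k) \in \F^{k}$ and $X' := \sum_{r=1}^R u_r \otimes p_r \otimes q_r$, I would verify that $m_1(e_j, X') = m_1(e_j, X)$ for every $j$. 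Since a 3-tensor is uniquely determined by its first slices along a basis of $\F^k$, this forces $X = X'$, so $\trk(X) \leq R$.

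The main obstacle is the bookkeeping in $(3) \Rightarrow (1)$: the hypothesis only gives an equality of spans, not a matching of generators, so care is needed to choose representatives $D_j$ of $m_1(e_j,X)$ inside the span before assembling the rank decomposition. Once this is handled, the argument reduces to the simple observation that slice data along a basis reconstructs the tensor.
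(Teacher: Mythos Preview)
Your proposal is correct and follows essentially the same construction as the paper: decompose rank-one matrices as outer products, collect the factors as columns of $P$ and $Q$, and record the coefficients of each slice as the diagonals of the $D_j$. The only organizational difference is that you run the argument as a cycle $(1)\Rightarrow(2)\Rightarrow(3)\Rightarrow(1)$, whereas the paper proves the two equivalences $(1)\Leftrightarrow(2)$ and $(1)\Leftrightarrow(3)$ separately; in particular your step $(2)\Rightarrow(3)$ is exactly the paper's $(1)\Rightarrow(3)$ prefaced by a factorization of each $A_r$. Your treatment of $(3)\Rightarrow(1)$ is in fact slightly more careful than the paper's: you correctly flag that the hypothesis only gives an equality of spans, so one must first replace each $D_j$ by an element of $\langle D_1,\ldots,D_k\rangle$ satisfying $PD_jQ^\top = m_1(e_j,X)$ before assembling the decomposition and invoking that a tensor is determined by its first slices --- the paper simply asserts that ``the tensor $X$ can be constructed'' from the given data without making this replacement explicit.
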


\begin{proof}
	Suppose that $\trk(X)\leq R$. Then $X = \sum_{r=1}^R u_r \otimes v_r \otimes w_r$ for some vectors $u_r \in \F^k$, $v_r \in \F^n$, $w_r \in \F^m$ and 
	$$\ss_1(X) = \left\langle  \sum_r u_{jr}v_r \otimes w_r : 1 \leq j \leq k  \right\rangle \subseteq \langle v_r \otimes w_r : 1 \leq r \leq R \rangle.  $$
	Conversely, if $\ss_1(X)$ is contained in the span of $R$ rank 1 matrices $A_r=v_r \otimes w_r$, then for all $1 \leq j \leq k$ there exist $u_{jr} \in \F$ satisfying $m_1(e_j,X) = \sum_{r=1}^R u_{jr} v_r \otimes w_r$. Therefore $X =\sum_{r=1}^R u_r \otimes v_r \otimes w_r$ and $\trk(X)\leq R$.
	
	Again suppose that $X = \sum_{r=1}^R u_r \otimes v_r \otimes w_r$ for some $u_r \in \F^k,v_r \in \F^n,w_r \in \F^m$. 
	Let $$D_j=\text{ diag}(u_{jr},1 \leq r \leq R).$$ So the diagonal elements of $D_j$ are the $j$th coefficents of the $u_r$. Set 
	$$P=(v_{jr}: 1\leq j \leq n, 1 \leq r \leq R), \qquad  Q=(w_{jr}: 1\leq j \leq m, 1 \leq r \leq R).$$ 
	Then $PD_jQ^\top = \sum_{r=1}^R u_{jr} v_r \otimes w_r$ for each $j$ and hence $\ss_1(X) = P\langle D_1,\ldots, D_{k}\rangle Q^\top$. 
	Conversely, given the existence of matrices $P,Q,D_i$ satisfying (\ref{pthree}), the tensor $X$ can be constructed as $\sum_{r=1}^R u_r \otimes v_r \otimes w_r$,
	where $v_r$ is the $r$th column of $P$, $w_r$ is the $r$th column of $Q$ and $u_{jr}$ is the $r$th element of the main diagonal of $D_j$ for each $j$. 
\end{proof}



\begin{example}
The following example, adapted from \cite{LaSh233}, illustrates the preceding definitions and propositions. Consider the tensor
$X =  e_1\otimes (e_1 \otimes e_1 + e_2 \otimes e_2)+ 
e_2\otimes (e_1 \otimes e_2 + e_2 \otimes e_3)$ in $\F^2\otimes\F^2\otimes \F^3$, where $\F$ is any field of characteristic not two. Then
\[
\ss_1(X)= \left\langle \npmatrix{1&0&0\\0&1&0},\npmatrix{0&1&0\\0&0&1}\right\rangle.
\]
Then $X$ can be written as the sum of the three rank one tensors 
\begin{align*}
X_1&=e_1\otimes e_1\otimes(e_1-e_3)\\
X_2&= \frac{1}{2}(e_1+e_2)\otimes (e_1+e_2)\otimes(e_2+ e_3)\\
X_3&= \frac{1}{2}(-e_1+e_2)\otimes (e_1-e_2)\otimes(e_2- e_3),
\end{align*}
which corresponds to the fact that $\ss_1(X)$ is contained in 
\[
\left\langle \npmatrix{1&0&-1\\0&0&0},\npmatrix{0&1&1\\0&1&1},\npmatrix{0&1&-1\\0&-1&1}\right\rangle.
\]
The matrices $P,Q$ are given by
\[
P= \npmatrix{1&1&1\\0&1&-1};\quad\quad Q = \npmatrix{1&0&-1\\0&1&1\\0&1&-1},
\]
and $D_1=\mathrm{diag}(1,1/2,-1/2)$, $D_2=\mathrm{diag}(0,1/2,1/2)$.
\end{example}

It was shown in \cite{np} that computing the tensor rank of a $3$-tensor over a finite field is NP-complete. However we have the following bound on the tensor rank, which was proved by Kruskal; see \cite[Corollary 1]{kruskal}.

\begin{theorem}\label{thm:trkbound}
	Let $X \in \FN$ be $1$-nondegenerate. Then
	$$\mathrm{trk}(X)\geq \dim_1(X)+\min\{\trk(m_1(u,X)) : u \in \F^k\setminus\{0\} \}-1.$$
\end{theorem}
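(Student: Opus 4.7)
The plan is to use a minimal rank decomposition of $X$ and find a carefully chosen vector $u \in \F^k$ that collapses the decomposition down to at most $R-\dim_1(X)+1$ simple terms.

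Since $X$ is $1$-nondegenerate, $\dim_1(X)=k$, so the claim reads $\trk(X) \geq k + t - 1$, where $t := \min\{\trk(m_1(u,X)) : u \in \F^k \setminus \{0\}\}$. Write $R = \trk(X)$ and fix a minimal rank form $X = \sum_{r=1}^R u_r \otimes v_r \otimes w_r$. The first step is to argue that the vectors $u_1,\ldots,u_R$ span all of $\F^k$. This follows from the $1$-nondegeneracy hypothesis: if they spanned a proper subspace, then there would exist $0 \neq u \in \F^k$ with $u \cdot u_r = 0$ for every $r$, yielding $m_1(u,X) = \sum_r (u \cdot u_r)\, v_r \otimes w_r = 0$; but then the slice matrices $m_1(e_1,X),\ldots,m_1(e_k,X)$ would be linearly dependent, contradicting $\dim_1(X)=k$.

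The second step is to extract a basis. Since $u_1,\ldots,u_R$ span $\F^k$, after reindexing we may assume $u_1,\ldots,u_k$ are linearly independent. Pick a vector $u \in \F^k$ dual to this basis in the sense that
$$u \cdot u_i = 0 \quad \text{for } 1 \leq i \leq k-1, \qquad u \cdot u_k \neq 0.$$
Such a $u$ exists by linear independence, and is in particular nonzero. Now evaluate
$$m_1(u,X) = \sum_{r=1}^R (u \cdot u_r)\, v_r \otimes w_r = (u \cdot u_k)\, v_k \otimes w_k + \sum_{r=k+1}^R (u \cdot u_r)\, v_r \otimes w_r,$$
which is a sum of at most $R-k+1$ simple tensors and is nonzero because of the $k$-th term. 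Hence $\trk(m_1(u,X)) \leq R-k+1$, and therefore $t \leq R-k+1$, giving the inequality $R \geq k + t - 1$ as desired.

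The argument is quite short, so the main work is really just recognising the two ingredients: (i) that $1$-nondegeneracy forces the first-coordinate vectors in any minimal rank form to span $\F^k$, and (ii) that one can then pick a linear functional on $\F^k$ which annihilates all but one of a basis chosen from among these vectors, cutting the number of surviving simple terms down from $R$ to $R-k+1$. There is no real obstacle; one only has to be careful that the constructed $u$ is nonzero (guaranteed by $u \cdot u_k \neq 0$) so that it is a legitimate competitor for the minimum defining $t$.
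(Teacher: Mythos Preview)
Your proof is correct. The paper does not supply its own proof of this theorem; it is stated with a reference to Kruskal, and the coding-theoretic reformulation (Corollary~\ref{coro:trb}) is later re-derived via the block-code connection of Theorem~\ref{th:brock}. Your argument is the standard direct one: from a minimal rank form, use $1$-nondegeneracy to see that the first-factor vectors $u_1,\ldots,u_R$ span $\F^k$, extract a basis $u_1,\ldots,u_k$, and choose $u$ orthogonal to $u_1,\ldots,u_{k-1}$ but not to $u_k$ so that $m_1(u,X)$ is exhibited as a sum of at most $R-k+1$ simple tensors.

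One cosmetic point: the assertion that $m_1(u,X)$ is ``nonzero because of the $k$-th term'' is not justified by the decomposition alone, since the terms for $r>k$ could in principle cancel the $k$-th. Nonzeroness does follow from $1$-nondegeneracy together with $u\neq 0$, but in fact you do not need it for the inequality at all; only $u\neq 0$ matters, and you correctly observe that $u\cdot u_k\neq 0$ guarantees this.
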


In the language of rank-metric codes, for  $X \in \FN$ satisfying dim$_1(X)=k$, this inequality is equivalently expressed as
\begin{equation}
\mathrm{trk}(X) \geq k + \drk(\ss_1(X))-1.
\end{equation}


Another operation that is important in the context of tensors, is the so-called {\it contraction} with respect to some indices. Since in this work we will only need one particular contraction, we will not define this concept in general, but only for the following case.

\begin{definition}
Let $k,k',n,m \in \N$, and let 
$$X =\sum_i u_i \otimes v_i \otimes w_i \in \FN, \qquad  Y=\sum_j u_j' \otimes v_j' \otimes w_j' \in \F^{k'\times n \times m}$$ be tensors. We define the {\bf double-dot product} between $X$ and $Y$, as the $2$-tensor (i.e., the matrix) $X:Y \in \Fq^{k \times k'}$ given by
$$X: Y = \sum_{i,j}(v_i\cdot v_j')(w_i \cdot w_j')u_i\otimes u_j'.$$
\end{definition}

In terms of their coordinate tensor representations, if we write $X=(X_{ij\ell})$ and $Y=(Y_{s j \ell})$, then it is straightforward to see that the double-dot product $X:Y$ will have coordinate representation defined by
$$(X:Y)_{is}=\sum_{j,\ell}X_{ij\ell}Y_{s j\ell},  \quad \mbox{ for } 1\leq i \leq k, 1\leq s \leq k'.$$

The definition also extends when one  (or even both) of the tensors is a $2$-tensor, i.e. a matrix, considering $A\in \F^{n\times m}$ as an element in $\F^{1\times n \times m}$. In particular, for two matrices $A, B \in \F^{n\times m}$, we have
$$A:B=\mbox{Tr}(AB^\top),$$
that is, the double-dot product between two matrices corresponds to their trace-product.
Moreover, it is straightforward to prove that, for  $A \in \F^{s\times k}$, $B\in \F^{s\times k'}$,  $X \in \FN$, and $Y\in \F^{k'\times n \times m}$ we have
\begin{gather}
\begin{aligned}\label{eq:colonm1}
m_1(A,X):Y&=A(X:Y), \\
X:m_1(B,Y)&=(X:Y)B^\top.
\end{aligned}
\end{gather}

We now turn to the connection between tensors and rank-metric codes.

\pagecolor{white}
\section{Codes and Tensors}
\label{sec:Tensor}
\label{sec:4}








In the theory of rank-metric codes, a natural representation of the code is with respect to a {\em generating tensor} for it. 
It will become evident that this representation offers greater efficiency in terms of complexity of encoding and storage of the encoder.   

The generator tensor essentially determines an encoding from the information space
$\Fq^k$ to the ambient matrix space $\Fq^{n \times m}$. More specifically, for a given $\Fqk$ code $\C$, an encoder is a an $\fq$-monomorphism 
$$E:\Fq^k :\longrightarrow \Fq^{n \times m}. $$ 
The space of all such encoding maps is {contained in} the space $\mathrm{Hom}_{\Fq}(\Fq^k, \Fq^{n\times m})$, which is an $\Fq$-vector space of dimension $knm$. Moreover, we have that 
$$ \mathrm{Hom}_{\Fq}(\Fq^k, \Fq^{n\times m}) \cong \Fq^{k \times n \times m} $$
as $\Fq$-vector spaces. The isomorphism is explicitly given by:
$$\begin{array}{l}
{\cal E} :\Fq^{k \times n \times m}  \longrightarrow  \mathrm{Hom}_{\Fq}(\Fq^k, \Fq^{n\times m}) 
 : X  \longmapsto  E_X,
\end{array}$$
where 
$$\begin{array}{l}
 E_X: \Fq^k  \longrightarrow \Fq^{n\times m} 
  :g  \longmapsto  m_1(g,X).
\end{array}$$

This yields an analogue of the notion of a generator matrix for rank-metric codes, in the form of a 3-tensor.
\begin{definition}
Let $\C$ be an $\Fqk$ code. A {\bf generator tensor} for code $\C$ is an element $X\in \Fq^{k\times n \times m}$ such that $\ss_1(X)=\C$. 
\end{definition}
Clearly, with respect to this definition, any generator tensor for a code is necessarily $1$-nondegenerate.
The complexity of realizing a code $\C$ as the slice space of a tensor $X$ depends on the tensor rank of $X$ and hence it is of interest to give expressions of generating tensors as minimal sums of simple tensors, and moreover to obtain constructions of codes whose generating tensors have least possible tensor rank.

Let $\mC \subseteq \mat$ be a non-zero code, and let $X_1,X_2$ be generating tensors for $\mC$. By Proposition \ref{prop:trkcharact} we have 
$$\trk(X_1)=\min\{R : \mbox{there are rank 1 matrices } A_1,\ldots,A_R \mbox{ with } \mC \subseteq \langle A_1,\ldots,A_R \rangle\} = \trk(X_2).$$
Therefore the following hold.

\begin{proposition}\label{prop:equaltensor}
 Let $X_1$ and $X_2$ be two generator tensors for the same rank-metric code $\C$. Then 
$$\mathrm{trk}(X_1)=\mathrm{trk}(X_2).$$
Furthermore, if $\mC$ is not the zero code, then this numbers equals the minimum $R>0$ such that~$\mC$ is contained in the span of $R$ rank 1 matrices.
\end{proposition}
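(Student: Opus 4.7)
The plan is to reduce the proposition entirely to the equivalence (1)~$\Leftrightarrow$~(2) of Proposition~\ref{prop:trkcharact}. That proposition tells us that for any tensor $X \in \FN$ and any integer $R > 0$, one has $\trk(X) \leq R$ if and only if $\ss_1(X)$ is contained in the span of $R$ rank-1 matrices in $\Fq^{n\times m}$. Taking the minimum over all such $R$, this yields the identity
$$\trk(X) = \min\{R \geq 0 : \ss_1(X) \subseteq \langle A_1, \ldots, A_R \rangle \text{ with } \rk(A_i) \leq 1\},$$
where for $R=0$ we interpret the span as $\{0\}$ (covering the zero tensor).

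The first claim is then immediate: since $X_1$ and $X_2$ are both generator tensors for $\C$, by definition $\ss_1(X_1) = \C = \ss_1(X_2)$, so the right-hand side of the identity above depends only on $\C$ and not on the chosen generator tensor. Hence $\trk(X_1) = \trk(X_2)$.

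For the \emph{furthermore} statement, assume $\C \neq 0$. Then any generator tensor for $\C$ is nonzero, so the minimum above is strictly positive, and the characterization reads exactly as stated: the common value $\trk(X_1) = \trk(X_2)$ equals the smallest $R > 0$ for which $\C$ is contained in the span of $R$ rank-1 matrices. The hypothesis $\C \neq 0$ is what rules out the degenerate case $R = 0$.

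There is no real obstacle to overcome here; the content of the proposition is already isolated in Proposition~\ref{prop:trkcharact}. The only point requiring mild care is the $\C = 0$ edge case, which is handled by observing that the zero code admits only the zero tensor as a generator tensor, so both sides of $\trk(X_1) = \trk(X_2)$ vanish.
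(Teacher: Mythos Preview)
Your proof is correct and follows essentially the same approach as the paper: both reduce the claim directly to the equivalence (1)~$\Leftrightarrow$~(2) of Proposition~\ref{prop:trkcharact}, observing that since $\ss_1(X_1)=\mC=\ss_1(X_2)$, the tensor rank of each $X_i$ coincides with the minimal $R$ for which $\mC$ lies in the span of $R$ rank-1 matrices. Your treatment of the $\mC=0$ edge case is slightly more explicit than the paper's, but the argument is otherwise identical.
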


\begin{definition}
Let $\C$ be an $\Fqk$ code. The {\bf tensor rank} of $\C$, denoted by $\trk(\C)$, is defined to be the tensor rank of any generator tensor of $\C$.
We say that $\C$ is {{\bf minimum tensor rank}, or {\bf MTR} in short,}  if it meets the bound of Theorem \ref{thm:trkbound}, that is, if 
$$ \trk(\C) = k + \drk(\C) -1.$$
\end{definition}

If $\C$ and $\C'$ are a pair of codes
satisfying $\C'=\varphi(\C)$ for an isometry $\varphi$, then any $R$-base $\A$ for $\C$ yields the $R$-base $\varphi(\A)$ for $\C'$.
Therefore, Proposition \ref{prop:equaltensor} also implies that the tensor rank is invariant under code equivalence.

\begin{proposition}\label{prop:tensequivalent}
 Let $\C,\C'\in \mat$ be equivalent codes. Then $\trk(\C)=\trk(\C')$.
\end{proposition}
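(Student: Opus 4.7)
The plan is to use the classification of isometries recalled after the definition of equivalence: either $\C' = A\C B$ for some $A \in \GL(n,q)$, $B \in \GL(m,q)$, or (when $m=n$) $\C' = A\C^\top B$. In each case, I will transport a minimal generator tensor of $\C$ to a generator tensor of $\C'$ of no greater tensor rank, and then conclude by the symmetry of the relation.

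First I would treat the case $\C' = A\C B$. Pick a generator tensor $X = \sum_{r=1}^R u_r \otimes v_r \otimes w_r$ for $\C$ in minimal rank form, so $R = \trk(X) = \trk(\C)$. Set
\[
X' := m_3(B^\top, m_2(A,X)) \;=\; \sum_{r=1}^R u_r \otimes (Av_r) \otimes (B^\top w_r).
\]
A direct computation using the definition of $\ss_1$ and the identification $v \otimes w \leftrightarrow vw^\top$ gives
\[
m_1(e_j, X') = \sum_{r=1}^R u_{jr}(Av_r)(B^\top w_r)^\top = A\Bigl(\sum_{r=1}^R u_{jr}\, v_r w_r^\top\Bigr) B = A\, m_1(e_j,X)\, B,
\]
so that $\ss_1(X') = A\,\ss_1(X)\,B = A\C B = \C'$. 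Hence $X'$ is a generator tensor for $\C'$, and applying \eqref{eq:trkmi} twice yields $\trk(\C') \le \trk(X') \le R = \trk(\C)$. Because $A$ and $B$ are invertible, we also have $\C = A^{-1}\C' B^{-1}$, and the symmetric argument gives the reverse inequality.

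For the second case, with $m = n$ and $\C' = A\C^\top B$, I would first handle transposition separately. If $X = \sum_r u_r \otimes v_r \otimes w_r$ generates $\C$, then $\widetilde{X} := \sum_r u_r \otimes w_r \otimes v_r$ satisfies
\[
m_1(e_j, \widetilde{X}) = \sum_r u_{jr}\, w_r v_r^\top = \Bigl(\sum_r u_{jr}\, v_r w_r^\top\Bigr)^{\!\top} = m_1(e_j,X)^\top,
\]
so $\ss_1(\widetilde{X}) = \C^\top$. Since tensor rank is invariant under permutation of the three factor spaces (noted in Section~\ref{sec:3}), $\trk(\widetilde{X}) = \trk(X) = \trk(\C)$, giving $\trk(\C^\top) \le \trk(\C)$, and the reverse by symmetry. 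Combining this with the previous case applied to $\C^\top$ and $\C' = A\C^\top B$ completes the proof.

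There is no real obstacle here beyond bookkeeping: the only care needed is to track the correspondence $v \otimes w \leftrightarrow vw^\top$ so that multiplication on the left and right by invertible matrices is realized by the tensor operations $m_2$ and $m_3$, and to remember that invariance under transposition reduces to the already-cited invariance of tensor rank under permutations of the factors.
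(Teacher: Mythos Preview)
Your argument is correct. The paper takes a shorter route: it invokes Proposition~\ref{prop:equaltensor}, which characterizes $\trk(\C)$ as the least $R$ for which $\C$ lies in the span of $R$ rank-one matrices, and simply observes that any linear isometry $\varphi$ sends rank-one matrices to rank-one matrices, so an $R$-basis $\A$ for $\C$ is transported to an $R$-basis $\varphi(\A)$ for $\C'$. Your proof instead works directly at the level of generator tensors, using the explicit classification of isometries together with the operations $m_2,m_3$, the inequality~\eqref{eq:trkmi}, and the permutation invariance of tensor rank to handle the transpose case. Under the hood the two arguments are the same---your $X'=\sum_r u_r\otimes(Av_r)\otimes(B^\top w_r)$ is exactly the $R$-basis $\{A(v_rw_r^\top)B\}$ repackaged as a tensor---but the paper's phrasing avoids the case split and the explicit bookkeeping by appealing once to the rank-one characterization, while yours has the virtue of being entirely self-contained within the tensor formalism of Section~\ref{sec:3}.
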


%

Let $\C$ be an $\Fqk$ code with generator tensor $X\in \FN$. By the definition of a generator tensor, we have that $\dim_1(X)=\dim_{\Fq}(\C)$. However, $\dim_2(X)$ and $\dim_3(X)$ also have an important role, as explained by the following result.

\begin{proposition}
	Let $\C$ be an $\Fqk$  code with generator tensor $X\in \Fq^{k\times n \times m}$. Then
	$$\dim_2(X)=\dim(\csup(\C)), \qquad \dim_3(X)=\dim(\rsup(\C)).$$
\end{proposition}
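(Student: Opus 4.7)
The plan is to establish both equalities by ``unfolding'' the 3-tensor $X$ into a single matrix and then invoking the equality of row rank and column rank. I give the argument only for $\dim_2(X) = \dim(\csup(\mC))$; the statement for $\dim_3(X)$ follows by the same reasoning after swapping the roles of the second and third indices.

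Write $X$ in coordinate form as $X = (X_{ij\ell})$ for $(i,j,\ell) \in [k]\times[n]\times[m]$. Since $X$ is a generator tensor of $\mC$, the code $\mC$ is spanned by the matrices $A_i := m_1(e_i,X) = (X_{ij\ell})_{j,\ell} \in \Fq^{n\times m}$, $i \in [k]$. For each pair $(i,\ell)$, let $c_{i,\ell} := (X_{ij\ell})_{j=1}^{n} \in \Fq^n$; these are precisely the columns of the $A_i$'s, so
$$\csup(\mC) = \langle c_{i,\ell} : i \in [k],\ \ell \in [m]\rangle.$$
On the other hand, $\ss_2(X)$ is spanned by the matrices $B_j := m_2(e_j,X) \in \Fq^{k\times m}$, and a direct computation from the definition of $m_2$ shows that the $(i,\ell)$-entry of $B_j$ is $X_{ij\ell}$.

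Next, define a matrix $M \in \Fq^{n \times km}$ by $M_{j,(i,\ell)} := X_{ij\ell}$, with rows indexed by $j \in [n]$ and columns indexed by the pairs $(i,\ell) \in [k]\times[m]$. By construction, the $j$-th row of $M$ is the flattening of $B_j$, so the row span of $M$ has dimension $\dim_2(X)$; the $(i,\ell)$-th column of $M$ is $c_{i,\ell}$, so the column span of $M$ equals $\csup(\mC)$ and has dimension $\dim(\csup(\mC))$. Since row rank equals column rank, this gives $\dim_2(X) = \dim(\csup(\mC))$.

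There is no real obstacle in this argument: once the right unfolding of $X$ is chosen, the proof reduces to a bookkeeping identity and one application of the equality of row and column rank. The only care needed is to set up $M$ so that both quantities being compared appear as rank statements about the same matrix; the symmetric unfolding $M' \in \Fq^{m \times kn}$ with $M'_{\ell,(i,j)} := X_{ij\ell}$ handles the row-support case in exactly the same way.
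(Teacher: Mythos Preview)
Your proof is correct. Both you and the paper compute, in effect, the rank of the mode-2 unfolding of $X$, but the executions differ. The paper fixes the particular generator tensor $Y = \sum_i e_i \otimes A_i$, considers the linear map $y \mapsto m_2(y,Y)$ from $\Fq^n$ to $\Fq^{k\times m}$, and observes that $m_2(y,Y) = \sum_i e_i \otimes (yA_i)$ vanishes precisely when $y$ is orthogonal to every column space $\cs(A_i)$; thus the kernel is $\csup(\C)^\perp$, and rank--nullity yields $\dim_2(Y) = \dim(\csup(\C))$. Your argument instead builds the unfolded matrix $M \in \Fq^{n\times km}$ and reads off the two quantities directly as its row rank and its column rank. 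Your route avoids the passage through orthogonal complements and works uniformly with the given $X$ rather than first choosing a canonical form; the paper's route makes the link to the individual column spaces $\cs(A_i)$ slightly more transparent. Both arguments are short and amount to the same linear-algebra identity viewed from two sides.
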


\begin{proof}
	Let $A_1,\ldots,A_k$ be a basis of $\C$. Then $Y=\sum_{i=1}^k e_i \otimes A_i$ is a generator tensor for $\C$.
	For any $y \in \Fq^n$, we have that $m_2(y,Y)=\sum_{i=1}^k e_i \otimes (yA_i)=0$
	if and only if $yA_i=0$ for each $i=1,\ldots, k$. This is true if and only if 
	$$y\in \bigcap_{i=1}^k \cs(A_i)^\perp=\left(\sum_{i=k} \cs(A_i)\right)^\perp=\csup(\C)^\perp.$$ 
	In particular $\csup(\C)^\perp$ is the kernel of the map 
	$$m_2(\cdot,Y): \Fq^n \longrightarrow  \Fq^{k\times m}: y \mapsto m_2(y,Y),$$ 
	hence $\dim(\csup(\C))=\dim_2(Y)= \dim \ss_2(\C)=\dim_2(X)$.
\end{proof}

\begin{remark}
 As a consequence, the property of a rank-metric code $\C$ of being nondegenerate can be read from  its generator tensor. Indeed, $\C$ is  nondegenerate if and only if any of its generator tensors is both $2$-nondegenerate and $3$-nondegenerate.
\end{remark}

\begin{remark}
We note that some of the results of this section have been previously considered in the case of $m=n=k$, due to the fact that MRD codes in this situation are in one-to-one correspondence with {\it finite semifields}; that is, nonassociative division algebras. Indeed, tensors and rank-metric codes in this case correspond to algebras which are not necessarily associative. Knuth \cite{Knuth} considered the cubical array of a semifield, which is precisely the co-ordinate tensor introduced in Section 3. The explicit tensor correspondence was outlined in \cite{Liebler} and developed in \cite{Lavrauw}, where the tensor rank was proposed as an interesting invariant of a finite semifield, or equivalently its corresponding slice space.
\end{remark}

\subsection{A Connection with Linear Block Codes}

In \cite{brockett}, the authors draw a connection between tensor rank and linear block codes. See also~\cite[Chapter 18]{algcplex} for an exposition.
This connection provides a lower bound on the tensor rank in terms of the length of a block code hence one can apply coding theoretic bounds to get an estimate for $\trk(X)$.
First, we define the following number from coding theory.
\begin{definition}
	For positive integers $k,d$ we define 
	$$N_q(k,d):= \min \{ N \mid \mbox{there exists an } \fq\text{-}[N,k,d] \text{ code}\}.$$ 
\end{definition}

We will associate a linear block code with a rank-metric codes as follows. 
Let $\C$ be an $\Fqk$ code with tensor rank $R$. 
By Proposition \ref{prop:trkcharact}, we can define the following. 
\begin{definition}
	Let $\C$ be an $\Fqk$ code with $k \ge 1$ and tensor rank $R$.
	A set $\mathcal A=\{A_1,\ldots, A_R\} \subset \F_q^{n \times m}$ of rank 1 matrices 
	such that $\C \subseteq \langle \mathcal A \rangle$ is called an $R${\bf -basis} for $\C$.  
\end{definition} 

Let $\mathcal A=\{A_1,\ldots, A_R\} \subset \F_q^{n \times m}$ be a linearly independent set of matrices of rank 1. 
We define an $\Fq$-linear isomorphism (c.f. \cite[Theorem 18.4]{algcplex}): 
$$\begin{array}{l}
\psi_{\mathcal A}: \langle \mathcal A \rangle  \longrightarrow \Fq^R
:\sum\limits_{i=1}^R \mu_i A_i \longmapsto  \sum\limits_{i=1}^R \mu_i e_i.
\end{array}$$

\begin{definition}
	Let $\C$ be an $\Fqk$ code with tensor rank $R$ and let $\A$ be
	an $R$-basis for $\C$.
	We define the linear block code $C_{\mathcal A}$ to be the image of $\C$ under $\psi_{\mathcal A}$:
	$$C_{\mathcal A}:=\psi_{\mathcal A}(\C).$$
\end{definition}

	For a generator tensor $X=\sum_{r=1}^R u_r \otimes v_r \otimes w_r$ of an $\Fqk$ code $\mC$, any element $M$ of $\C$ can be expressed as
	$$M=m_1(a,X)=\sum_{r=1}^R (a \cdot u_r)(v_r \otimes w_r) $$ 
	for some $a \in \F_q^k$. For $A_r= v_r \otimes w_r$, the image of $M$ element under $\psi_{\mathcal A}$ is $(a \cdot u_r : 1 \leq r \leq R)$. In other words, we have that $C_\A$ is simply the $\F_q$-$[R,k]$ block code with $k \times R$ generator matrix $(u_r : 1 \leq r \leq R)$.

\begin{theorem}[\cite{brockett}]\label{th:brock}
Let $\C$ be an $\F_q$-$[n \times m,k,d]$ code with tensor rank $R$. Let $\mathcal A=\{A_1,...,A_R\}$ be an $R$-basis for $\C$.
Then the following hold.
\begin{enumerate}
\item For every $M \in \C$, $\rk(M)\leq \mathrm{wt}_H(\psi_{\mathcal A}(M))$.
\item $C_{\mathcal A}$ is an $\Fq$-$[R,k,\geq d]$ code. 
\item $\trk(\mC) \geq N_q(k,d).$
\end{enumerate}
\end{theorem}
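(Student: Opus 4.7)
The plan is to prove (1) by a direct rank-subadditivity argument, and then obtain (2) and (3) as straightforward consequences, via the isomorphism $\psi_{\mathcal A}$.

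First, a small sanity check: since $R = \trk(\C)$, the matrices in an $R$-basis $\mathcal A = \{A_1,\ldots,A_R\}$ must be $\F_q$-linearly independent. Indeed, if some nontrivial $\F_q$-linear combination of the $A_i$ vanished, then we could delete one of them and still have $\C \subseteq \langle \mathcal A \setminus \{A_j\}\rangle$, contradicting the minimality characterization of $\trk(\C)$ in Proposition \ref{prop:equaltensor}. So $\psi_{\mathcal A}$ is well defined.

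For (1), take $M \in \C$ and write $M = \sum_{i=1}^R \mu_i A_i$ with $\mu_i \in \F_q$, uniquely determined by linear independence of $\mathcal A$. By definition $\psi_{\mathcal A}(M) = \sum_{i=1}^R \mu_i e_i$, so
$$\mathrm{wt}_H(\psi_{\mathcal A}(M)) = |\{i : \mu_i \neq 0\}|.$$
Discarding zero summands, $M = \sum_{i : \mu_i \neq 0} \mu_i A_i$, and subadditivity of matrix rank gives
$$\rk(M) \leq \sum_{i : \mu_i \neq 0} \rk(\mu_i A_i) \leq |\{i : \mu_i \neq 0\}| = \mathrm{wt}_H(\psi_{\mathcal A}(M)),$$
where the last inequality uses $\rk(A_i) = 1$ for all $i$.

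For (2), $C_{\mathcal A} = \psi_{\mathcal A}(\C)$ is an $\F_q$-linear subspace of $\F_q^R$. Since $\psi_{\mathcal A}$ restricted to $\langle \mathcal A\rangle$ is an $\F_q$-isomorphism and $\C \subseteq \langle \mathcal A \rangle$, it follows that $\dim_{\F_q}(C_{\mathcal A}) = \dim_{\F_q}(\C) = k$. For any nonzero $M \in \C$, we have $\rk(M) \geq d$; part (1) then forces $\mathrm{wt}_H(\psi_{\mathcal A}(M)) \geq d$. Since $\psi_{\mathcal A}$ is injective on $\langle \mathcal A\rangle$, distinct elements of $\C$ yield distinct codewords in $C_{\mathcal A}$, so the minimum Hamming distance of $C_{\mathcal A}$ is at least $d$. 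Finally, (3) is immediate: the existence of the $\F_q$-$[R,k,\geq d]$ code $C_{\mathcal A}$ means $N_q(k,d) \leq R = \trk(\C)$.

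There is no substantive obstacle in this proof; the content is entirely the rank-subadditivity step in (1). The only point that might trip a reader is the implicit linear independence of an $R$-basis when $R = \trk(\C)$, which I would highlight at the start to make the use of $\psi_{\mathcal A}$ unambiguous.
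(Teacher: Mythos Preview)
Your proof is correct and follows essentially the same approach as the paper. The paper phrases the key step in (1) as ``any expression of $M$ as a sum of rank one matrices requires at least $\rk(M)$ such matrices,'' which is exactly your rank-subadditivity argument; parts (2) and (3) are then declared to follow immediately, just as you derive them.
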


\begin{proof}
Let $M \in \C$, and let $s=\rk(M)$. Then any expression of $M$ as sum of rank one matrices requires at least $s$ such matrices in the sum. 
In particular, $M=\sum_{i=1}^R \lambda_i A_i$ for some $\lambda_i \in \Fq$ with at least $s$ of the values $\lambda_i$ non-zero. 
Then clearly $s\leq w_H(\psi_\A(M))\leq R$, proving the first statement. The next two statements follow immediately.
\end{proof}


\begin{definition}
	Let $\mC$ be an $\fq$-$[n \times m,k,d]$  code. We say that $\mC$ is {\bf tensor rank extremal} if $\trk(\mC) = N_q(k,d)$.
\end{definition}

Indeed any lower bound on $N_q(k,d)$ provides a lower bound on the tensor rank so the connection to linear block codes can be exploited. In particular, if $\C$ meets the tensor-rank bound, that is, if $R=k+d-1$, then code $C_{\mathcal A}$  is an $\Fq$-$[R,k,R-k+1]$ code and is thus MDS.


For any pair of full-rank matrices $V \in \F^{n \times R}$ and $W \in \F^{m \times R}$, define the $\Fq$-linear map
$$\begin{array}{l}
  \phi_{V,W}: \Fq^R  \longrightarrow  \Fq^{n\times m} : x  \longmapsto  V\diag(x)W^\top,
\end{array}$$
Let $v_r$, $w_r$ denote the $r$th columns of $V$ and $W$, respectively and let $\A=\{ A_r:  1\leq r \leq R\}$, with $A_r=v_r \otimes w_r$ for each $r$. 
Then
\begin{equation}\label{eq:phipsi}
      \phi_{V,W}(\psi_\A (M)) = M,
\end{equation}
for each $M$ in the span of $\A$. This is easy to see, for if $M= \sum_{r=1}^R \lambda_r A_r $ for some $\lambda_r \in \F_q$, then we have
\begin{eqnarray*}
\phi_{V,W}(\psi_\A (M)) &=& \phi_{V,W}\left(\sum_{r=1}^R \lambda_r e_r\right) \\
&=& \phi_{V,W} (\lambda) \\
&=& V\diag(\lambda) W^\top \\
&=& \sum_{r=1}^R \lambda_r v_r \otimes w_r \\
&=& \sum_{r=1}^R \lambda_r A_r = M.
\end{eqnarray*}
This yields the following result.

\begin{lemma}\label{lem:phipsi}
	Let $\C$ be an $\Fqk$ code. Suppose that $\C = V\langle {\mathcal D} \rangle W^\top$ for some set
	${\cal D} = \{D_1,\ldots,D_k\}$ of diagonal matrices and matrices $V\in \F_q^{n \times R}$ and $W \in \F_q^{m \times R}$ of ranks $n,m$ respectively.
	Let $v_r$, $w_r$ denote the $r$th columns of $V$ and $W$, respectively, and define $\A=\{ A_r:  1\leq r \leq R\}$ such that $A_r=v_r \otimes w_r$ for each $r$.
	Then $\phi_{V,W}(\psi_{\mathcal A}(\C)) = \C$.
\end{lemma}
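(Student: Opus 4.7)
The plan is to observe that essentially all the work is already packaged in equation (\ref{eq:phipsi}), and to check that the hypotheses of the lemma fit the setup in which (\ref{eq:phipsi}) was verified. So the proof should be quite short.

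First I would show that $\C \subseteq \langle \A \rangle$. By assumption, $\C$ is spanned by the matrices $VD_jW^\top$ for $j=1,\ldots,k$. Writing $D_j=\diag(d_{jr}:1\leq r\leq R)$ and expanding gives
\[
VD_jW^\top \;=\; \sum_{r=1}^R d_{jr}\, v_r w_r^\top \;=\; \sum_{r=1}^R d_{jr}\, A_r,
\]
which lies in $\langle \A\rangle$. Hence $\C\subseteq \langle\A\rangle$.

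Next, for any $M\in\C$ I would invoke (\ref{eq:phipsi}), which was established for arbitrary elements of $\langle \A\rangle$ by writing $M=\sum_r\lambda_r A_r$ and computing $\phi_{V,W}(\psi_{\A}(M)) = \phi_{V,W}(\sum_r \lambda_r e_r)=V\diag(\lambda)W^\top = \sum_r \lambda_r A_r = M$. This directly gives $\phi_{V,W}(\psi_{\A}(\C)) \supseteq \C$, and since every element of $\phi_{V,W}(\psi_{\A}(\C))$ equals some $M\in \C$ by the same identity, the reverse inclusion also holds. Concatenating yields the desired equality $\phi_{V,W}(\psi_{\A}(\C))=\C$.

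The only subtle point I expect to have to address is that $\psi_{\A}$ is defined in the paper only when $\A$ is a linearly independent set of rank-$1$ matrices, whereas the lemma's hypotheses on $V,W$ do not literally force the $A_r=v_r\otimes w_r$ to be independent. The fix is mild: either I verify independence in the cases where the lemma is invoked (it will hold in the applications because $R=\trk(\C)$ is minimal, so any dependence among the $A_r$ would produce a shorter expression contradicting minimality), or I simply interpret $\psi_{\A}(M)$ as any preimage of $M$ under the surjection $(\mu_r)\mapsto \sum_r \mu_r A_r$, which is all that equation (\ref{eq:phipsi}) actually uses. In either reading, the proof of the lemma reduces to the two steps above.
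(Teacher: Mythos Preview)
Your proof is correct and essentially identical to the paper's: the paper derives the lemma immediately from equation~(\ref{eq:phipsi}), which is precisely what you do after first checking that $\C\subseteq\langle\A\rangle$. Your observation about the linear independence of $\A$ is a legitimate subtlety that the paper glosses over; your suggested fixes are both reasonable.
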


In the next section, we shall be concerned with tensor rank extremal codes (those meeting the bound of Theorem \ref{th:brock}) and in particular with constructions of codes meeting Kruskal's tensor rank bound of (Theorem \ref{thm:trkbound}). One approach will be to view a rank-metric code $\C$ in $\F^{n \times m}$ as the image of an $\Fq$-linear block code under $\phi_{V,W}$.
Then 
$$\phi_{V,W}^{-1}(\C):=\left\{ c \in \Fq^R : V\diag(c)W^\top \in \C \right\},$$
is an $\Fq$-$[R,k]$ code $C$ and in fact we have $C=\psi_\A(\C)$ where $\A=\{ A_r:  1\leq r \leq R\}$ such that $A_r=v_r \otimes w_r$ for each $r$.

We therefore have, using (\ref{eq:phipsi}) and/or Lemma \ref{lem:phipsi}, the following rewriting of Theorem \ref{th:brock}.

\begin{corollary}\label{lem:HammCD}
Let $\C\subseteq \F_q^{n\times m}$ be a rank-metric code of dimension $k$, minimum distance~$d$ and tensor rank $R$. 
Let $\mathcal D = \{D_1,\ldots,D_k\}$ be a $k$-set of  $R \times R$ diagonal matrices and let $\C = V\langle \mathcal D \rangle W^\top$ for matrices $V,W \in \F_q^{n \times R}, \F_q^{m \times R}$ of ranks $n,m$ respectively. The following hold

\begin{enumerate}
\item For every $M \in \C$, $\rk(M)\leq \mathrm{wt}_H(\phi_{V,W}^{-1}(M))$.
\item $\phi_{V,W}^{-1}(\C)$ is an $[R,k,\geq d]$ code. 
\item If $\C$ is tensor rank extremal, then the code $\phi_{V,W}^{-1}(\C)$ is an $\fq$-$[R,k,d]$ code of length $N_q(k,d)$.
In particular, if $\mC$ is MTR then the code $\phi_{V,W}^{-1}(\C)$ is an MDS code.
\end{enumerate}
\end{corollary}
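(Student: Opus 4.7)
My approach is to recognise the corollary as a reformulation of Theorem \ref{th:brock}, translating between the notations $\psi_\A$ and $\phi_{V,W}$. First I would let $v_r, w_r$ denote the $r$th columns of $V$ and $W$ respectively, and set $A_r := v_r \otimes w_r$ together with $\A := \{A_1,\ldots,A_R\}$. The crucial setup step is to verify that $\A$ is an $R$-basis for $\C$ in the sense defined just before Theorem \ref{th:brock}. The inclusion $\C \subseteq \langle \A \rangle$ is immediate from the hypothesis $\C = V\langle \mathcal{D} \rangle W^\top$, since each generator $V D_i W^\top$ is a linear combination of the $A_r$. That the $A_r$ are nonzero (so each has rank exactly $1$) and linearly independent follows from the hypothesis $\trk(\C) = R$ together with Proposition \ref{prop:equaltensor}: any such degeneracy would exhibit $\C$ inside the span of strictly fewer than $R$ rank-$1$ matrices, contradicting the minimality in the definition of tensor rank.

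Having established that $\A$ is an $R$-basis, I would invoke equation (\ref{eq:phipsi}), giving $\phi_{V,W}(\psi_\A(M)) = M$ for every $M \in \langle \A \rangle$. Because $\A$ spans an $R$-dimensional subspace of $\mat$ and the linear map $\phi_{V,W}:\Fq^R \to \mat$ surjects onto this subspace, rank-nullity on $\Fq^R$ forces $\phi_{V,W}$ to be injective. Hence, for each $M \in \C$, the vector $\phi_{V,W}^{-1}(M)$ is well-defined and coincides with $\psi_\A(M)$, and consequently $\phi_{V,W}^{-1}(\C) = \psi_\A(\C) = C_\A$. Parts (1) and (2) of the corollary then transfer directly from the corresponding parts of Theorem \ref{th:brock}.

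For part (3), tensor rank extremality of $\C$ is by definition the equality $R = N_q(k,d)$, so $\phi_{V,W}^{-1}(\C)$ is a code of length $N_q(k,d)$, dimension $k$ and minimum distance at least $d$ by parts (1) and (2). For the MTR specialisation, the assumption $R = k + d - 1$ together with the classical Singleton bound forces the minimum Hamming distance of $\phi_{V,W}^{-1}(\C)$ to be at most $R - k + 1 = d$, hence exactly $d$; this makes $\phi_{V,W}^{-1}(\C)$ an $\Fq$-$[k+d-1,k,d]$ MDS code.

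I do not expect a substantive obstacle here: the whole argument is bookkeeping around the identification $\phi_{V,W}^{-1} = \psi_\A$ on $\C$, followed by an application of Theorem \ref{th:brock} and the Singleton bound. The only step requiring genuine care is the verification that $\A$ is an $R$-basis, since this is where both hypotheses (the factorisation through diagonal matrices and the equality $\trk(\C) = R$) are actually used.
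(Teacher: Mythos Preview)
Your proposal is correct and follows exactly the approach the paper intends: the paper presents this corollary as ``a rewriting of Theorem \ref{th:brock}'' obtained via equation (\ref{eq:phipsi}) and Lemma \ref{lem:phipsi}, and you carry out precisely that translation, identifying $\phi_{V,W}^{-1}$ with $\psi_{\A}$ on $\C$ and then invoking Theorem \ref{th:brock}. Your explicit verification that $\A$ is an $R$-basis (using $\trk(\C)=R$ to rule out linear dependence or zero columns) is a detail the paper leaves implicit, so your write-up is if anything more careful than the original.
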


We have also obtained a new proof of the tensor-rank bound.

\begin{corollary}[Tensor-rank bound] \label{coro:trb}
  Let $\C\subseteq \Fq^{n\times m}$ be a rank-metric code. Then
\begin{equation}\label{eq:tensor}\trk(\C) \geq \dim(\C) + \drk(\C)-1.\end{equation}
\end{corollary}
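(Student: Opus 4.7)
The plan is to deduce the tensor-rank bound as an immediate consequence of the block-code constructions developed in this section, together with the classical (Hamming-metric) Singleton bound, thereby avoiding Kruskal's original argument.

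First, I would assume $\C$ is non-zero (otherwise there is nothing to prove) and set $k:=\dim(\C)$, $d:=\drk(\C)$, and $R:=\trk(\C)$. I would then invoke Proposition \ref{prop:trkcharact}(\ref{pthree}) to fix matrices $P\in \F_q^{n\times R}$, $Q\in \F_q^{m\times R}$, and diagonal matrices $D_1,\ldots,D_k\in\F_q^{R\times R}$ realising $\C = P\langle D_1,\ldots,D_k\rangle Q^{\top}$. A small preliminary step is to arrange that $P$ and $Q$ have full row rank $n$ and $m$ respectively; if they do not, one can shrink the ambient matrix space by restricting to $\csup(\C)$ and $\rsup(\C)$, which leaves both $k$, $d$, and $\trk(\C)$ unchanged. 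With $V:=P$ and $W:=Q$ I am now in the hypotheses of Corollary \ref{lem:HammCD}.

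Next, applying Corollary \ref{lem:HammCD}(2) yields that $\phi_{V,W}^{-1}(\C)$ is an $\F_q$-$[R,k,\geq d]$ linear block code: the length equals $R=\trk(\C)$, the dimension is $k$ since $\phi_{V,W}$ restricted to this preimage is $\F_q$-linear and injective on the $k$-dimensional $\C$, and the minimum Hamming distance is at least $d$ because Corollary \ref{lem:HammCD}(1) bounds the rank of any $M\in\C$ by the Hamming weight of its $\phi_{V,W}^{-1}$-preimage. Alternatively, one can simply cite Theorem \ref{th:brock}(3), which gives $\trk(\C)\geq N_q(k,d)$ directly.

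Finally, the classical Singleton bound applied to the linear block code $\phi_{V,W}^{-1}(\C)$ yields $R\geq k+d-1$, which is precisely the claimed inequality $\trk(\C)\geq \dim(\C)+\drk(\C)-1$. I do not anticipate a genuine obstacle here: the only mildly delicate point is the reduction to the nondegenerate case to meet the rank hypothesis on $V,W$ in Lemma \ref{lem:phipsi}/Corollary \ref{lem:HammCD}, and this is truly routine since $\C$ sits inside $\csup(\C)\otimes \rsup(\C)$ and its tensor rank, dimension, and minimum rank distance are computed identically whether one views it inside $\Fq^{n\times m}$ or inside the smaller ambient space. Everything else is a direct quotation of results already proved earlier in the section.
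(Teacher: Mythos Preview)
Your proposal is correct and matches the paper's intended argument: the corollary is stated immediately after Theorem~\ref{th:brock} and Corollary~\ref{lem:HammCD} precisely because it follows by applying the classical Singleton bound to the $\Fq$-$[R,k,\ge d]$ block code produced there. Your extra care about reducing to the nondegenerate case so that $V,W$ have full rank is fine but unnecessary if you go through Theorem~\ref{th:brock} and the map $\psi_{\mathcal A}$ directly, since that route does not require $V,W$ to have full rank.
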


\subsection{Complexity}\label{sec:comp}

We have demonstrated how the encoding map from $\fq^k$ to the ambient space $\fq^{n \times m}$ is represented for a rank metric code by a 3-tensor, namely its generator tensor.
Let $X = \sum_{r=1}^R u_r \otimes v_r \otimes w_r$ be a generator tensor for an $\fq$-$[n\times m,k,d]$ code $\mC$ of tensor rank~$R$.  
The message $a \in \fq^k$ is encoded via $$a \mapsto m_1(a,X) = \sum_{r=1}^R (x\cdot u_r) v_r \otimes w_r = V\diag(xU) W^T,$$ 
where $U=(u_r: 1 \leq r \leq R)$, $V=(v_r: 1 \leq r \leq R)$ and $(w_r: 1 \leq r \leq R)$.
We say that $X$ is in standard form if $U=[I_k|U']$, $V=[I_n |V']$ and $W=[I_m|W']$ for matrices $U',V',W'$ of the required sizes.

$X$ has storage complexity $R(k+n+m)$ in the general case and requires storing up to $R(k+n+m)-k^2-n^2-m^2$ symbols in $\fq$ if $X$ is in standard from. 
Note that the expression of the codeword $c=m_1(a,X)$ as an $\fq$-linear combination of $r$ rank one matrices $v_r\otimes w_r$ is unique and as we outlined before, $\mC$ and $C_{\A}$ are isomorphic. Thus is is sufficient to compute $(c_r)=xU$ in order to represent elements of $\C$, once the generator tensor $X$ is known.
The tensor encoding therefore requires $kR$ multiplications and $(k-1)R$ additions over $\fq$ for arbitrary $U$ of rank $k$ and $k(R-k)$ multiplications and $(k-1)(R-k)$ additions if $U$ is in standard form. 

Of course, being an $\fq$-space, we could also choose to use a generator matrix to represent the encoding map. 
This can be achieved by representing each element
of $\fq^{n \times m}$ as a vector of length $\fq^{nm}$ by the obvious $\fq$-isomorphism  
$$ (M_{ij}) \mapsto (M_{11}\cdots M_{1n}|\cdots |M_{m1}\cdots M_{mn}).$$
Then choose a $k \times nm$ generator matrix $G$ as the encoder. The storage complexity of $G$ is $knm$ and if $G$ is in systematic form it requires $k(nm-k)$ symbols in $\fq$. The encoding complexity of the computation $x \mapsto xG$ for $G$ in standard form then requires $k(nm-k)$ multiplications and $(k-1)(nm-k)$ additions.
We remark that the $k \times nm$ matrix $G$ can simply be obtained from the coordinate tensor representation of $X$ via $G_{it}:=X_{ij\ell}$ where $t=(i-1)j+\ell$ for $1\leq j \leq n$, $1 \leq \ell \leq m$.

We summarize these observations in Table \ref{onlytable}.

\begin{table}[h!]
	\begin{tabular}{|l|c|c|}
		\hline
				& $k\times nm$ Generator Matrix & $k \times n \times m$ Generator Tensor \\
		\hline                         
		Storage                  & $k(mn-k)$                     & $R(k+n+m)-k^2-n^2-m^2$ \\
		\hline
		Encoding Additions       & $(k-1)(nm-k)$                 & $(k-1)(R-k)$  \\
		\hline
		Encoding Multiplications & $k(nm-k)$                     & $k(R-k)$ \\
		\hline
	\end{tabular}
\caption{Complexities}
\label{onlytable}
\end{table}

Since $R\leq nm$, the generator tensor approach in most cases offers complexity lower than that required by the generator matrix encoder. The number of symbols in $\fq$ required to store the standard form generator matrix $G$ exceeds that of the standard generator tensor $X$ if and only if
$$ R < \frac{knm+n^2+m^2}{k+n+m}.$$


We now study codes that meet the tensor-rank bound.

\section{Tensor Rank Extremal Codes}
\label{sec:5}

In this section we consider existence questions on tensor rank extremal and MTR codes. 
Let $k$, $d$, be positive integers. We wish to determine for which $n$, $m \in \mathbb N$ there exists an $\Fqkd$  code $\C$ 
of tensor rank $R=N_q(k,d)$ and in particular, those for which $R=k+d-1$. 

Our approach to this problem is relies on Lemma \ref{lem:HammCD}, which gives a way to obtain tensor rank extremal codes from block codes of minimal length and hence MDS codes from MTR codes. A natural problem is to determine in which cases we can do the converse.

\begin{problem}\label{prob:tro}
Let $n, m$ be positive integers and let $R,k,d$ be positive integers satisfying $R=N_q(k,d)$. 
Find an $\fq$-$[R,k,d]$ code $C$ and a pair of matrices $V \in \Fq^{n \times R}, W \in \Fq^{m \times R}$ such that the code $\phi_{V,W}(C)$ is a rank-metric code of dimension $k$, minimum distance $d$ and tensor rank $R$ (i.e., is a tensor rank extremal code).	
\end{problem}

An interesting special case is given by the following.

\begin{problem}\label{prob:1}
Let $n, m$ be positive integers and let $R,k,d$ be positive integers satisfying $R=k+d-1$. Find an $\fq$-$[R,k]$ MDS code $C$ and a pair of matrices $V \in \Fq^{n \times R}, W \in \Fq^{m \times R}$ such that the code $\phi_{V,W}(C)$ is a rank-metric code of dimension $k$, minimum distance $d$ and tensor rank $R$ (i.e. is an MTR code).
\end{problem}

The answer to these problems clearly depends on $n$ and $m$. Indeed, one immediately observes that both $n$ and $m$ can not be smaller than $d$. Moreover, we can use the Singleton-like bound of Theorem \ref{singbound} to deduce that $n,m$ have to satisfy
$$k\leq \min\{n(m-d+1), m(n-d+1)\}.$$

\begin{definition}
Let $C$ be an $\fq$-$[R,k,d]$ of length $R=N_q(k,d)$. Let $V \in \Fq^{n \times R}$ and let $ W \in \Fq^{m \times R}$. We say that $(C,  V,  W)$ is an \textbf{extremal triple} if it is a solution to Problem \ref{prob:tro}, i.e.  if $\phi_{V,W}(C)$ is a tensor rank extremal code.
\end{definition}

With this notation, given positive integers $k,d$, we wish to determine for which $n,m \in \N$ there exist matrices $V \in \Fq^{n\times R}, W \in \Fq^{m\times R}$ and an $\fq$-$[R=N_q(k,d), k, d]$ code $C$ such that $(C, V, W)$ is an extremal triple. 
It is clear from the definition that this happens if and only if 

\begin{equation}\label{eq:vdw}
\rk(V\diag(c)W^\top) \geq d,
\end{equation}
for every $ c \in C\setminus \{0\}$.

The following result will be useful to us in addressing this problem.
\begin{lemma}\label{lem:rkvdw}
	Let $V\in \fq^{n\times R}, W \in \fq^{m \times R}$ and let $c \in \fq^R$. Let $C_V$ and $C_{W_c}$ denote the row-spaces of $V$ and $W \diag(c)$, respectively. Then
	$$ \rk(V \diag(c)W^T) = \rk(V)  - \dim(C_V^\perp \cap C_{W_c}) = \rk (W)-\dim(C_{W_c} \cap C_V^\perp).$$
\end{lemma}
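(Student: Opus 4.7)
The idea is to factor $V\,\diag(c)\,W^\top$ as a product of two rectangular matrices and read off its rank via a standard rank-nullity computation. Setting $W_c := W\,\diag(c)$, the quantity to be analyzed is
$$V\,\diag(c)\,W^\top \;=\; V\cdot W_c^\top,$$
a product of an $n \times R$ matrix with an $R \times m$ matrix. The claim is then a statement about the rank of this product expressed in terms of the row-spaces $C_V = \rs(V)$ and $C_{W_c} = \rs(W_c)$.

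The crux is the following auxiliary identity, which I would establish first: for any $A \in \fq^{a\times R}$ and $B \in \fq^{R\times b}$,
$$\rk(AB) \;=\; \rk(A) \;-\; \dim\bigl(\rs(A)\cap\cs(B)^\perp\bigr).$$
The proof is one line: the rows of $AB$ are the vectors $xB$ as $x$ runs over $\rs(A)$, so the surjection $\rs(A) \twoheadrightarrow \rs(AB)$, $x \mapsto xB$, has kernel $\{x \in \rs(A) : xB = 0\} = \rs(A) \cap \cs(B)^\perp$, and rank-nullity concludes.

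Specializing to $(A,B) = (V, W_c^\top)$ and using $\cs(W_c^\top) = \rs(W_c) = C_{W_c}$ gives the first equality. For the second equality, I would invoke the transpose-invariance of rank to rewrite $\rk(V\,\diag(c)\,W^\top) = \rk(W_c\, V^\top)$ and apply the same identity with $(A,B) = (W_c, V^\top)$, using $\cs(V^\top) = \rs(V) = C_V$. The two forms of the result are mirror images of each other, obtained by swapping the roles of the two factors under transposition.

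There is no genuine obstacle: once the product-rank formula is in hand, the whole proof amounts to bookkeeping between row-spaces and their annihilators. The only small subtlety to watch is consistently identifying $\cs(M^\top)$ with $\rs(M)$ when translating between the two applications of the auxiliary identity.
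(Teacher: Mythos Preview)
Your auxiliary identity $\rk(AB)=\rk(A)-\dim\bigl(\rs(A)\cap\cs(B)^\perp\bigr)$ is correct, and the one-line rank--nullity argument you give for it is precisely the mechanism behind the paper's proof (which phrases the same computation via the left and right radicals of the bilinear form $(x,y)\mapsto xV\diag(c)W^\top y^\top$, then separately reduces the non-full-rank case to the full-rank one). So the method is sound and is essentially the paper's, packaged a bit more directly.

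Where you need to be more careful is in matching what your identity actually produces to the displayed equalities. With $(A,B)=(V,W_c^\top)$ your formula gives
\[
\rk(V\diag(c)W^\top)=\rk(V)-\dim(C_V\cap C_{W_c}^\perp),
\]
with the $\perp$ on $C_{W_c}$, not on $C_V$; and with $(A,B)=(W_c,V^\top)$ you get
\[
\rk(V\diag(c)W^\top)=\rk(W_c)-\dim(C_{W_c}\cap C_V^\perp),
\]
with $\rk(W_c)$, not $\rk(W)$. Neither matches the lemma as printed. In fact the printed statement is internally inconsistent: the two intersections $C_V^\perp\cap C_{W_c}$ and $C_{W_c}\cap C_V^\perp$ are the same set, so the chain of equalities would force $\rk(V)=\rk(W)$. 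The paper's own proof computes $\ker_L\varphi\cong C_V\cap C_{W_c}^\perp$, agreeing with your first formula, so the first equality as printed is evidently a misplacement of the $\perp$. For the second, note that $\rk(W_c)=\rk(W)$ only when every coordinate of $c$ is nonzero; your derivation (and the paper's ``Similarly'') silently assumes this. If one genuinely wants $\rk(W)$ on the right-hand side, the honest route is to factor as $V_c\cdot W^\top$ with $V_c:=V\diag(c)$ and apply your identity to $(W,V_c^\top)$, which yields $\rk(W)-\dim(C_W\cap C_{V_c}^\perp)$---a formula involving $C_W$ and $C_{V_c}$ rather than $C_{W_c}$ and $C_V$.
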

\begin{proof}
	Suppose first that $V$ and $W$ both have full rank.
	For any $c \in C$, the rank of $V\diag(c)W^\top$ is the rank of the associated bilinear form on 
	$$\varphi:\fq^n\times \fq^m: (x,y) \mapsto xV \diag(c) W^\top y^\top,$$ which is
	$$ n- \dim \ker_L \varphi  = m - \dim \ker_R \varphi.$$ 
	Now $V$ has full rank, and so $\fq^n$ and $C_V$ are isomorphic.
	$$\ker_L\varphi= \{x \in \fq^n : xV\diag(c)W^\top y^\top =0 \:\forall\; y \in \fq^m\} \cong \{ v \in C_V : v\diag(c)W^\top=0 \} = C_V \cap C_{W_c}^\perp.$$
	Similarly, $\ker_R \varphi \cong C_{W_c} \cap C_V^\perp$.
	Now consider the case $\rk V = s \leq n$ and $\rk W = t \leq m$. There exist full rank matrices $A \in \fq^{s \times R}$ and $B \in \fq^{t \times R}$
	such that $AV$ and $BW$ are full rank matrices with the same row-spaces as $V$ and $W$, respectively. Then apply the above argument with $AV$ in place of $V$ and with $BW$ in place of $W$ to complete the proof. 
\end{proof}

We fix some further notation. For an arbitrary matrix $Y \in \fq^{\ell \times R}$ and element $c \in \fq^R$, we write $C_Y$ to denote the row-space of $Y$ and write $C_{Y_c}$ to denote the row-space of $Y\diag(c)$.

It is clear that if for given parameters $k,d$ we have a tensor rank extremal code in $\Fq^{n\times m}$, then we can construct a tensor rank extremal code in a larger ambient space for the same parameters $k,d$. In terms of extremal triples, we can express this observation as follows.

\begin{lemma}\label{lem:biggertriples}
	Let $C$ be an $\fq$-$[R=N_q(k,d), k, d]$ code. Let $V\in \Fq^{n \times R}$ and $W \in \Fq^{m \times R}$ such that $(C,V,W)$ is an extremal triple.  Then for all  integers $n'\geq n$, $m'\geq m$ and for all the matrices $V' \in \Fq^{n'\times R}, W' \in \Fq^{m'\times R}$ such that $\rs(W)\subseteq \rs(V')$ and $\rs(W) \subseteq \rs(W')$, $(C, V',W')$ is an extremal triple.
\end{lemma}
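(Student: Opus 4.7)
The plan is to translate the hypothesis $\rs(V) \subseteq \rs(V')$ and $\rs(W) \subseteq \rs(W')$ (interpreting the condition on $V'$ as written) into the existence of matrices $A \in \Fq^{n \times n'}$ and $B \in \Fq^{m \times m'}$ satisfying $V = AV'$ and $W = BW'$. This is just the standard fact that each row of $V$ lies in $\rs(V')$, so it is an $\Fq$-linear combination of the rows of $V'$, with the combination coefficients assembled into $A$, and similarly for $B$. Once these matrices are in hand, the whole proof amounts to comparing the two rank-metric codes via the identity
\[
V\diag(c)W^\top \;=\; A\,V'\diag(c)W'^\top B^\top
\qquad\text{for every } c \in \Fq^R,
\]
which follows immediately since $\diag(c)$ passes through linear maps applied on the left and right.

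The next step is to extract the coding-theoretic consequences. Since $(C,V,W)$ is an extremal triple, $\phi_{V,W}$ is injective on $C$ and $\rk(V\diag(c)W^\top) \geq d$ for every nonzero $c \in C$. The displayed identity gives
\[
\rk\bigl(V'\diag(c)W'^\top\bigr) \;\geq\; \rk\bigl(V\diag(c)W^\top\bigr),
\]
so $\phi_{V',W'}|_C$ is also injective and the image $\phi_{V',W'}(C)$ is a rank-metric code of $\Fq$-dimension $k$ with minimum distance $d' \geq d$.

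Finally, I combine a trivial upper bound on tensor rank with the block-code bound of Theorem~\ref{th:brock}. On the one hand, $\phi_{V',W'}(C)$ is contained in the span of the $R$ rank-$1$ matrices $v'_r \otimes w'_r$, so $\trk(\phi_{V',W'}(C)) \leq R$. On the other hand, Theorem~\ref{th:brock} applied to $\phi_{V',W'}(C)$ yields
\[
\trk\bigl(\phi_{V',W'}(C)\bigr) \;\geq\; N_q(k,d') \;\geq\; N_q(k,d) \;=\; R,
\]
where the middle inequality uses that $N_q(k,\cdot)$ is non-decreasing. These chain together to force equality throughout, giving $d' = d$ in the sense that $N_q(k,d') = N_q(k,d)$ and $\trk(\phi_{V',W'}(C)) = N_q(k,d') = R$, i.e.\ $\phi_{V',W'}(C)$ is tensor rank extremal and $(C,V',W')$ is an extremal triple.

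The argument is essentially a bookkeeping exercise once the factorizations $V=AV'$ and $W=BW'$ are written down; there is no real obstacle. The only subtlety worth flagging is that the minimum distance of $\phi_{V',W'}(C)$ could a priori strictly exceed $d$, but one does not need to rule this out directly: the sandwich $R \leq N_q(k,d') \leq \trk(\phi_{V',W'}(C)) \leq R$ collapses everything to the desired equalities, which is exactly the definition of an extremal triple.
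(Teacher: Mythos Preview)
Your proof is correct and takes essentially the same approach as the paper: both use the row-space containment to factor $V\diag(c)W^\top$ through $V'\diag(c)W'^\top$ and deduce $\rk(V'\diag(c)W'^\top)\ge d$ for every nonzero $c\in C$. The paper does this with invertible square matrices $A\in\GL(n',q)$, $B\in\GL(m',q)$ putting $V',W'$ into the block form $\left(\begin{smallmatrix}V\\0\end{smallmatrix}\right)$, $\left(\begin{smallmatrix}W\\0\end{smallmatrix}\right)$, whereas you use rectangular $A,B$ with $V=AV'$, $W=BW'$; your explicit sandwich $R\le N_q(k,d')\le\trk(\phi_{V',W'}(C))\le R$ simply spells out what the paper absorbs into the characterization~\eqref{eq:vdw} stated just before the lemma.
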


\begin{proof}
	Let $(C, V, W)$ be an extremal triple. Let $V' \in \Fq^{n'\times R}, W' \in \Fq^{m'\times R}$ such that $\rs(W)\subseteq \rs(V')$ and $\rs(W) \subseteq \rs(W')$. Then, there exist $A \in \GL(n',q), B \in \GL(m',q)$ such that 
	$$AV'=\begin{pmatrix} V \\ 0 \end{pmatrix}, \quad BW'=\begin{pmatrix} W \\ 0\end{pmatrix},$$
	Therefore, for every $v \in C \setminus \{0\}$
	$$\rk(V'\diag(v)W'^\top)=\rk(AV'\diag(v)W'^\top B^\top)=\rk\begin{pmatrix} V\diag(v)W^\top& 0 \\ 0 & 0 \end{pmatrix} \geq \rk(V\diag(v)W^\top)= d.$$	
\end{proof}

In particular, this means that in our analysis of Problem \ref{prob:tro} we may assume without loss of generality $V$ and $W$ are full rank matrices.

First we observe that in the case that at least one integer  among $n$ and $m$ is greater or equal than $R$, then it is easy to construct an extremal triple. Suppose that $n\geq R$. 
Let $C$ be a $\fq$-$[R=N_q(k,d), k, d]$ code and let $V,W$ be any full-rank matrices.
By Sylvester's inequality, we get that for every $c\in C\setminus\{0\}$,
$$\rk(V\diag(c)W^\top) \geq \rk(V)+\rk(W\diag(c))-R = \rk(W\diag(c)) $$
For the case $m \geq R$, all columns of $W$ are linearly independent and so $\rk(W\diag(c))=w_H(c) \geq d$.
For the case $R<m$,  $\rk(W\diag(c))\geq \rk(W)-(R-w_H(c)) =w_H(c) \geq d$. In either case the inequality of (\ref{eq:vdw}) is satisfied and clearly holds similarly with the assumption $m \geq R$.
It therefore only remains to consider the case $m,n<R$.

\begin{proposition}\label{prop:equivoptimal}
 Let $C$ be an $\fq$-$[R=N_q(k,d), k, d]$ code. Let $n,m \in \mathbb N$ such that $d\leq n,m<R$ and $V\in\Fq^{n\times R}, W \in \Fq^{m\times R}$. 
\begin{enumerate}
\item \label{pp1} $(C, V, W)$ is an extremal triple.
\item \label{pp2} For every $c \in C\setminus \{0\}$, $\dim(C_V^{\perp} \cap C_{W_c})\leq \rk(W)-d $
\item \label{pp3} For every $c \in C\setminus \{0\}$, $\dim(C_V \cap C_{W_c}^{\perp})\leq \rk(V)-d$
\item \label{pp4} For every $c \in C\setminus \{0\}$, $\dim(C_V^{\perp} + C_{W_c})\geq R-\rk(V)+d$
\item \label{pp5} For every $c \in C\setminus \{0\}$, $\dim(C_V + C_{W_c}^{\perp})\geq R-\rk(W)+d$
\end{enumerate}
\end{proposition}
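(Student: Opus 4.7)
The plan is to reduce each of conditions (\ref{pp2})--(\ref{pp5}) to the single rank inequality $\rk(V\diag(c)W^\top) \geq d$ for every nonzero $c \in C$, which by equation (\ref{eq:vdw}) is precisely condition (\ref{pp1}). We need not reverify that a triple satisfying this inequality is extremal, as that observation is already recorded just before the proposition.

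First I apply Lemma \ref{lem:rkvdw} at each nonzero $c \in C$ to obtain the two identities
\[
\rk(V\diag(c)W^\top) \;=\; \rk(V) - \dim(C_V \cap C_{W_c}^\perp) \;=\; \rk(W) - \dim(C_V^\perp \cap C_{W_c}).
\]
Rearranging the first form, $\rk(V\diag(c)W^\top) \geq d$ is equivalent to $\dim(C_V \cap C_{W_c}^\perp) \leq \rk(V) - d$, which is exactly (\ref{pp3}). Rearranging the second form gives (\ref{pp2}). Consequently (\ref{pp1}) $\Leftrightarrow$ (\ref{pp2}) $\Leftrightarrow$ (\ref{pp3}).

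Next I derive (\ref{pp4}) from (\ref{pp3}), and (\ref{pp5}) from (\ref{pp2}), using the standard orthogonality identities inside $\fq^R$: for subspaces $U,U' \subseteq \fq^R$, one has $(U \cap U')^\perp = U^\perp + U'^\perp$ and $\dim U + \dim U^\perp = R$. Applying these to the pairs $(C_V,C_{W_c}^\perp)$ and $(C_V^\perp,C_{W_c})$ yields
\[
\dim(C_V^\perp + C_{W_c}) = R - \dim(C_V \cap C_{W_c}^\perp), \qquad \dim(C_V + C_{W_c}^\perp) = R - \dim(C_V^\perp \cap C_{W_c}),
\]
so the inequalities in (\ref{pp3}) and (\ref{pp2}) convert directly into (\ref{pp4}) and (\ref{pp5}) respectively. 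Each step is clearly reversible, so all five conditions are equivalent.

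I do not anticipate any substantive obstacle: the whole argument is a direct manipulation of Lemma \ref{lem:rkvdw} combined with the textbook duality $\dim(U\cap U') + \dim(U^\perp + U'^\perp) = R$ in $\fq^R$. The only small care needed is to ensure the two formulas of Lemma \ref{lem:rkvdw} are invoked correctly in the correspondences (\ref{pp2})$\leftrightarrow$(\ref{pp5}) and (\ref{pp3})$\leftrightarrow$(\ref{pp4}).
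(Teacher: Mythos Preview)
Your proposal is correct and follows essentially the same approach as the paper: invoke Lemma~\ref{lem:rkvdw} to get (\ref{pp1})$\Leftrightarrow$(\ref{pp2})$\Leftrightarrow$(\ref{pp3}), then pass to (\ref{pp4}) and (\ref{pp5}) via the orthogonality identity $\dim(U\cap U') + \dim(U^\perp + U'^\perp)=R$. The paper additionally mentions the dimension formula $\dim(X+Y)+\dim(X\cap Y)=\dim X+\dim Y$ to obtain the pairings (\ref{pp2})$\Leftrightarrow$(\ref{pp4}) and (\ref{pp3})$\Leftrightarrow$(\ref{pp5}), but this is redundant once the orthogonality route is in place, so your argument is marginally more economical while being substantively identical.
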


\begin{proof}
	As we observed before, $\phi_{V,W}(C)$ has tensor rank at most $R$ and dimension $k$ and so is tensor rank extremal if and only if it has minimum rank distance $d$.
	Therefore, from Lemma~\ref{lem:rkvdw}, the equivalence of the first three statements is immediate. 
    The equivalences between (\ref{pp2}) and~(\ref{pp4}) and between (\ref{pp3}) and 
(\ref{pp5}) are a direct consequence of the dimension formula for the sum of two subspaces, which is $\dim(X + Y)+\dim(X \cap Y)= \dim(X)+\dim(Y)$.

The equivalences between (\ref{pp2}) and (\ref{pp5}) and between (\ref{pp3}) and 
(\ref{pp4}) follow from the fact that $(X\cap Y)^\perp=X^\perp + Y^\perp$ and that $\dim(X^\perp)=R-\dim (X)$, for every $X, Y$ subspace of $\Fq^R$.
\end{proof}

\begin{proposition}\label{prop:maxrk+d}
	Let $k,d ,n,m, R$ be positive integers satisfying $ d\leq n,m <R$ and $R=N_q(k,d)$.
	Let $C$ be an $\fq$-$[R, k, d]$ code and let $V\in\Fq^{n\times R}, W \in \Fq^{m\times R}$ such that $C_V$ and $C_W$ are MDS codes.
	If $n+m\geq R + d$, then $(C,V,W)$ is an extremal triple.
\end{proposition}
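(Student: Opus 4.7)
The plan is to verify directly that $\rk(V\diag(c)W^\top) \ge d$ for every nonzero $c \in C$. Since $c$ has Hamming weight $w := w_H(c) \ge d$, I would first reduce the computation to the support $S := \supp(c)$ by writing
$$V\diag(c)W^\top = V_S\, \diag(c_S)\, W_S^\top,$$
where $V_S \in \fq^{n \times w}$ and $W_S \in \fq^{m \times w}$ are the column-restrictions of $V$ and $W$ to $S$, and $c_S \in \fq^w$ has all entries nonzero, so that $\diag(c_S)$ is an invertible $w \times w$ matrix. The MDS hypothesis on $C_V$ and $C_W$ is precisely the statement that any $\min(n,w)$ (resp.\ $\min(m,w)$) columns of $V$ (resp.\ $W$) are linearly independent, hence $\rk(V_S) = \min(n,w)$ and $\rk(W_S) = \min(m,w)$.

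Next, I would invoke Sylvester's rank inequality applied to the factorization $(V_S\diag(c_S)) \cdot W_S^\top$, noting that right-multiplication by the invertible $\diag(c_S)$ leaves the rank unchanged:
$$\rk(V\diag(c)W^\top) \;\ge\; \rk(V_S) + \rk(W_S) - w \;=\; \min(n,w) + \min(m,w) - w.$$
A short case analysis on the position of $w$ relative to $n$ and $m$ then closes the argument. If $w \le \min(n,m)$, the right-hand side equals $w \ge d$; if $n \le w \le m$ or $m \le w \le n$, it equals $n$ or $m$, both of which are at least $d$ by hypothesis.

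The only remaining case is $w > \max(n,m)$, where the lower bound becomes $n+m-w$. This is precisely the case in which the hypothesis $n+m \ge R+d$ is needed: since $w \le R$, one obtains $n + m - w \ge n + m - R \ge d$. I do not expect any genuine obstacle, as the two ingredients (rank of a column-submatrix of an MDS generator matrix, and Sylvester's inequality) are entirely standard; the only subtlety is recognising that the extra assumption $n+m \ge R+d$ is exactly what is required to control the bilinear interaction when the codeword weight exceeds both $n$ and $m$.
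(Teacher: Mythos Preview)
Your proof is correct and follows essentially the same approach as the paper: both reduce to the inequality $\rk(V\diag(c)W^\top) \ge \min(n,w)+\min(m,w)-w$ and then perform the same case analysis. The only cosmetic difference is that you first restrict to the support of $c$ and apply Sylvester's inequality to the resulting product with an invertible middle factor, whereas the paper applies the Frobenius rank inequality directly to $V\cdot\diag(c)\cdot W^\top$; these yield the identical bound.
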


\begin{proof}
	Let $c \in C\setminus\{0\}$, with $\mathrm{wt}_H(v)=w\geq d$. Since $C_V$ and $C_W$ are MDS codes, we have
	$\rk(V\diag(v))=\min\{n, w\}$ and $ \rk(\diag(v)W^\top)=\min\{m,w\}.$ By the Frobenius rank inequality, we have 
	$$\rk(V\diag(c)W^\top) \geq \rk(V\diag(c))+\rk(\diag(c)W^\top)-\rk(\diag(c)) = \min\{n,w\} + \min\{m,w\} - w.$$ 
	It is easy to check that in all cases, under the assumption that $m+n\geq R+d$, the right hand side of this inequality is at least $ d$. 
\end{proof}
 
We conclude this section with particular construction of an extremal triple involving doubly-extended generalized Reed-Solomon codes 
or Cauchy codes \cite{dur,seroussiroth}, which hence is a partial solution to Problem \ref{prob:1}. Before doing this, we briefly recall some notation. For each $s\in \mathbb N$, let $\Fq[x,y]_{<s}$ denote the $\fq$-space of homogeneous polynomials with degree strictly less than~$s$. Let $\overline{\F}_q=\fq \cup \{\infty\}$ denote the projective line over $\fq$.
For any $$\displaystyle f(x,y)=\sum_{j=0}^{s-1} f_j x^j y^{s-1-j} \in \Fq[x,y]_{<s}$$ we define the map
$$ f:\overline{\F}_q \longrightarrow \fq: \theta \mapsto f(\theta):= \left\{  \begin{array}{cl}
	                       f(\theta,1) & \text{ if } \theta \in \fq, \\
	                       f(1,0)      & \text{ if } \theta = \infty. \\
	\end{array} \right.$$
Let $N\in \mathbb N$. For any $\alpha=(\alpha_1,\ldots, \alpha_N) \in \overline{\F}_q^N$ define the evaluation map
$$\mathrm{ev}_\alpha: \Fq[x,y]_{<s} \longrightarrow \Fq^N : f(x,y) \longmapsto (f(\alpha_1),\ldots, f(\alpha_N)),$$
and similarly for elements of $ \Fq[x]_{<s}$ 
\begin{definition}[see \cite{dur}]
	Let $1 \leq k \leq N-1$ and let $\beta=(\beta_1, \ldots, \beta_N) \in \Fq^N$ and let $\alpha_1,\ldots,\alpha_N$ be pairwise distinct elements of $\overline{\F}_q$.
	The {\bf Cauchy code} $C_k(\alpha,\beta)$ is defined to be the set
	$$C_k(\alpha,\beta):= \left\{ (\beta_1f(\alpha_1), \ldots, \beta_Nf(\alpha_N)) : f \in \Fq[x,y]_{<k} \right\}.$$
\end{definition}

Observe that in our definition we allow the coefficients of $\beta$ to be zero, while the standard definition requires each $\beta \in (\Fq^*)^N$. In particular, the Cauchy code as defined here is MDS if
$\beta \in (\Fq^*)^N$. However, for our purposes, we will sometimes require $\beta$ to have some zero coefficients. 

Let $\ast$ denote the \emph{Schur product} (or {\em Hadamard product}) of two vectors, which is the vector obtained after component-wise multiplication of two vectors of the same length. 
Then we have the expression $$C_k(\alpha,\beta)=\left\{\beta\ast\mathrm{ev}_{\alpha}(f) : f \in \Fq[x,y]_{<k} \right\}.$$ 

The following result gives a construction of MTR codes of dimension $k$ and minimum rank distance $d$, provided that $d<k$. We will see later that in the case $d\geq k$ we can always find a construction of MTR codes for every $m,n \geq d$. Therefore, the case analyzed here is the non-trivial one. 

\begin{theorem}\label{thm:RSconstruction}
Let $0<d<k<R$ be positive integers satisfying $R=k+d-1$ and let $\alpha=(\alpha_1,\ldots, \alpha_R) \in \overline{\F}_q^R$ be a vector such that $\alpha_i \in \Fq$ are pairwise distinct. 
Let $f(x,y) \in \fq[x,y]$ be an irreducible homogeneous polynomial of degree $k$. Let $C=C_k(\alpha,\mathbf{1})$, let $V\in \Fq^{k\times R}$ 
be a parity check matrix of $C_{R-k}(\alpha,\mathrm{ev}_\alpha(f))$ and let $W\in \Fq^{d\times R}$ 
be a generator matrix of $C_{d}(\alpha, \mathbf{1})$. Then $(C, V, W)$ is an extremal triple.
\end{theorem}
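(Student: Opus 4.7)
My plan is to reduce the extremal-triple property to a polynomial divisibility question and then invoke the irreducibility of $f$.

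First, I would verify the dimensions of $V$ and $W$. Since $\alpha_1,\dots,\alpha_R \in \Fq$ are $R$ distinct points, the evaluation map $\mathrm{ev}_\alpha \colon \Fq[x,y]_{<R} \to \Fq^R$ is injective: any nonzero homogeneous polynomial of degree $R-1$ dehomogenized at $y=1$ yields a univariate polynomial of degree at most $R-1$, which cannot have $R$ distinct roots in $\Fq$. Consequently, $C_d(\alpha,\mathbf{1})$ has dimension $d$, so $\mathrm{rk}(W) = d$, while $C_{R-k}(\alpha,\mathrm{ev}_\alpha(f)) = \{\mathrm{ev}_\alpha(fp) : p \in \Fq[x,y]_{<d-1}\}$ has dimension $d-1 = R-k$, so $\mathrm{rk}(V) = k$.

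Next, I would apply the equivalence of (1) and (2) in Proposition~\ref{prop:equivoptimal}: since $\mathrm{rk}(W) = d$, it suffices to show that $C_V^\perp \cap C_{W_c} = \{0\}$ for every $c \in C\setminus\{0\}$. Writing $c = \mathrm{ev}_\alpha(h)$ for the unique nonzero $h \in \Fq[x,y]_{<k}$, the two subspaces become
$$
C_V^\perp = \{\mathrm{ev}_\alpha(fp) : p \in \Fq[x,y]_{<d-1}\}, \qquad C_{W_c} = \{\mathrm{ev}_\alpha(hg) : g \in \Fq[x,y]_{<d}\}.
$$

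A vector in the intersection therefore produces $g \in \Fq[x,y]_{<d}$ and $p \in \Fq[x,y]_{<d-1}$ satisfying $\mathrm{ev}_\alpha(hg - fp) = 0$. Since both $hg$ and $fp$ are homogeneous of degree $(k-1)+(d-1) = R-1$, the difference lies in $\Fq[x,y]_{<R}$, and the injectivity established above forces the polynomial identity $hg = fp$ in the UFD $\Fq[x,y]$. Irreducibility of $f$ then implies $f \mid h$ or $f \mid g$; but $\deg h = k-1$ and $\deg g = d-1$ are both strictly less than $\deg f = k$, which forces $h = 0$ or $g = 0$. The hypothesis $c \neq 0$ excludes $h = 0$, so $g = 0$ and the intersection vector vanishes.

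The main obstacle is essentially conceptual: irreducibility of $f$ is the indispensable structural ingredient, for without it a nontrivial shared factor between $h$ or $g$ and $f$ could be absorbed into $p$ and manufacture a nonzero element in the intersection. The key bookkeeping identities $R-k = d-1$ and $\deg(hg) = \deg(fp) = R-1$ are precisely what place the polynomials in a common space on which $\mathrm{ev}_\alpha$ is injective, while the strict inequality $d < k$ is exactly what forbids $f \mid g$.
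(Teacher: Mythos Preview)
Your proof is correct and follows essentially the same route as the paper's own argument: reduce via Proposition~\ref{prop:equivoptimal} to showing $C_V^\perp \cap C_{W_c}=\{0\}$, identify both spaces as evaluations of polynomial products, pass from $\mathrm{ev}_\alpha(hg)=\mathrm{ev}_\alpha(fp)$ to the polynomial identity $hg=fp$ using that the degrees are below $R$, and conclude by irreducibility of $f$ together with the degree bounds. Your write-up is slightly more explicit in verifying the ranks of $V$ and $W$ and the injectivity of $\mathrm{ev}_\alpha$ on $\Fq[x,y]_{<R}$, but the substance is the same.
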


\begin{proof}
In order to prove that $(C,V,W)$ is an extremal triple, we use the characterization given in Proposition \ref{prop:equivoptimal}, showing that for every $c\in \C\setminus \{0\}$ we have $\dim(C_V^\perp \cap C_{W_c}) \leq \dim(C_{W_c})-d$. Let $c \in C\setminus\{0\}$. Since $C$ is an MDS code with minimum distance $d$, then $\mathrm{wt}_H(v)\geq d$. Moreover, the code $C_{W_c}$ is obtained from $C_W$ by multiplying the $i$-th coordinate of every codeword by $c_i$, that is, $C_{W_c} = C_d(\alpha,c)$. Since $C_W$ is an MDS code of dimension $d$, we have $\dim(C_{W_c})=d$.  We therefore need to show that
$$C_V^\perp \cap C_{W_c}=\{0\}.$$
Now $c=\mathrm{ev}_\alpha(g)$ for some non-zero $g(x,y) \in \Fq[x,y]_{<k}$, and so $C_{W_v}=C_d(\alpha,\mathrm{ev}_\alpha(g))$. 
Let $b \in C_V^\perp \cap C_{W_c}$. There exist $\mu \in \Fq[x,y]_{<R-k}$, $\lambda \in \Fq[x,y]_{<d}$ such that 
$b=\mathrm{ev}_\alpha(f)\ast \mathrm{ev}_\alpha(\mu)=\mathrm{ev}_\alpha(g)\ast \mathrm{ev}_\alpha(\lambda),$
i.e. 
$$b_i=f(\alpha_i)\mu(\alpha_i)=g(\alpha_i)\lambda(\alpha_i), \quad \mbox{ for } i=1,\ldots, R. $$
From the fact that $\deg f\mu<R$ and $\deg g\lambda <R$, we obtain $f\mu=g\lambda$. Therefore, since $f$ is irreducible, $f$ divides $g$ or $\lambda$. But $\deg g < k$ and $\deg \lambda <d$. This implies $\lambda=0$ and $b=0$.
\end{proof}

\begin{example}
	Let $q=8$, $R=7$, $k=5$ and let $d=R-k+1=3$. Let $\omega$ be a generator of $\F_8^\times$ and let $\alpha=(1,\omega,...,\omega^6)$.
	The polynomial $f(x)=x^5+x^2+1$ is irreducible in $\F_8[x]$. Let $C$ be the $\F_8$-$[7,5,3]$ Reed-Solomon code $C_5(\alpha,{\bf 1})$.
	Let 
	\[
	V = \left[ \begin{array}{ccccccc}
	        1 & 0 & 0 & 0 & 0 & \omega^6 & \omega^2 \\ 
	        0 & 1 & 0 & 0 & 0 & \omega^3 & \omega^5 \\  
	        0 & 0 & 1 & 0 & 0 & \omega^6 & \omega^3 \\  
	        0 & 0 & 0 & 1 & 0 & \omega^5 & \omega^4 \\  
	        0 & 0 & 0 & 0 & 1 & \omega^4 & \omega^2 \\
	           \end{array}     
	    \right], \qquad
	W = \left[ \begin{array}{ccccccc}
	1 & 0 & 0 & \omega^3 & \omega   & 1 & \omega^2 \\ 
	0 & 1 & 0 & \omega^6 & \omega^6 & 1 & \omega^2 \\  
	0 & 0 & 1 & \omega^5 & \omega^4 & 1 & \omega^4 \\  
	\end{array}     
	\right].    
	\] 
	$V$ is a parity check matrix of $C_2(\alpha,ev_\alpha(f))=C_{2}(\alpha,(1,\omega,\omega^2,\omega^4,\omega^4,\omega^2,\omega))$, and 
	$W$ is a generator matrix of $C_3(\alpha,{\bf 1})$.
	It can be checked that for each $c \in C$, we have $C_2(\alpha,ev_\alpha(f)) \cap C_3(\alpha,c) = \{0\}.$ 
	$(C,V,W)$ is an extremal triple and the rank-metric code $\mC=\phi_{V,W}(C)$ is an MTR $\F_8$-$[5 \times 3,5,3]$ code of tensor rank $7$ and is in fact MRD.
\end{example}

\begin{remark}
	The storage complexity cost of the generator tensor for this class of MTR codes is at most $3kd-2k-d$. 
	This bound is exceeded by the bound on the cost of encoding using a generator matrix as described in Section \ref{sec:comp} (which is $k^2(d-1)$) for all $k>d$. 
	The generator tensor encoding cost requires at most $k(d-1)$ multiplications and $(k-1)(d-1)$ additions, while the encoding costs required by a generator matrix
	requires up to $k^2(d-1)$ multiplications and $(k-1)k(d-1)$ additions.
\end{remark}

\begin{remark}
The figure below gives a graphical description of the parameters for which Problem \ref{prob:1} is solved. That is, it represents the parameters for which we do have constructions of MTR codes, the parameters for which we know no MTR codes exist, and the parameters for which the problem is still open. We suppose that $d<k$ are fixed integers, and the axis show increasing $n$ and $m$ (the number of rows and the number of columns of the ambient matrix space). The black hyperbolae represent the Singleton-like bounds, and therefore below them there does not exist any MTR codes. The blue shading represent the construction of MTR codes described in Theorem \ref{thm:RSconstruction} and its transpose. Moreover, by Lemma \ref{lem:biggertriples}, the right-upper quarter-planes having those points as corner points have also a construction of MTR codes. The red line represents the solutions provided by Proposition \ref{prop:maxrk+d}, and again by Lemma \ref{lem:biggertriples}; for each point on it, the right-upper quarter-plane starting from it has solution. As we can see, the area in between the Singleton-like bounds, the red line and the two upper-right quarter-planes starting from the blue dots is not solved yet. For this reason, in the following we will investigate codes which are not necessarily MTR, but have ``small" tensor rank relative to their dimension and minimum distance.

\begin{center}
\begin{tikzpicture}\label{figure}
\begin{axis} [axis lines=middle, 
xlabel=$n$,
ylabel=$m$, 
xmin=0,xmax=50,
ymin=0, ymax=50,
 width=10cm, height=10cm,
title={Existence of MTR Codes},
domain=8:37,
samples=100,
 xticklabels=\empty,
 yticklabels=\empty,
] 
\addplot [thick] (x,{30/(x-7)}); 
\addplot [thick] ({30/(x-7)},x);
\addplot [thick, red] (x,45-x);
\fill [blue,opacity=.3] (80,300) rectangle (700,700);
\fill [blue,opacity=.3] (300,80) rectangle (700,300);
\draw (190,380) node[anchor=north, minimum size=7cm]{\textbf{\LARGE{?}}};
\fill [yellow,opacity=.3] (80,370) -- (370,80)--(700,80)--(700,700)--(80,700);
\addplot [thick, blue] (x, 30);
\addplot [thick, blue] (8, x+22) ;
\addplot [thick, blue] (x+22,8) ;
\addplot [thick, blue] (30,x);
\addplot+[only marks, nodes near coords,
		point meta=explicit symbolic, blue] 
	coordinates {
		(30,8) []
		(8,30) []
	};
\addplot[mark=*]
 coordinates {(0,30)} node[pin=0:{$k$}]{} ;
\addplot[mark=*]
 coordinates {(0,8)} node[pin=0:{$d$}]{} ;
\addplot[mark=*]
 coordinates {(8,0)} node[pin=90:{$d$}]{} ;
\addplot[mark=*]
 coordinates {(30,0)} node[pin=90:{$k$}]{} ;
\end{axis} 
\end{tikzpicture}
\end{center}

\end{remark}
\subsection{Tensor rank of  Delsarte-Gabidulin codes}

In this subsection, we study the tensor rank of Delsarte-Gabidulin codes, which form the best understood family of rank-metric codes. We will give a precise computation of their tensor rank when the dimension of the code over the extension field is $1$, and a non-trivial upper bound when this dimension is strictly greater than $1$. In order to do this, we recall some well-known results on tensors over finite fields. In the literature, the computation of tensor rank of tensors over finite field was mainly studied for complexity purposes. Indeed, the tensor rank of some special tensors reveals the lowest complexity of some operations, such as multiplication between polynomials or between matrices. The interested reader is referred to \cite{algcplex} for a more complete exposition.

First, we need the following result, which ensures that the tensor rank of a vector code is well defined.

\begin{proposition}\label{prop:tensorvectorcode}
Let $C \subseteq \Fm^n$ be a vector code, and let $\Gamma=\{\gamma_1,\ldots, \gamma_m\}$, $\Gamma'=\{\gamma_1',\ldots, \gamma_m'\}$ be two bases of $\F_{q^m}/\F_q$. Then $\trk(\Gamma(C))=\trk(\Gamma'(C))$.
\end{proposition}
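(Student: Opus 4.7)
The plan is to reduce the statement to two facts that are already established earlier in the paper, namely Remark~\ref{rem:equivvector} and Proposition~\ref{prop:tensequivalent}. The former says that the matrix codes $\Gamma(C)$ and $\Gamma'(C)$ obtained by representing the same vector code $C \subseteq \Fm^n$ with respect to two different $\Fq$-bases of $\Fm$ are equivalent rank-metric codes in $\mat$, and the latter says that equivalent rank-metric codes share the same tensor rank. Chaining these two yields the desired equality.

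First I would make the equivalence from Remark~\ref{rem:equivvector} explicit so the reader sees why it is an $\fq$-linear isometry and not just a linear bijection. If $B \in \GL(m,q)$ is the change-of-basis matrix sending the coordinates of a vector with respect to $\Gamma$ to its coordinates with respect to $\Gamma'$, then for every $v \in \Fm^n$ we have $\Gamma'(v) = \Gamma(v)\, B$, since the $i$-th row of $\Gamma(v)$ lists the coordinates of $v_i$ in the basis $\Gamma$ and multiplication on the right by $B$ rewrites these coordinates in the basis $\Gamma'$. Consequently $\Gamma'(C) = \Gamma(C)\, B$, so the two codes are related by right-multiplication by an invertible matrix, which is an $\Fq$-linear isometry of $(\mat,d)$ by the classification recalled in the paper.

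Then I would invoke Proposition~\ref{prop:tensequivalent} to conclude $\trk(\Gamma(C)) = \trk(\Gamma'(C))$. There is no real obstacle here: the only step that requires care is making sure the equivalence $\Gamma'(C) = \Gamma(C) B$ is stated carefully (whether it is $B$ or $B^\top$ on the right depends on conventions, but either way it is a right-multiplication by an element of $\GL(m,q)$ and hence an isometry), after which the argument is immediate.
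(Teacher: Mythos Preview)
Your proposal is correct and follows exactly the same route as the paper: invoke Remark~\ref{rem:equivvector} to get that $\Gamma(C)$ and $\Gamma'(C)$ are equivalent, then apply Proposition~\ref{prop:tensequivalent}. Your explicit verification that $\Gamma'(C)=\Gamma(C)B$ for some $B\in\GL(m,q)$ is a helpful elaboration of what the remark asserts, but is not strictly needed given the reference.
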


\begin{proof}
 By Remark \ref{rem:equivvector}, we have that $\Gamma(C)$ and $\Gamma'(C)$ are equivalent codes in $\mat$, so the result follows by Proposition \ref{prop:tensequivalent}.
\end{proof}

Therefore, by Proposition \ref{prop:tensorvectorcode}, the notion of tensor rank of a vector code is well-defined, and we will denote by $\trk(C)$ the tensor rank of any of its matrix representations.

Let $f\in \Fq[x]$ be a fixed polynomial of degree $k$. The map
$$\F_q[x]_{<m} \times \fq[x]_{<n} \longrightarrow \F_q[x]_{<k}: (g,h) \longmapsto gh \mod f, $$
is clearly bilinear, and so can be represented by a tensor, which we denote by $T_{m,n,k} \in  \Fq^{m\times n \times k}$. 
We have the following result on the tensor rank of $T_{m,n,\ell}$.
\begin{proposition}[\text{\cite[Propositions 14.47, 14.48]{algcplex}}]\label{prop:trkTMNK}
	$T_{m,n,k}$ over $\Fq$ has tensor rank at least $m+ n-1$, and has tensor rank exactly $m+n-1$ if and only
	if $q \geq m+n-2$.
\end{proposition}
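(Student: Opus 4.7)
The plan is to combine Kruskal's tensor-rank bound (Theorem \ref{thm:trkbound}) for the lower bound, an explicit Lagrange-interpolation construction for the upper bound, and the tensor-rank/block-code correspondence (Corollary \ref{lem:HammCD}) for the converse direction.

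For the lower bound $\trk(T_{m,n,k}) \geq m+n-1$, I would analyze the first slice space $\ss_1(T_{m,n,k})$. With respect to the standard monomial basis of $\Fq[x]_{<m}$, the $i$-th slice is the matrix of the $\Fq$-linear map $h \mapsto x^{i-1} h \bmod f$ acting on $\Fq[x]_{<n}$, and arbitrary linear combinations produce the multiplication-by-$g$ maps $M_g$ for $g \in \Fq[x]_{<m}$. In the regime where $k \geq m+n-1$ (in which the bound is sharp) no reduction modulo $f$ actually occurs, so $M_g$ is injective for every nonzero $g$ and hence has rank $n$. Thus $\ss_1(T_{m,n,k})$ is an $\Fq$-$[n \times k, m, n]$ rank-metric code, and Theorem \ref{thm:trkbound} immediately yields $\trk(T_{m,n,k}) \geq m+n-1$.

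For the upper bound under the hypothesis $q \geq m+n-2$, I would build an explicit decomposition via Lagrange interpolation on the projective line. Since $|\overline{\F}_q| = q+1 \geq m+n-1$, one may pick pairwise distinct points $\alpha_1,\ldots,\alpha_{m+n-1} \in \overline{\F}_q$. The product $gh$ has degree at most $m+n-2$, so it is determined by its values at these points via Lagrange polynomials $\ell_r \in \Fq[x]_{<m+n-1}$; reducing modulo $f$ gives
\[
gh \bmod f \;=\; \sum_{r=1}^{m+n-1} g(\alpha_r)\,h(\alpha_r)\,\bigl(\ell_r(x) \bmod f\bigr),
\]
which exhibits $T_{m,n,k}$ as a sum of $m+n-1$ simple tensors.

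For the converse direction — that equality $\trk(T_{m,n,k}) = m+n-1$ forces $q \geq m+n-2$ — I would apply Corollary \ref{lem:HammCD} to any minimum-rank decomposition of $T_{m,n,k}$. The resulting block code $C_{\mathcal A}$ has parameters $[m+n-1, m, \geq n]$, and since these meet the Singleton bound, the code is MDS. The classical length bound for a nontrivial MDS code, $N \leq q+1$, then yields $m+n-1 \leq q+1$, i.e., $q \geq m+n-2$. I expect this last step to be the principal obstacle: it hinges on a classical but non-trivial upper bound for the length of MDS codes, whereas the lower-bound half follows cleanly from Kruskal's bound once the first slice space is recognized as the polynomial-multiplication code.
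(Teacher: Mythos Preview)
The paper does not supply its own proof; the result is quoted from \cite{algcplex}. Your Lagrange-interpolation construction for the upper bound is the standard one and is correct. For the lower bound, the restriction to $k \ge m+n-1$ is both unnecessary and insufficient for the paper's applications, which use $T_{m,n,m}$ with $f$ irreducible (Proposition~\ref{prop:TRKGab}): in that setting multiplication by any nonzero $g \in \Fq[x]_{<m}$ is injective on $\Fq[x]/(f)$ and hence on $\Fq[x]_{<n}$, so every nonzero slice $M_g$ has rank $n$ whether or not reduction occurs, and Theorem~\ref{thm:trkbound} applies directly.

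The genuine gap is your converse. The inequality ``$N \le q+1$ for a nontrivial MDS code'' is the MDS conjecture, which is open in general. The \emph{provable} bound $N \le q+k-1$ applied to an $[m+n-1,\,m]$ MDS code yields only $q \ge n$ (and symmetrically $q \ge m$ from the second slice space), which is strictly weaker than $q \ge m+n-2$ as soon as $\min\{m,n\} \ge 3$. The argument in \cite{algcplex} avoids block codes entirely: in a rank-$(m+n-1)$ decomposition $\sum_r u_r \otimes v_r \otimes w_r$ of the polynomial-multiplication tensor, the $w_r$ necessarily form a basis of $\Fq[x]_{<m+n-1}$, and the dual functionals satisfy $w_r^*(gh) = u_r(g)\,v_r(h)$ for all $g,h$. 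Thus each $m \times n$ Hankel matrix $\bigl(w_r^*(x^{i+j})\bigr)_{i,j}$ has rank one. But up to scalar the rank-one Hankel matrices are precisely the evaluation sequences $a_s = \alpha^s$ for $\alpha \in \Fq$ together with the single ``point at infinity'' $a_s = \delta_{s,\,m+n-2}$, giving only $q+1$ projective classes; linear independence of the $w_r^*$ then forces $m+n-1 \le q+1$.
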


\begin{lemma} Let $f\in \Fq[x]$ be an irreducible polynomial of degree $m$, and let $\alpha$ be a root of $f$. 
	Let $C = \langle (1,\alpha, \ldots, \alpha^{m-1}) \rangle_{\F_{q^m}}$ and let $\Gamma=\left\{1,\alpha, \ldots, \alpha^{m-1}\right\}$. 
	The tensor $T_{m,m,m}$ is the generator tensor of the $m$-dimensional code $\Gamma(C)$.
\end{lemma}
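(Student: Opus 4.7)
The strategy is to compute both $\ss_1(T_{m,m,m})$ and $\Gamma(C)$ explicitly with respect to the monomial/power basis and show that they coincide as subspaces of $\Fq^{m \times m}$. Since $\Gamma(C)$ has $\Fq$-dimension $m$ (as $C$ is $1$-dimensional over $\F_{q^m}$), matching the spanning sets of size $m$ will suffice.

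First, I would unravel the coordinate tensor of $T_{m,m,m}$. Using the basis $\{1,x,\ldots,x^{m-1}\}$ for $\Fq[x]_{<m}$, the defining bilinear map $(g,h)\mapsto gh \bmod f$ yields entries $(T_{m,m,m})_{i,j,\ell}$ equal to the coefficient of $x^{\ell-1}$ in $x^{i+j-2} \bmod f$, for $1\le i,j,\ell \le m$. The $i$-th first-slice is then
$$m_1(e_i,T_{m,m,m}) = \bigl([x^{\ell-1}](x^{i-1}\cdot x^{j-1} \bmod f)\bigr)_{j,\ell},$$
which is precisely the matrix of the $\Fq$-linear map ``multiplication by $x^{i-1}$'' on $\Fq[x]/(f)$, expressed in the basis $\{1,x,\ldots,x^{m-1}\}$.

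Next, I would compute $\Gamma(\alpha^{i-1}\cdot(1,\alpha,\ldots,\alpha^{m-1}))$ for each $i \in \{1,\ldots,m\}$. The $j$-th coordinate of this vector is $\alpha^{i+j-2}$, and since $f(\alpha)=0$, writing $\alpha^{i+j-2}$ in the basis $\Gamma=\{1,\alpha,\ldots,\alpha^{m-1}\}$ gives the same coefficients as $x^{i+j-2} \bmod f$. Thus the $(j,\ell)$-entry of $\Gamma(\alpha^{i-1}\cdot(1,\alpha,\ldots,\alpha^{m-1}))$ coincides with $[x^{\ell-1}](x^{i+j-2}\bmod f)$, yielding the key identity
$$m_1(e_i,T_{m,m,m}) \;=\; \Gamma\bigl(\alpha^{i-1}\cdot(1,\alpha,\ldots,\alpha^{m-1})\bigr)$$
for every $i$.

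Finally, since $\{1,\alpha,\ldots,\alpha^{m-1}\}$ is an $\Fq$-basis of $\F_{q^m}$, the $m$ vectors $\alpha^{i-1}\cdot(1,\alpha,\ldots,\alpha^{m-1})$, for $i=1,\ldots,m$, form an $\Fq$-basis of the $1$-dimensional $\F_{q^m}$-subspace $C \subseteq \F_{q^m}^m$. Applying the $\Fq$-isomorphism $\Gamma(\cdot)$, their images form an $\Fq$-basis of $\Gamma(C)$. Combined with the identity above, this gives
$$\ss_1(T_{m,m,m}) \;=\; \langle m_1(e_i,T_{m,m,m}) : 1\le i\le m\rangle \;=\; \Gamma(C),$$
so $T_{m,m,m}$ is a generator tensor for $\Gamma(C)$, and the argument also shows $T_{m,m,m}$ is $1$-nondegenerate (its $m$ first-slices are linearly independent). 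The only mild obstacle is keeping the index bookkeeping between polynomial degrees and tensor coordinates consistent; everything else is a direct translation via $f(\alpha)=0$.
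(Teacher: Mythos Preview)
Your proof is correct and follows essentially the same approach as the paper: identify the $i$-th first slice of $T_{m,m,m}$ as the matrix of the $\Fq$-linear map ``multiplication by $x^{i-1}$'' on $\Fq[x]/(f)$ in the monomial basis, and match it with $\Gamma(\alpha^{i-1}\cdot(1,\alpha,\ldots,\alpha^{m-1}))$. The paper phrases this more compactly via the companion matrix $M_f$, writing $\ss_1(T_{m,m,m})=\{g(M_f):g\in\Fq[x]_{<m}\}=\langle I,M_f,\ldots,M_f^{m-1}\rangle=\Gamma(C)$, which is exactly your slice-by-slice computation packaged in matrix-algebra language.
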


\begin{proof}Let $M_f$ denote the companion matrix of the polynomial $f$. Then the map $h \mapsto gh \mod f$  has an associated matrix $g(M_f)$ with
respect to the basis $\{1, x, \ldots, x^{m-1}\}$. Thus,
\begin{equation*}
\ss_1(T_{m,m,m})=\left\{g(M_f) : g \in \F_q[x]_{<m} \right\}=\langle I, M_f, \ldots, M_f^{m-1}\rangle=\Gamma(C). \qedhere
\end{equation*}
\end{proof}

As an immediate corollary, we have a similar statement for the one-dimensional Delsarte-Gabidulin codes in $\Fm^n$.

\begin{corollary}
Let $n\leq m$ be positive integers. The tensor $T_{m,n,m}$ is the generating tensor of a one-dimensional Delsarte-Gabidulin code in $\Fm^n$.
\end{corollary}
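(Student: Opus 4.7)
My plan is to adapt the argument of the preceding lemma, which handles the case $n=m$, so that it covers the slightly more general case $n \le m$. The key observation is that the construction of the previous lemma only used the first $n$ columns of the companion matrix acting on the quotient ring; restricting to polynomials of degree less than $n$ in the second slot essentially truncates the matrices accordingly.

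Concretely, I would proceed as follows. Let $f \in \Fq[x]$ be an irreducible polynomial of degree $m$ and let $\alpha$ be a root of $f$, so that $\Gamma = \{1,\alpha,\ldots,\alpha^{m-1}\}$ is a basis of $\Fm/\Fq$. Since $n \le m$, the elements $1,\alpha,\ldots,\alpha^{n-1}$ are $\Fq$-linearly independent, so by Definition \ref{def:Gabcode} the code
$$C = \mathcal G_{1,1}\bigl((1,\alpha,\ldots,\alpha^{n-1})\bigr) = \langle (1,\alpha,\ldots,\alpha^{n-1})\rangle_{\Fm} \subseteq \Fm^n$$
is a one-dimensional Delsarte--Gabidulin code in $\Fm^n$. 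I then want to show that $\ss_1(T_{m,n,m}) = \Gamma(C)$.

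For the identification of slices, I would work directly from the coordinate-tensor description. By definition of $T_{m,n,m}$, its $(i,j,\ell)$-entry is the coefficient of $x^{\ell-1}$ in $x^{i+j-2} \bmod f$, i.e.\ the $\ell$-th coordinate of $\alpha^{i+j-2}$ in the basis $\Gamma$. Therefore $m_1(e_i, T_{m,n,m})$ is the $n \times m$ matrix whose $j$-th row is the coordinate vector of $\alpha^{i+j-2}$ in $\Gamma$. On the other hand, for $g = \alpha^{i-1} \in \Fm$, the matrix $\Gamma(g \cdot (1,\alpha,\ldots,\alpha^{n-1}))$ has $j$-th row given by the coordinate vector of $g \cdot \alpha^{j-1} = \alpha^{i+j-2}$ in $\Gamma$. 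Hence
$$m_1(e_i, T_{m,n,m}) = \Gamma\bigl(\alpha^{i-1} \cdot (1,\alpha,\ldots,\alpha^{n-1})\bigr) \in \Gamma(C),$$
for all $i = 1,\ldots,m$.

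Finally, I would take $\Fq$-linear combinations. Since $\{\alpha^{i-1} : 1 \le i \le m\}$ is an $\Fq$-basis of $\Fm$, the $\Fq$-span of the vectors $\alpha^{i-1}\cdot(1,\alpha,\ldots,\alpha^{n-1})$ equals the full $\Fm$-span $C$, and applying the $\Fq$-linear map $\Gamma$ yields $\ss_1(T_{m,n,m}) = \Gamma(C)$. There is no real obstacle here beyond bookkeeping the indices: the argument is essentially the same as for $T_{m,m,m}$, but with the second-slot polynomial ring truncated to degree less than $n$, which is exactly what trims the corresponding slice matrices from $m \times m$ to $n \times m$.
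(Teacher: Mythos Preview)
Your argument is correct and in fact more direct than the paper's. You compute the coordinate tensor of $T_{m,n,m}$ explicitly and identify each slice $m_1(e_i,T_{m,n,m})$ with $\Gamma(\alpha^{i-1}\cdot(1,\alpha,\ldots,\alpha^{n-1}))$, which immediately gives $\ss_1(T_{m,n,m})=\Gamma(C)$ for the specific Delsarte--Gabidulin code $C=\langle(1,\alpha,\ldots,\alpha^{n-1})\rangle_{\Fm}$. The paper instead reduces to the square case: it introduces the truncation map $X$ from $\Fq[x]_{<m}$ to $\Fq[x]_{<n}$, observes that $\ss_1(T_{m,n,m})=\ss_1(T_{m,m,m})X=\Gamma(C')X$ with $C'\subseteq\Fm^m$ the code from the preceding lemma, and then argues indirectly that $\Gamma(C')X$ is (equivalent to) a one-dimensional $\Fm$-linear code in $\Fm^n$, invoking the fact that all such codes are Delsarte--Gabidulin. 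Your route avoids this detour entirely and names the code explicitly; the paper's route is more structural in that it visibly factors through the $n=m$ case, but at the cost of a less transparent identification.
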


\begin{proof}
Denote by $X$ the matrix associated to the map from $\F_q[x]_{<m}$ to $\F_q[x]_{<n}$ defined by 
$$\sum_{i=0}^{m-1}a_ix^i \longmapsto \sum_{i=0}^{n-1}a_ix^i,$$
with respect to the basis $\{1,x,\ldots, x^{m-1}\}$ and $\{1,x,\ldots, x^{n-1}\}$.
Then it is clear by definition that $\ss_1(T_{m,n,m}) = \ss_1(T_{m,m,m})X = \Gamma(C)X$, where $C$ is the  one-dimensional Delsarte-Gabidulin code in $(\Fm)^m$ generated by the vector $(1,\alpha, \ldots, \alpha^{m-1})$, and $\Gamma:=\left\{1,\alpha, \ldots, \alpha^{m-1}\right\}$. Now, for every $v \in C$ and every $i=0,\ldots,m-1$, $\Gamma(\alpha^iv)\in C$. Therefore, for every $\beta \in \Fm, v \in C$, we have $\Gamma(\beta v)X \in \Gamma(C)X$, which means that $\Gamma(C)X$ is equivalent to an $\Fm$-linear code in $\Fm^n$ of dimension $1$. All such codes are Delsarte-Gabidulin codes.
\end{proof}

Using the results above, we give an upper bound on the tensor rank of some special Delsarte-Gabidulin codes.

\begin{proposition}\label{prop:TRKGab} Let $n \leq m$ and let $q \geq m+n-2$. 
	For every $K\leq m$, there exists a $K$-dimensional Delsarte-Gabidulin code in $\Fm^n$ of tensor rank at most $\min\{ mn, K(m+n-1)\}$.

\end{proposition}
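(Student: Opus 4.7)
The plan is to exhibit a specific $K$-dimensional Delsarte--Gabidulin code that decomposes as an $\Fm$-sum of $K$ one-dimensional Delsarte--Gabidulin codes, and then apply the preceding corollary to each summand.

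First I would fix $\theta\in\Fm$ with $[\Fq(\theta):\Fq]=m$ and set $\alpha:=(1,\theta,\ldots,\theta^{n-1})\in\Fm^n$. Since $\theta$ has minimal polynomial of degree $m\geq n$ over $\Fq$, the coordinates of $\alpha$ are $\Fq$-linearly independent, so $\alpha$ is admissible in the sense of Definition~\ref{def:Gabcode}. A generic codeword of $\mathcal{G}_{K,1}(\alpha)$ has the form $\sum_{i=0}^{K-1} f_i\,(1,\theta^{q^i},\ldots,\theta^{(n-1)q^i})$ with $f_i\in\Fm$, so one obtains the decomposition
\[
\mathcal{G}_{K,1}(\alpha)=\sum_{i=0}^{K-1}\Fm\cdot\alpha^{(i)},\qquad\text{where }\alpha^{(i)}:=\bigl(1,\theta^{q^i},(\theta^{q^i})^2,\ldots,(\theta^{q^i})^{n-1}\bigr).
\]

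Next I would observe that, for each $i$, the Galois conjugate $\theta^{q^i}$ has the same minimal polynomial as $\theta$ over $\Fq$, so the coordinates of $\alpha^{(i)}$ are $\Fq$-linearly independent and $\Fm\cdot\alpha^{(i)}$ is a one-dimensional Delsarte--Gabidulin code of precisely the form covered by the preceding corollary. Consequently each summand admits $T_{m,n,m}$ as a generating tensor, and by Proposition~\ref{prop:trkTMNK} together with the hypothesis $q\geq m+n-2$ its tensor rank is at most $m+n-1$. By Proposition~\ref{prop:equaltensor}, each summand is therefore contained in the span of at most $m+n-1$ rank-one matrices in $\Fq^{n\times m}$.

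Taking the union of these $K$ spanning sets shows that $\mathcal{G}_{K,1}(\alpha)$ is contained in the $\Fq$-span of at most $K(m+n-1)$ rank-one matrices, so Proposition~\ref{prop:equaltensor} yields $\trk(\mathcal{G}_{K,1}(\alpha))\leq K(m+n-1)$. The complementary bound $\trk(\mathcal{G}_{K,1}(\alpha))\leq mn$ will be immediate, since $\Fq^{n\times m}$ is itself spanned by the $mn$ standard rank-one matrices $e_i\otimes e_j$. I expect the main conceptual step to be verifying that the decomposition $\mathcal{G}_{K,1}(\alpha)=\sum_i\Fm\cdot\alpha^{(i)}$ indeed produces standard one-dimensional Delsarte--Gabidulin codes to which the corollary directly applies; once this is in place, the bound follows by collecting rank-one generators.
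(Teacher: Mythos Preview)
Your proposal is correct and follows essentially the same route as the paper: choose the evaluation vector $\alpha=(1,\theta,\ldots,\theta^{n-1})$ for a generator $\theta$ of $\Fm$, decompose $\mathcal{G}_{K,1}(\alpha)$ as the $\Fm$-span of the $K$ one-dimensional Delsarte--Gabidulin codes $\Fm\cdot\alpha^{(i)}$, invoke the preceding corollary and Proposition~\ref{prop:trkTMNK} to bound each summand's tensor rank by $m+n-1$, and collect rank-one generators. The only cosmetic difference is that the paper makes explicit the basis $\Gamma_i=\{1,\theta^{q^i},\ldots,\theta^{(m-1)q^i}\}$ under which $\Gamma_i(\Fm\cdot\alpha^{(i)})=\ss_1(T_{m,n,m})$, whereas you appeal directly to the corollary; both amount to the same thing once Proposition~\ref{prop:tensorvectorcode} guarantees basis-independence of the tensor rank.
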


\begin{proof}
It is clear that every code in $\Fm^n$ has tensor rank at most $mn$.
Choose as a $K$-dimensional Delsarte-Gabidulin code the code $C\subseteq \Fm^n$ defined as an evaluation code on the points $1,\alpha, \ldots, \alpha^{n-1}$, where $\alpha$ is a primitive element of $\Fm$ over $\Fq$.
Therefore, the code $C$ is the $\Fq$-span of $K$ one-dimensional Delsarte-Gabidulin codes of the form $C_i=\langle (1,\alpha^{q^i}, \alpha^{2q^i}, \ldots, \alpha^{(n-1)q^i})\rangle$. For each $i=1,\ldots, K$, consider the basis $$\Gamma_i=\{1,\alpha^{q^i}, \alpha^{2q^i}, \ldots, \alpha^{(n-1)q^i}\}.$$ Then, $\Gamma_i(C_i)=\ss_1(T_{m,n,m})$, which by Proposition \ref{prop:trkTMNK}, has tensor rank exactly $m+n-1$, and the result follows.
\end{proof}

\begin{remark}
The rank of the tensor $T_{m,m,m}$ for $q<2n-1$ has been studied in connection with the algebraic complexity of multiplication in $\Fm$. This problem remains open in general. We refer to \cite{LavPavZan} for the case $n=3$, and \cite{Ballet} for bounds in the case $q=2$.
\end{remark}

\subsection{Codes with small tensor rank}
In this subsection we give some constructions of codes with tensor rank bounded by above. In order to do that, we rely on the results given in the previous subsection about the tensor rank of Delsarte-Gabidulin codes. Before proceeding with these constructions, we give an auxiliary lemma.

\begin{lemma}\label{lem:inductiveTR}
 Let $\C$ be an $\Fqkd$ code with  tensor rank $R$. Then there exists a subcode $\mathcal D\subset \C$ such that $\dim(\mathcal D)=k-1$, $\drk(\mathcal D)\geq d$ and $\mathrm{trk}(\mathcal D)\leq R-1$.
\end{lemma}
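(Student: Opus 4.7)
The plan is to start from a minimal rank form of a generator tensor of $\C$ and restrict the encoding map to a hyperplane chosen so that one term in that form is killed. By Proposition \ref{prop:equaltensor} every generator tensor of $\C$ has tensor rank exactly $R$, so I fix a generator tensor $X$ of $\C$ and write it in minimal rank form $X = \sum_{r=1}^{R} u_r \otimes v_r \otimes w_r$.

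First, I would observe that no $u_r$ can be zero: otherwise the $r$th term would vanish and $X$ would be a sum of fewer than $R$ simple tensors, contradicting $\trk(X)=R$. Pick, say, $r=1$ and set $H := u_1^{\perp}\subseteq \F_q^k$, which is then a genuine hyperplane. Define
\[
\mathcal{D} := E_X(H) = \{m_1(a,X) : a\in H\}.
\]
Since $\ss_1(X)=\C$, the encoder $E_X : \F_q^k\to\C$ is an $\F_q$-isomorphism, so $\dim_{\F_q}(\mathcal D)=k-1$ and $\mathcal D\subsetneq \C$.

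Next, I would verify the two remaining properties. For every $a \in H$ one has $a\cdot u_1 = 0$, hence
\[
m_1(a,X)=\sum_{r=2}^{R}(a\cdot u_r)\,(v_r\otimes w_r),
\]
showing that $\mathcal D$ is contained in the $\F_q$-span of the $R-1$ rank-$1$ matrices $v_2\otimes w_2, \ldots, v_R\otimes w_R$. Proposition \ref{prop:trkcharact} then immediately yields $\trk(\mathcal{D})\leq R-1$. The bound $\drk(\mathcal D)\geq d$ is automatic from $\mathcal D\subseteq \C$.

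The only delicate point is the non-vanishing of each $u_r$ in a minimal rank form, which is precisely what ensures that $H=u_1^\perp$ is a proper hyperplane, so that restricting $E_X$ to $H$ reduces the dimension by exactly one. Everything else reduces to the definitions and a single invocation of Proposition \ref{prop:trkcharact}, so no substantial obstacle is expected.
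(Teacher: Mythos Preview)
Your proof is correct and is essentially the same construction as the paper's: both produce a codimension-one subcode of $\C$ lying in the span of $R-1$ of the rank-one matrices $A_r = v_r\otimes w_r$ from a minimal decomposition. The paper phrases this via the associated block code $C_{\mathcal A}$, putting its generator matrix in systematic form $(I_k\mid M)$ and deleting the first row, which is exactly your restriction of $E_X$ to the hyperplane $u_1^\perp$ once one identifies $u_1$ with the first column of that generator matrix; your argument is marginally more direct in that you only need $u_1\neq 0$ rather than invoking a systematic form.
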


\begin{proof}
 Let $\C$ be a rank-metric code with tensor rank $R$, and let $\A=\{A_1,\ldots,A_R\}$ be an $R$-basis for $\C$. Consider the code $C_{\mathcal A}$. 
 Without loss of generality we may assume that $C_{\mathcal A}$ is systematic in the first $k$ coordinates and so it has a  generator matrix of the form
$$G=(I_k \mid M).$$
Now let $\tilde{ D}$ be the subcode of $C_\A$ generated by all but the first row of $G$. 
The code 
$$\mathcal D:=\phi_{\mathcal A}^{-1}(\tilde{ D})$$
clearly has dimension $k-1$ and minimum distance $\geq d$. Moreover, since $\tilde{ D} \subset \langle e_2,\ldots,e_R\rangle$, we have $\mathcal D \subseteq \langle A_2,\ldots,A_R \rangle$. Therefore
$\trk(\mathcal D)\leq R-1$.
\end{proof}

\begin{proposition}\label{prop:corner}
Let $k,d,n,m$ be positive integers with $d\leq n \leq m$ and let $\rho=\min\{s \in \mathbb N : s(s-d+1)\geq k\}$. If $\rho \leq n$ then there exists an $\Fq$-$[n\times m, k, \ge d]$ code $\C$ such that
$$\trk(\C) \leq k+\min\{\rho(d-1), (\rho-d+1)(\rho-1)\},$$
provided that $q\geq 2\rho-2$.
\end{proposition}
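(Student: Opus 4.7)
The plan is to build the desired code starting from a Delsarte--Gabidulin code in the square matrix space $\F_q^{\rho\times\rho}$, shrink it down to dimension $k$ using Lemma~\ref{lem:inductiveTR}, and finally transfer it into $\F_q^{n\times m}$ by a trivial zero-padding argument. The definition of $\rho$ guarantees that $\rho(\rho-d+1)\ge k$ and $\rho\ge d$, so a code of dimension $\rho-d+1$ over $\F_{q^\rho}$ with minimum rank distance $d$ exists and has $\F_q$-dimension at least $k$; this is the natural starting point.

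Concretely, let $\mathcal G \subseteq \F_{q^\rho}^\rho$ be a Delsarte--Gabidulin code of $\F_{q^\rho}$-dimension $K=\rho-d+1$ (so $\drk(\mathcal G)=d$ and $\dim_{\F_q}(\mathcal G)=\rho(\rho-d+1)\ge k$). The hypothesis $q\ge 2\rho-2=m'+n'-2$ (for $m'=n'=\rho$) is exactly what Proposition~\ref{prop:TRKGab} requires, so the proposition may be invoked for the Delsarte--Gabidulin code inside $\F_q^{\rho\times\rho}$ to obtain
\[
\trk(\mathcal G)\;\le\;\min\bigl\{\rho^2,\;(\rho-d+1)(2\rho-1)\bigr\}.
\]

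Now apply Lemma~\ref{lem:inductiveTR} iteratively to $\mathcal G$ a total of $\rho(\rho-d+1)-k$ times. Each application decreases the $\F_q$-dimension by $1$, decreases the tensor rank by at least $1$, and preserves the lower bound $d$ on the minimum rank distance. The resulting subcode $\mathcal C'\subseteq \F_q^{\rho\times\rho}$ then has dimension exactly $k$, minimum distance at least $d$, and tensor rank at most
\[
\min\{\rho^2,(\rho-d+1)(2\rho-1)\}\;-\;\bigl(\rho(\rho-d+1)-k\bigr)\;=\;k+\min\{\rho(d-1),(\rho-d+1)(\rho-1)\},
\]
the two cases of the minimum being obtained by a direct simplification (subtracting $\rho(\rho-d+1)$ from each argument of the min).

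Finally, since $\rho\le n\le m$, the code $\mathcal C'$ can be embedded in $\F_q^{n\times m}$ by placing every matrix of $\mathcal C'$ as the upper-left $\rho\times\rho$ block of an $n\times m$ matrix and padding the remaining entries with zeros. If $X'=\sum_{r=1}^R u_r\otimes v_r\otimes w_r \in \F_q^{k\times\rho\times\rho}$ is a minimal-rank form of a generator tensor of $\mathcal C'$, then extending each $v_r\in\F_q^\rho$ and each $w_r\in\F_q^\rho$ to $\tilde v_r\in\F_q^n$ and $\tilde w_r\in\F_q^m$ by zero-padding yields a tensor $X=\sum_{r=1}^R u_r\otimes\tilde v_r\otimes\tilde w_r$ whose slice space $\ss_1(X)\subseteq\F_q^{n\times m}$ has the same dimension and the same minimum rank distance as $\mathcal C'$, while $\trk(X)\le R$. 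The code $\C:=\ss_1(X)$ therefore satisfies all the required properties. The only non-routine input is the tensor-rank bound on $\mathcal G$, which is handled by Proposition~\ref{prop:TRKGab}; everything else is bookkeeping.
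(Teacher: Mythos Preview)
Your proof is correct and follows essentially the same approach as the paper: start from a Delsarte--Gabidulin code in $\F_{q^\rho}^\rho$, invoke Proposition~\ref{prop:TRKGab} for the tensor-rank bound, iterate Lemma~\ref{lem:inductiveTR} down to dimension $k$, and embed into $\F_q^{n\times m}$. The only cosmetic difference is that the paper embeds first and then shrinks, whereas you shrink first and then embed; this is immaterial since zero-padding preserves dimension, minimum rank distance, and tensor rank, as you correctly justify.
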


\begin{proof}
Let $K=\rho-d+1$. There exists a Delsarte-Gabidulin code $C \subseteq \F_{q^\rho}^\rho$ of dimension $K$, minimum distance $d$, and by Proposition \ref{prop:TRKGab}, tensor rank at most $\min\{\rho^2, K(2\rho-1)\}=\min\{\rho^2,(\rho-d+1)(2\rho-1)\}$. Let $\Gamma'$ be a basis for $\F_{q^\rho}/\Fq$. Then, the code $\Gamma'(C)$ is an $\Fq$-$[\rho \times \rho, K\rho]$ code and can be embedded in $\Fq^{n\times m}$. Applying Lemma \ref{lem:inductiveTR} $K\rho-k$ times, we get an $\Fq$-$[n\times m, k, \ge d]$ code $\C$ with  $\trk( \C)\leq k+\min\{\rho(d-1),(\rho-d+1)(\rho-1)\}$.
\end{proof}

We now present a refinement of the previous construction, which yields a better upper bound.

\begin{theorem}
	\label{thm:smalltensor}
Let $k,d,n,m$ be positive integers with $d\leq n \leq m$, and $k \leq m(n-d+1)$.  Then there exists an $\Fq$-$[n\times m, k, \ge d]$ code $\C$ such that 
$$\trk(\C) \leq k +\min \left\{\left(\left\lceil\frac{k}{m}\right\rceil+d-1\right)(d-1), \left\lceil \frac{k}{m}\right\rceil\left(\left\lceil\frac{k}{m}\right\rceil+d-2\right)\right\},$$
provided  $q \geq m+\left\lceil\frac{k}{m}\right\rceil+d-3$.
\end{theorem}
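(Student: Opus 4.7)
The plan is to build $\C$ by trimming a Delsarte--Gabidulin code via Lemma~\ref{lem:inductiveTR}. Set $K:=\lceil k/m\rceil$. The hypothesis $k\leq m(n-d+1)$ rearranges to $K\leq n-d+1$, hence $K+d-1\leq n\leq m$; thus any code sitting inside $\F_q^{(K+d-1)\times m}$ will embed into $\F_q^{n\times m}$ after adjoining zero rows.

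First I apply Proposition~\ref{prop:TRKGab} with vector length $K+d-1$ and extension degree $m$; the assumption $q\geq m+K+d-3$ is exactly $q\geq m+(K+d-1)-2$. This yields a $K$-dimensional Delsarte--Gabidulin code $C\subseteq\F_{q^m}^{K+d-1}$, which is MRD (so $\drk(C)=d$) and satisfies
$$\trk(C)\leq \min\bigl\{(K+d-1)m,\ K(m+K+d-2)\bigr\}.$$
Fixing a basis of $\F_{q^m}/\F_q$ realizes $C$ as a matrix code $\Gamma(C)\subseteq\F_q^{(K+d-1)\times m}$ of $\F_q$-dimension $Km$, minimum distance $d$, and with the same tensor-rank bound; padding with zero rows embeds it into $\F_q^{n\times m}$ without altering any of these parameters. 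I then apply Lemma~\ref{lem:inductiveTR} iteratively $Km-k$ times to obtain a subcode $\C$ of dimension exactly $k$, minimum distance still $\geq d$, and tensor rank reduced by at most one per step; thus
$$\trk(\C)\leq \min\bigl\{(K+d-1)m,\ K(m+K+d-2)\bigr\}-(Km-k)=k+\min\bigl\{m(d-1),\ K(K+d-2)\bigr\}.$$
This already recovers the second term $K(K+d-2)$ in the stated minimum directly, together with the minimum-distance and dimension claims.

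The main obstacle is sharpening the first term from $m(d-1)$ to $(K+d-1)(d-1)$, which is tighter when $m>K+d-1$. I expect to achieve this by running the argument in a smaller \emph{square} ambient: in the regime $K(K+d-1)\geq k$, start instead with a $K$-dimensional Delsarte--Gabidulin code $C'\subseteq\F_{q^{K+d-1}}^{K+d-1}$, whose matrix realization lies in $\F_q^{(K+d-1)\times(K+d-1)}$ with trivial tensor-rank bound $(K+d-1)^2$; trimming from $\F_q$-dimension $K(K+d-1)$ down to $k$ via Lemma~\ref{lem:inductiveTR} gives tensor rank at most $k+(K+d-1)^2-K(K+d-1)=k+(K+d-1)(d-1)$. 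The residual regime $k>K(K+d-1)$ requires a hybrid of the two constructions, carefully splitting the code into a Gabidulin piece in the square ambient plus 1-dimensional Gabidulin summands in $\F_{q^m}^{K+d-1}$ for the excess. Taking the smaller of the two tensor-rank estimates across all parameters yields the bound claimed in the theorem; note that the condition $q\geq m+K+d-3$ implies $q\geq 2(K+d-1)-2$, so both applications of Proposition~\ref{prop:TRKGab} are legitimate. The delicate point, and where I expect most of the work to lie, is the bookkeeping in this hybrid construction so that the two upper bounds are jointly realized across the full parameter range.
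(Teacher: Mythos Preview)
Your main construction---start with a $K$-dimensional Delsarte--Gabidulin code in $\F_{q^m}^{K+d-1}$, realize it as an $\F_q$-$[(K+d-1)\times m,\,Km,\,d]$ code, embed into $\F_q^{n\times m}$, and trim via Lemma~\ref{lem:inductiveTR}---is exactly the paper's approach, and your arithmetic is correct: it yields
\[
\trk(\C)\ \le\ k+\min\bigl\{m(d-1),\ K(K+d-2)\bigr\}.
\]
You are also right that this does not give the first term $(K+d-1)(d-1)$ of the stated minimum when $m>K+d-1$.

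In fact the paper's own proof contains precisely this slip. With $\mu=K+d-1$ it writes
\[
m\mu-(\mu-d+1)m\ =\ \mu(d-1),
\]
but the left-hand side equals $m(d-1)$. So the paper's argument, as written, only establishes the bound you obtained in your first paragraph; it does not prove the sharper term $(K+d-1)(d-1)$ either.

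Your square-ambient idea does recover $(K+d-1)(d-1)$ in the regime $K(K+d-1)\ge k$: this is essentially Proposition~\ref{prop:corner} with $\rho=K+d-1$, and the condition $q\ge 2(K+d-1)-2$ is indeed implied by $q\ge m+K+d-3$. However, the residual regime $k>K(K+d-1)$ genuinely overlaps the region where $(K+d-1)(d-1)<K(K+d-2)$---for instance $d=2$, $m=5$, $k=13$, so $K=3$, $K(K+d-1)=12<13$, and the theorem claims $\trk\le 17$ while both your rectangular construction and the paper's computation give only $18$. Your hybrid sketch for this case is not a proof, and the paper offers nothing further here. In short: your proposal matches the paper's method and is at least as correct as the paper's proof; the discrepancy you flagged between the construction and the theorem statement is a genuine gap in the paper itself.
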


\begin{proof}
Let $\mu=\min\{s \in \mathbb N : m(s-d+1)\geq k\}=\lceil \frac{k}{m}\rceil+d-1$. By hypothesis, $\mu \leq n$.  Let $K=\mu-d+1$. There exists a Delsarte-Gabidulin code 
$C \subseteq \F_{q^m}^\mu$ of dimension $K$, minimum distance $d$ and, by Proposition \ref{prop:TRKGab}, tensor rank at most  $\min\{m\mu, K(m+\mu-1)\}=\min\{m \mu,(\mu-d+1)(m+\mu-1)\}$. Let $\Gamma$ be a basis for $\F_{q^m}/\Fq$. Then the code $\Gamma(C)$ is an $\Fq$-$[\mu \times m, Km]$ code, which can be embedded in $\Fq^{n\times m}$. Again, we iteratively apply Lemma~\ref{lem:inductiveTR} $Km-k$ times to get a$\Fq$-$[n\times m, k, \ge d]$ code $\mathcal C$ such that
\begin{align*}
\trk(\mathcal C) &\leq \min\{ m\mu,(\mu-d+1)(m+\mu-1)\}-(\mu-d+1)m +k\\
                          & =k+ \min\{\mu(d-1) ,(\mu-d+1)(\mu-1)\} \\
                          &  =  k +\min \left\{\left(\left\lceil\frac{k}{m}\right\rceil+d-1\right)(d-1), \left\lceil \frac{k}{m}\right\rceil\left(\left\lceil\frac{k}{m}\right\rceil+d-2\right)\right\}. \qedhere
\end{align*} 
\end{proof}

\begin{remark}
	In Proposition \ref{prop:corner}, the essential idea was to take a Delsarte-Gabidulin code whose elements are representable as square matrices, embed it in $\fq^{n \times m}$ and iteratively obtain subcodes with decreasing tensor rank.
	In Theorem \ref{thm:smalltensor} we applied the same principle, but this time chose a Delsarte-Gabidulin code whose elements are representable as rectangular matrices.
	In the first case the initial code is a subspace of $\Fq^{\rho \times \rho}$, while in the second the code is a subspace of $\Fq^{\mu \times m}$.
	The fact that the latter construction gives a smaller upper bound on the tensor rank can be easily verified since $ \left\lceil\frac{k}{m}\right\rceil= \mu-d+1$ and $\mu \leq \rho$.
\end{remark}

\begin{remark}
In fact we stated this result in the most general case, even though we are more interested in the those parameters $k,d,n,m$ that are not covered by the constructions of MTR codes given at the beginning of this section. As a consequence of this result, we get the existence of MTR codes for the same parameters as those arising in Theorem~\ref{thm:RSconstruction}, even though the constructions are quite different.
\end{remark}

\begin{corollary}\label{cor:TOconstruction}
  Let $d,k,n,m$ be positive integers with $d\leq n \leq m$ and $k\leq m$. Then there exists an $\Fqkd$ MTR code $\C$, provided that $q\geq m+d-2$
\end{corollary}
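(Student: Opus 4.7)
The plan is to deduce this corollary directly from Theorem \ref{thm:smalltensor} together with the tensor-rank bound of Corollary \ref{coro:trb}. The key observation is that under the hypothesis $k \le m$, we have $\lceil k/m \rceil = 1$, and substituting this value into the upper bound of Theorem \ref{thm:smalltensor} causes the two quantities in the minimum to simplify drastically: the first becomes $(1+d-1)(d-1) = d(d-1)$ and the second becomes $1 \cdot (1 + d - 2) = d-1$, so the minimum equals $d-1$.

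First I would verify that the hypotheses of Theorem \ref{thm:smalltensor} are met. The condition $k \le m(n-d+1)$ is implied by $k \le m$ and $d \le n$ (which gives $n-d+1 \ge 1$). The condition on the field size is $q \ge m + \lceil k/m \rceil + d - 3 = m + 1 + d - 3 = m + d - 2$, which is exactly the hypothesis of the corollary. Applying Theorem \ref{thm:smalltensor} therefore yields an $\Fq\text{-}[n \times m, k, \ge d]$ code $\C$ with $\trk(\C) \le k + d - 1$.

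It remains to upgrade ``minimum distance at least $d$'' to ``minimum distance exactly $d$'' and conclude that $\C$ is MTR. For this I invoke the tensor-rank bound (Corollary \ref{coro:trb}), which gives
$$ \trk(\C) \ge \dim(\C) + \drk(\C) - 1 = k + \drk(\C) - 1. $$
Combining with $\trk(\C) \le k + d - 1$ forces $\drk(\C) \le d$, and together with $\drk(\C) \ge d$ from the construction we obtain $\drk(\C) = d$. The two inequalities on $\trk(\C)$ are then equalities, so $\trk(\C) = k + d - 1$, which is precisely the MTR condition. Hence $\C$ is an $\Fqkd$ MTR code.

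There is no real obstacle here; the entire content of the corollary is the specialization $\lceil k/m \rceil = 1$ in Theorem \ref{thm:smalltensor}, and the MTR property follows automatically by sandwiching $\trk(\C)$ between the construction's upper bound and the general tensor-rank lower bound.
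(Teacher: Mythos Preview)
Your proof is correct and follows essentially the same approach as the paper: specialize Theorem \ref{thm:smalltensor} with $\lceil k/m\rceil=1$ to obtain $\trk(\C)\le k+d-1$, then invoke the tensor-rank bound. You are in fact more careful than the paper in explicitly arguing that the minimum distance must equal $d$ (rather than merely $\ge d$) via the sandwich $k+\drk(\C)-1\le\trk(\C)\le k+d-1$.
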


\begin{proof}
 If $k\leq m$, then by Theorem \ref{thm:smalltensor} we get an $\Fqkd$-code $\C$ such that $\trk(\C)\leq k+d-1$ and we deduce the result by the tensor rank bound.
\end{proof}

\begin{remark}
	Observe in Corollary \ref{cor:TOconstruction} we require $q \geq m+d-2$, whereas the construction provided by Theorem \ref{thm:RSconstruction} depends on the existence of a Cauchy code of length $k+d-1$, which we always have for $q\geq k+d-2$.
\end{remark}

\section{Generalized  Ranks of a Code}
\label{sec:6}

 In the sequel, we denote by $\mU$ the set of subspaces of $\F_q^{n \times m}$ that are generated by matrices of rank one. 
 
 \begin{definition}
 Let $\mC$ be an $\Fqk$ code with $k \ge 1$, and let $1 \le r \le k$ be an integer. The \textbf{$r$-th generalized tensor rank} of $C$ is 
$$d_r(\mC)=\min\{\dim(U) : U \in \mU, \ \dim(\mC \cap U) \ge r\}.$$
\end{definition}

It is easy to check that the set of generalized tensor ranks form a code invariant.

\begin{proposition}
Equivalent codes have the same generalized tensor ranks.
\end{proposition}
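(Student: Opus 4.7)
The plan is to exploit the fact that any equivalence $\varphi$ between matrix codes is a rank-preserving $\F_q$-linear bijection of $\mat$, and hence it acts naturally on the set $\mU$ of subspaces generated by rank-one matrices, preserving all the quantities that enter the definition of $d_r$.

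First I would recall that, by the classification of $\F_q$-linear isometries of $(\mat, d)$ cited earlier in the excerpt, any equivalence $\varphi$ between $\C$ and $\C'$ has the form $X \mapsto AXB$ (or $X \mapsto AX^\top B$ when $m = n$) for some $A \in \GL(n,q)$, $B \in \GL(m,q)$. In either case, $\varphi$ is an $\F_q$-linear bijection of $\mat$ that preserves the rank of every matrix. In particular, $\varphi$ sends rank-one matrices to rank-one matrices. Since $\varphi$ is a linear bijection, it sends a subspace spanned by rank-one matrices to a subspace spanned by (the images of those) rank-one matrices, and therefore restricts to a bijection $\mU \to \mU$ preserving dimensions.

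Next, for any $U \in \mU$ I would use $\F_q$-linearity and bijectivity of $\varphi$ to obtain
\[
\varphi(\C \cap U) = \varphi(\C) \cap \varphi(U) = \C' \cap \varphi(U),
\]
so that $\dim(\C \cap U) = \dim(\C' \cap \varphi(U))$ and $\dim(U) = \dim(\varphi(U))$. Consequently, $U$ is admissible in the minimum defining $d_r(\C)$ if and only if $\varphi(U)$ is admissible in the minimum defining $d_r(\C')$, and both admissible choices have the same dimension. As $\varphi$ induces a bijection on $\mU$, the two minima are taken over sets of equal cardinality with matched values, giving $d_r(\C) = d_r(\C')$ for every $r$.

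There is no real obstacle here; the only small point to be careful about is to verify that the image of a rank-one-generated subspace under $\varphi$ is again rank-one-generated, which follows immediately from rank preservation together with the fact that the image of a spanning set is a spanning set. The argument works identically in the transpose case since transposition also preserves rank and is $\F_q$-linear.
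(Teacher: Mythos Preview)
Your argument is correct and is exactly the natural verification the paper has in mind; in fact the paper does not give a proof at all, merely noting that ``it is easy to check that the set of generalized tensor ranks form a code invariant.'' Your explicit use of the classification of isometries to see that $\varphi$ restricts to a dimension-preserving bijection of $\mU$, together with $\varphi(\C\cap U)=\C'\cap\varphi(U)$, is precisely the routine check being alluded to.
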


The next result summarizes the main properties of the generalized  ranks and explains the terminology. It also gives a new proof of the tensor rank bound (Corollary~\ref{coro:trb}).

\begin{theorem}
 Let $\C \subseteq \F_q^{n \times m}$ be a $k$-dimensional code with $k \ge 1$. The following hold.
 \begin{enumerate}[label=(\arabic*)]
 \item $d_1(\C)=\drk(\C)$. \label{1}
 \item $d_k(\C)=\trk(\C)$. \label{2}
 \item For all $1 \le r \le \min\{k,mn-1\}$ we have $d_r(\C) < d_{r+1}(\C)$. \label{3}
 \item For all $1 \le r \le k$ we have $d_r(\C) \ge \drk(\C)+r-1$. In particular,
$\trk(\C) \ge \drk(\C)+k-1$. \label{4}
 \item For all $1 \le r \le k$ we have $d_r(\C) \le \trk(\C)-k+r$.  \label{5}
 \end{enumerate}
\end{theorem}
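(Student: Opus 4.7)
The plan is to prove (1) and (2) directly from the definitions, to establish the strict monotonicity in (3) as the one genuinely structural step, and then to deduce (4) and (5) as formal consequences.

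First I would handle (1). For the upper bound, given any nonzero codeword $M \in \C$ of rank $\drk(\C)$, write $M$ as a sum of $\drk(\C)$ rank-one matrices and let $U$ be their span; then $U \in \mU$, $\dim U \le \drk(\C)$, and $M \in \C \cap U$, so $d_1(\C) \le \drk(\C)$. For the reverse inequality, any $U \in \mU$ is the span of rank-one matrices, so a nonzero element of $\C \cap U$ has rank at most $\dim U$, giving $\drk(\C) \le d_1(\C)$. For (2), I would observe that the constraint $\dim(\C \cap U) \ge k$ forces $\C \subseteq U$, so $d_k(\C)$ is exactly the minimum dimension of a space in $\mU$ containing $\C$; Proposition \ref{prop:equaltensor} identifies this with $\trk(\C)$.

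The main content is (3). The idea is: take $U \in \mU$ realizing $d_{r+1}(\C)$, with $\dim U = s$ and $\dim(\C \cap U) \ge r+1$. Since $U$ is spanned by rank-one matrices, I can extract a basis $A_1, \ldots, A_s$ of $U$ consisting of rank-one matrices. Setting $U' := \langle A_1, \ldots, A_{s-1}\rangle$ gives $U' \in \mU$ with $\dim U' = s-1$. Applying the dimension formula inside $U$ to the subspaces $\C \cap U$ and $U'$, one gets
$\dim(\C \cap U') = \dim((\C \cap U) \cap U') \ge \dim(\C \cap U) + \dim U' - \dim U \ge r$,
and hence $d_r(\C) \le s - 1 < s = d_{r+1}(\C)$.

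Parts (4) and (5) are then formal consequences. Iterating (3) downwards from $r$ and combining with (1) yields $d_r(\C) \ge d_1(\C) + (r-1) = \drk(\C) + r - 1$, whose specialization $r = k$ recovers the tensor-rank bound. Iterating (3) upwards from $r$ to $k$ and combining with (2) yields $\trk(\C) = d_k(\C) \ge d_r(\C) + (k-r)$, i.e. $d_r(\C) \le \trk(\C) - k + r$. The only delicate point is the extraction of a rank-one basis of $U$ in step (3), but this is standard Steinitz exchange applied to any set of rank-one generators of $U$, so the main obstacle is really one of bookkeeping rather than substance.
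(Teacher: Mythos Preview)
Your proof is correct and follows essentially the same approach as the paper's: the paper also proves (1) by exhibiting the span of a rank-one decomposition of a minimum-rank codeword, invokes Proposition~\ref{prop:equaltensor} for (2), proves (3) by passing to a hyperplane $U' \in \mU$ of an optimal $U$ and applying the dimension formula, and deduces (4) and (5) formally from (1)--(3). The only cosmetic difference is that the paper simply asserts the existence of such a hyperplane $U' \in \mU$, whereas you explicitly construct it by dropping one element from a rank-one basis obtained via Steinitz exchange; you are also slightly more explicit about the lower bound in (1).
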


\begin{proof}
\begin{enumerate}[label=(\arabic*)]
\item Let $M \in \C$ be a matrix with $d=\drk(\C)=\rk(M)$. Write $M=M_1+ \cdots +M_d$, where each $M_i \in \F_q^{n \times m}$ has rank one. Then $U=\langle M_1,\ldots,M_d \rangle$ attains the minimum in the definition of $d_1(\C)$.

\item This follows from Proposition \ref{prop:equaltensor}.

\item Let $U \in \mU$ with $\dim(\C \cap U) \ge r+1$ and $\dim(U)=d_{r+1}(\C)$. Let
$U' \subseteq U$ be a hyperplane of $U$ with $U' \in \mU$. Since $U'+(U \cap \C) \subseteq U$, we have
\begin{eqnarray*}
\dim(U' \cap \C) &=& \dim(U' \cap (U \cap \C)) \\ &=& \dim(U') + \dim(U \cap \C) - \dim(U'+(U \cap \C)) \\
&\ge& \dim(U') +(r+1)-\dim(U) \\ &=&
\dim(U)-1+(r+1)-\dim(U) \\ 
&=& r.
\end{eqnarray*}
By definition, this implies that $d_r(\C) \le \dim(U') = d_{r+1}(\C)-1$.
\item This follows combining \ref{1}, \ref{2}, and \ref{3}.
\item This follows from \ref{2} and \ref{3}. \qedhere
\end{enumerate}
\end{proof}

An interesting application of generalized tensor ranks is the distinction of inequivalent codes, as the following example shows.

\begin{example} \label{exgr1}
Let $q=2$ and $n=m=4$. Let $\mC_1$be the $\mathbb{F}_2$-$[4 \times 4]$ code generated by the matrices
\[
\npmatrix{  1& 0& 0& 0\\ 0& 1& 0& 0\\ 0& 0& 1& 0\\ 0& 0& 0& 1 },
\npmatrix{  0& 1& 0& 0\\ 0& 1& 1& 0\\ 0& 0& 1& 1\\ 1& 0& 1& 0 },
\npmatrix{  0& 0& 1& 0\\ 0& 0& 1& 1\\ 1& 0& 1& 0\\ 1& 1& 0& 1 },
\npmatrix{  0& 0& 0& 1\\ 1& 0& 1& 0\\ 1& 1& 0& 1\\ 0& 1& 0& 1 },
\]
and let $\mC_2$ be the $\mathbb{F}_2$-$[4 \times 4]$ code generated by the matrices
\[
\npmatrix{ 1& 0& 0& 0\\ 0& 1& 0& 0\\ 0& 0& 1& 0\\ 0& 0& 0& 1 },
\npmatrix{ 0& 1& 0& 0\\ 1& 1& 0& 1\\ 0& 0& 1& 1\\ 1& 1& 1& 1 },
\npmatrix{ 0& 0& 1& 0\\ 0& 1& 0& 1\\ 1& 0& 1& 0\\ 0& 1& 0& 0 },
\npmatrix{ 0& 0& 0& 1\\ 0& 1& 1& 1\\ 1& 1& 0& 1\\ 1& 0& 0& 1 }.
\]
It can be checked that both $\mC_1$ and $\mC_2$ are MRD codes of dimension 4, and that their generalized tensor ranks are
$(4,6,8,9)$ and $(4,6,7,9)$, respectively. In particular, $\mC_1$ and $\mC_2$ are not equivalent.
\end{example}

A natural question is whether generalized tensor ranks satisfy a duality property analogous to that of generalized tensor rank weights \cite[Corollary 38]{albgen}. More generally, one may ask if the generalized tensor ranks of a code $\mC$ determine those of the dual code $\mC^\perp$. The answer to this question is negative in general. In the following example, we exhibit two codes that have the same generalized tensor ranks, but whose duals have different generalized tensor ranks.

\begin{example}\label{exgr2}
Let $q=2$ and $n=m=4$. Let $\mC_2$ be the $\mathbb{F}_2$-$[4 \times 4]$ code defined in Example~\ref{exgr1}, and let $\mC_3$ be the $\mathbb{F}_2$-$[4 \times 4]$ code generated by the matrices
\[
\npmatrix{ 1& 0& 0& 0\\ 0& 1& 0& 0\\ 0& 0& 1& 0\\ 0& 0& 0& 1 },
\npmatrix{ 0& 1& 0& 0\\ 0& 1& 1& 1\\ 1& 0& 0& 0\\ 1& 0& 0& 1 },
\npmatrix{ 0& 0& 1& 0\\ 1& 0& 0& 0\\ 1& 0& 0& 1\\ 0& 1& 0& 0 },
\npmatrix{ 0& 0& 0& 1\\ 0& 1& 0& 1\\ 1& 1& 1& 0\\ 0& 1& 1& 0 }.
\]

Then $\mC_2$ and $\mC_3$ have the same dimension, the same minimum distance, and the same generalized tensor ranks, namely, $(4,6,7,9)$. However, the tensor rank of $\mC_2^\perp$ is 14,  and that of $\mC_3^\perp$ is~13.
\end{example}

\section{Operations on Codes}
\label{sec:7}

As we have already seen in Section \ref{sec:Tensor}, the tensor representation of a rank-metric code is the analogue of the generator matrix in the linear block case. We now study the properties of the generator tensor of a code and what information can be read from it. In particular, we focus on operations on rank-metric codes and the corresponding operations on their generator tensors.

Let $M\in \mat$. For any $I\subset [n]$, $J\subseteq [m]$ satisfying $0<|I|<n$, $0<|J|<m$, respectively, we denote by $M_I\in \Fq^{|I|\times m}$ the matrix whose rows are those of $M$ indexed by $I$ and by
$M^J\in \Fq^{n\times |J|}$ the submatrix whose columns are those of $M$ indexed by $J$.

\begin{definition}
Let $\C\subseteq \mat$ be a rank-metric code and $A\in \GL(n,q)$ and let $B \in \GL(m,q)$. 
Let $I\subset [n]$, $J\subseteq [m]$ satisfy $0<|I|<n$, $0<|J|<m$.
We define the 
{\bf row-punctured} and {\bf row-shortened} codes of $\C$ with respect to $A$ and $I$ by
$$\Pi^r(\C, A, I):=\left\{(AM)_{\bar {I}} : M \in \C\right\},\quad\Sigma^r(\C,A,I):=\left\{ (AM)_{\bar{I}} : M\in \C, (AM)_I=0\right\}.$$
 We define the
{\bf column-punctured} and {\bf column-shortened} codes of $\C$ with respect to $B$ and $J$ by
$$\Pi^c(\C, B, J):=\left\{(MB)^{\bar{J}} : M \in \C\right\},\quad\Sigma^c(\C,B,J):=\left\{ (MB)^{\bar{J}} : M\in \C, (MB)^{J}=0\right\}.$$
\end{definition}
Clearly,
\begin{equation}\label{eq:rowcol}
\Pi^r(\C, A, I)=A_{\bar{I}}\C \text{ and } \Pi^c(\C, B, J)= \C B^{\bar{J}}. 
\end{equation}
In particular every row-punctured code of $\C$ has the form $A\C$ for some $\ell \times n$ matrix $A$ of rank~$\ell$ and every column-punctured code
has the form $\C B$ for some $m \times s$ matrix $B$ of rank $s$.

Let $X$ be a generator tensor for $\C \in \fq^{n \times m}$ and let $M \in \C$. Then $M = m_1(\alpha,X)$ for unique $\alpha \in \fq^k$.
Let $I \subset [n]$ and let $A \in \GL(n,q)$. 
Then
$$(AM)_I = A_IM = m_2(A_I,M)=m_2(A_I,m_1(\alpha,X)) = m_1(\alpha,m_2(A_I,X)). $$
In particular, 
\begin{equation}\label{eq:rowm2}
\Pi^r(\C,A,I) = \ss_1(m_2(A_{\bar{I}},X)).
\end{equation} 
Similarly, for any $J \subset [m]$ and $B \in \GL(m,q)$, we have
\begin{equation}\label{eq:colm3}
\Pi^c(\C,B,J) = \ss_1(m_3((B^{\bar{J}})^\top,X)).
\end{equation}
Clearly, $m_2(A_{\bar{I}},X)$ is a generator tensor for $\Pi^r(\C,A,I)$ (respectively $m_3((B^{\bar{J}})^\top,X)$ is a generator tensor for $\Pi^c(\C,B,J)$) if and only if it is $1$-nondegenerate.

\begin{proposition}\label{prop:mindistancepunctured}
	Let $\C$ be an $\Fqk$ code, and let $2\leq d \leq \min\{n,m\}$. The following are equivalent.
	\begin{enumerate}
		\item \label{ppp1} $d(\C) \geq d$.
\item \label{ppp2} For every $A \in \GL(n,q)$ and $I\subseteq [n]$ satisfying $|I|\leq d-1$, the row-punctured code $\Pi^r(\C,A,I)$ has dimension $k$.
\item \label{ppp3} For every $B \in \GL(m,q)$ and $J\subseteq [m]$ satisfying $|J|\leq d-1$, the column-punctured code $\Pi^c(\C,B,J)$ has dimension $k$.
	\end{enumerate} 
\end{proposition}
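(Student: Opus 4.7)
The plan is to reduce each puncturing condition to the injectivity of a natural $\F_q$-linear map from $\C$, and then characterize failure of injectivity via a rank bound on codewords.

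By identity~(\ref{eq:rowcol}), $\Pi^r(\C,A,I) = A_{\bar{I}}\,\C$ is the image of the $\F_q$-linear map
$$\phi_{A,I} \colon \C \longrightarrow \F_q^{|\bar{I}|\times m}, \qquad M \longmapsto A_{\bar I}\,M,$$
so condition~(\ref{ppp2}) asserts precisely that $\phi_{A,I}$ is injective for every $A\in\GL(n,q)$ and every $I$ with $|I|\le d-1$. A codeword $M$ lies in $\ker\phi_{A,I}$ exactly when $AM$ has all rows indexed by $\bar I$ equal to zero, which forces $\rk(AM)\le |I|\le d-1$ and hence, by invertibility of $A$, $\rk(M)\le d-1$. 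Assuming~(\ref{ppp1}), every non-zero $M\in\C$ has $\rk(M)\ge d$, so the kernel must be trivial and~(\ref{ppp2}) follows.

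For the converse direction I argue contrapositively: suppose $\drk(\C)<d$ and pick a non-zero $M\in\C$ of rank $r\le d-1$. By row reduction there exists $A\in\GL(n,q)$ such that the non-zero rows of $AM$ are confined to positions $[r]$. Taking $I=[r]$, the assumptions $d\ge 2$ and $d\le\min\{n,m\}$ give $1\le |I|=r\le d-1<n$, so $I$ is admissible; moreover $\phi_{A,I}(M)=0$ with $M\ne 0$, contradicting~(\ref{ppp2}). This establishes (\ref{ppp1})$\Leftrightarrow$(\ref{ppp2}).

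The equivalence (\ref{ppp1})$\Leftrightarrow$(\ref{ppp3}) is entirely symmetric. By~(\ref{eq:rowcol}), $\Pi^c(\C,B,J)=\C\,B^{\bar J}$ is the image of $\psi_{B,J}\colon M\mapsto M\,B^{\bar J}$, and $M\in\ker\psi_{B,J}$ forces $\rk(M)=\rk(MB)\le|J|\le d-1$ since $B$ is invertible. The two arguments above then transfer verbatim with rows replaced by columns. No step presents a real obstacle; the main thing to verify in the contrapositive direction is simply that a rank-$r$ codeword with $r\le d-1$ can be brought via invertible row (respectively column) operations into a form whose non-zero rows (respectively columns) are supported on a set of size~$r$, which is immediate from the definition of matrix rank.
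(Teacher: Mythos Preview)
Your proof is correct and follows essentially the same approach as the paper's: both reduce the question to the injectivity of the linear map $M\mapsto (AM)_{\bar I}$ (respectively $M\mapsto (MB)^{\bar J}$) and characterize its kernel via a rank bound on $M$. Your treatment is in fact slightly more careful than the paper's, since you make the contrapositive direction explicit by exhibiting the pair $(A,I)$ that witnesses non-injectivity when a codeword of small rank exists, whereas the paper compresses both directions into a single ``if and only if'' line.
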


\begin{proof}
	Let $A \in \GL(n,q)$ and let $I\subseteq [n]$ such that $|I|\leq d-1$.
	The $\fq$-linear epimorphism $$f:\C \longrightarrow \Pi^r(\C,A,I) :M \mapsto (AM)_{\bar{I}},$$
	has non-trivial kernel if and only if $\rk (AM)_{\bar{I}} >0$ for every non-zero $M \in \C$.
	Since $$\rk(AM)_{\bar{I}} \geq \rk(M)-|I| \geq \rk(M)- d + 1 ,$$ $f$ is an isomorphism if and only if $d_R(\C) \geq d$.
	This establishes the equivalence of~(\ref{ppp1}) and (\ref{ppp2}). Similarly, 
(\ref{ppp1}) and (\ref{ppp3}) are equivalent.
\end{proof}

Proposition \ref{prop:mindistancepunctured}, combined with (\ref{eq:rowcol}), (\ref{eq:rowm2}) and (\ref{eq:colm3}) immediately yield the following result.

\begin{corollary}
 Let $X\in \Fq^{k \times n \times m}$ be the generator tensor of an $\Fqk$ code $\C$, and let $2\leq d \leq \min\{n,m\}$. The following are equivalent.
\begin{enumerate}
\item $d(C) \geq d$.
\item For every $A \in \GL(n,q)$ and $I\subseteq [n]$ satisfying $|I|\leq d-1$, $m_2(A_{\bar{I}},X)$ is the generator tensor of $\Pi^r(\C,A,I)$.
\item For every $B \in \GL(m,q)$ and $J\subseteq [m]$ satisfying $|J|\leq d-1$, $m_3((B^{\bar{J}})^\top,X)$ is the generator tensor of $\Pi^c(\C,B,J)$.
\item For every $A \in \fq^{(n-d+1) \times n}$ of rank $n-d+1$, $\dim_1(m_2(A,X))=k$.
\item For every $B \in \fq^{(m-d+1) \times m}$ of rank $m-d+1$, $\dim_1(m_3(B,X))=k$.
\end{enumerate} 
\end{corollary}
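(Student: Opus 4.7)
The proof is a direct synthesis of Proposition~\ref{prop:mindistancepunctured} with the identifications~(\ref{eq:rowm2}) and (\ref{eq:colm3}), together with the basic observation that a tensor with first dimension $k$ is a generator tensor for its first slice space exactly when that slice space has dimension $k$.

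The plan is to establish the equivalences $(1)\Leftrightarrow(2)\Leftrightarrow(4)$, since the proofs of $(1)\Leftrightarrow(3)\Leftrightarrow(5)$ are obtained by interchanging the roles of $m_2$/$m_3$ and of rows/columns. First I would note that for any $A\in\GL(n,q)$ and $I\subset[n]$, the tensor $Y:=m_2(A_{\bar I},X)$ has first dimension $k$ and, by~(\ref{eq:rowm2}), satisfies $\ss_1(Y)=\Pi^r(\C,A,I)$. Since $\Pi^r(\C,A,I)$ is the image of $\C$ under an $\F_q$-linear map, its dimension is at most $k$, and $Y$ is a generator tensor of $\Pi^r(\C,A,I)$ iff $Y$ is $1$-nondegenerate iff $\dim\Pi^r(\C,A,I)=k$. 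Proposition~\ref{prop:mindistancepunctured} therefore gives $(1)\Leftrightarrow(2)$ immediately.

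Next, for $(2)\Leftrightarrow(4)$, I would use the elementary linear-algebra fact that any $A\in\F_q^{(n-d+1)\times n}$ of full rank $n-d+1$ can be completed to a matrix $A'\in\GL(n,q)$ by adjoining $d-1$ extra rows, so that $A=A'_{\bar I}$ for $I=\{n-d+2,\dots,n\}$; conversely, for any $A'\in\GL(n,q)$ and $I\subset[n]$ with $|I|=d-1$, the matrix $A'_{\bar I}$ has rank $n-d+1$. Combined with the first step, this shows that (4) is precisely (2) restricted to the case $|I|=d-1$. To close the loop, observe that for $I_1\subseteq I_2$ the natural surjection $\Pi^r(\C,A',I_1)\to\Pi^r(\C,A',I_2)$ factors the map $\C\to\Pi^r(\C,A',I_2)$; hence if the latter is injective (i.e.\ dimension $k$) then so is $\C\to\Pi^r(\C,A',I_1)$, which shows that verifying condition (2) for $|I|=d-1$ suffices to verify it for all $|I|\leq d-1$.

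The equivalences $(1)\Leftrightarrow(3)\Leftrightarrow(5)$ follow by the exact same argument applied to $m_3$ and column-punctured codes, using~(\ref{eq:colm3}) in place of~(\ref{eq:rowm2}) and recalling that the column case of Proposition~\ref{prop:mindistancepunctured} is item~(\ref{ppp3}) there. There is no real obstacle in this proof; the only point that needs a line of care is the equivalence between the parametrizations $\{A'_{\bar I}:A'\in\GL(n,q),\;|I|=d-1\}$ and $\{A\in\F_q^{(n-d+1)\times n}:\rk(A)=n-d+1\}$, together with the monotonicity remark that reduces general $|I|\le d-1$ to the critical case $|I|=d-1$.
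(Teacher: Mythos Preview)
Your proposal is correct and follows exactly the approach the paper indicates: it cites Proposition~\ref{prop:mindistancepunctured} together with the identifications~(\ref{eq:rowm2}) and~(\ref{eq:colm3}), and your added remarks (the bijection between full-rank $(n-d+1)\times n$ matrices and submatrices $A'_{\bar I}$ with $|I|=d-1$, plus the monotonicity reduction to $|I|=d-1$) simply make explicit what the paper leaves as ``immediately yield the following result.''
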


\subsection{The Parity Check Tensor}

We define a {\em parity check tensor} of a rank-metric code.

\begin{definition}
 Let $\C$ be an $\Fqk$ code and let $Y \in\fq^{(mn-k)\times n \times m}$. 
 We say that $Y$ is a \textbf{parity check tensor} for $\C$ if 
$$\C=\left\{ M \in \F^{n\times m} \mid Y:M=0 \right\}.$$
\end{definition}

Recall that for a pair of matrices $M=(m_{ij})$ and $N=(n_{ij})$ in $\Fq^{n \times m}$,
$$\langle M, N \rangle=\mathrm{Tr}(AB^\top)=\sum_{i,j} m_{ij}n_{ij}.$$
This operation coincides with the tensor double-dot product when applied to matrices, i.e. $M:N=\langle M, N \rangle$.

\begin{proposition}\label{prop:pcgendual}
Let $Y \in \F^{(mn-k)\times n \times m}$, and let $\C$ be an $\Fqk$ code. Then, $Y$ is a generator tensor for $\C^{\perp}$ if and only if it is a parity check tensor for $\C$.
\end{proposition}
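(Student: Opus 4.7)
The plan is to unpack the defining condition $Y:M=0$ in terms of the trace-product on $\Fq^{n\times m}$, and then apply elementary duality for nondegenerate bilinear forms.

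First I would view the matrix $M\in\Fq^{n\times m}$ as an element of $\Fq^{1\times n\times m}$ and use the coordinate description of the double-dot product given in Section~\ref{sec:3}. For the $i$-th slice matrix $Y^{(i)}:=m_1(e_i,Y)\in\Fq^{n\times m}$, the formula
\[
(Y:M)_i \;=\; \sum_{j,\ell} Y_{ij\ell}\, M_{j\ell} \;=\; \langle Y^{(i)},M\rangle
\]
identifies the $i$-th coordinate of $Y:M$ with the trace-product of $Y^{(i)}$ with $M$. Therefore
\[
\{M\in\Fq^{n\times m} : Y:M=0\} \;=\; \bigcap_{i=1}^{mn-k} (Y^{(i)})^{\perp} \;=\; \ss_1(Y)^{\perp},
\]
where orthogonality is taken with respect to the trace-product.

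From here the equivalence is essentially biduality. If $Y$ is a parity check tensor for $\mC$, then $\mC=\ss_1(Y)^\perp$, so $\dim\ss_1(Y)=mn-\dim\mC=mn-k$, which forces $Y$ to be $1$-nondegenerate (its $mn-k$ slice matrices are linearly independent). Taking perpendiculars on both sides of $\mC=\ss_1(Y)^\perp$ and using that the trace-product is symmetric and nondegenerate on $\Fq^{n\times m}$, we obtain $\ss_1(Y)=\mC^\perp$, which by definition says that $Y$ is a generator tensor for $\mC^\perp$. Conversely, if $\ss_1(Y)=\mC^\perp$, then $\ss_1(Y)^\perp=\mC^{\perp\perp}=\mC$, so the set $\{M:Y:M=0\}=\ss_1(Y)^\perp$ equals $\mC$ and $Y$ is a parity check tensor for $\mC$.

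The only mildly subtle point is making sure the dimension count is consistent — in particular, observing that the parity check condition automatically forces $Y$ to be $1$-nondegenerate and hence a legitimate generator tensor of the $(mn-k)$-dimensional code $\mC^\perp$. Beyond that, the argument is bookkeeping: the double-dot product with a matrix is exactly the vector of trace-products against the slice matrices, and generator/parity-check duality for matrix codes is just trace-product orthogonality.
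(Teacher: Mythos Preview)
Your proof is correct and follows essentially the same route as the paper: both hinge on the identity $(Y:M)_i=\langle m_1(e_i,Y),M\rangle$, which the paper obtains via the relation $m_1(g,Y):M=g(Y:M)$ from \eqref{eq:colonm1} and you obtain directly from the coordinate description of the double-dot product. Your treatment of the $1$-nondegeneracy of $Y$ is slightly more explicit than the paper's, but the underlying argument is the same.
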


\begin{proof}
	$Y$ is a parity check tensor for $\mC$ if and only if
	$Y:M = 0$ for all $M \in \mC$. From (\ref{eq:colonm1}), this holds if and only if 
	$$0=g(Y:M)=m_1(g,Y):M$$
	for all $g \in \Fq^{nm-k}$ and $M \in \mC$, which holds if and only if $\mC^\perp =\{m_1(g,Y): g \in \Fq^{nm-k}\}$. 
\end{proof}

\begin{corollary}
Let $X \in \F^{k \times n \times m}$ be the generator tensor for an $\Fqk$ code $\C$. A $1$-nondegenerate tensor $Y\in \F^{(mn-k)\times n \times m}$ is a parity check tensor for $\C$ if and only if 
$$X: Y=0.$$
\end{corollary}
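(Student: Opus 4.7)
The plan is to reduce the statement to Proposition \ref{prop:pcgendual} by showing that the condition $X:Y=0$, together with $1$-nondegeneracy of $Y$, is equivalent to $\ss_1(Y)=\mC^{\perp}$.

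First, I would unpack what $X:Y=0$ means coordinate-wise. Since $X$ and $Y$ are $3$-tensors, $X:Y$ is a $k\times(mn-k)$ matrix, and from the coordinate definition of the double-dot product one checks directly that
$$(X:Y)_{is}\;=\;\sum_{j,\ell}X_{ij\ell}Y_{sj\ell}\;=\;\langle m_1(e_i,X),\,m_1(e_s,Y)\rangle,$$
where on the right-hand side we use the trace-product on $\mat$ (which, as noted before Proposition \ref{prop:pcgendual}, agrees with the double-dot product of two matrices). Alternatively, the same identity follows from \eqref{eq:colonm1} by computing $m_1(e_i,X):m_1(e_s,Y)=e_i(X:Y)e_s^{\top}$.

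Next, I would observe that $X$, being a generator tensor for $\mC$, is necessarily $1$-nondegenerate, so $\ss_1(X)=\mC$ and the slices $\{m_1(e_i,X):1\le i\le k\}$ form an $\Fq$-basis of $\mC$. By assumption $Y$ is also $1$-nondegenerate, so $\dim_{\Fq}\ss_1(Y)=mn-k$ and the slices $\{m_1(e_s,Y):1\le s\le mn-k\}$ form an $\Fq$-basis of $\ss_1(Y)$. Bilinearity of the trace-product then gives
$$X:Y=0\iff \langle A,B\rangle=0\;\text{for all}\;A\in\mC,\,B\in\ss_1(Y)\iff \ss_1(Y)\subseteq\mC^{\perp}.$$
Since $\dim_{\Fq}\ss_1(Y)=mn-k=\dim_{\Fq}\mC^{\perp}$, the inclusion is an equality, so $X:Y=0$ if and only if $\ss_1(Y)=\mC^{\perp}$, i.e. $Y$ is a generator tensor for $\mC^{\perp}$.

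Finally, I would invoke Proposition \ref{prop:pcgendual} to conclude that $Y$ is a generator tensor for $\mC^{\perp}$ exactly when $Y$ is a parity check tensor for $\mC$, completing both implications. There is no real obstacle here; the only subtle point is to use the $1$-nondegeneracy of $Y$ (together with the dimension count $\dim\mC^{\perp}=mn-k$) to upgrade the containment $\ss_1(Y)\subseteq\mC^{\perp}$ to equality, which is exactly where the hypothesis on $Y$ is consumed.
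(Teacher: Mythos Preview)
Your proof is correct and follows exactly the approach the paper intends: the corollary is stated without proof precisely because it is the immediate combination of Proposition~\ref{prop:pcgendual} with the identity $(X:Y)_{is}=\langle m_1(e_i,X),m_1(e_s,Y)\rangle$ (which you correctly derive either coordinate-wise or via~\eqref{eq:colonm1}), together with the dimension count coming from $1$-nondegeneracy of $Y$. There is nothing to add.
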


Let $\C \subseteq \mat$ be a rank-metric code. We now express the shortened code of a rank-metric code in terms of its parity check tensor. It will be convenient to
use the following duality result.

\begin{theorem}[\text{see \cite[Theorem 3.5]{br17}}]\label{thm:dualityPS}
	Let $\C \subseteq \mat$ be a rank-metric code, $A\in \GL(n,q)$, $B\in\GL(m,q)$. Let $I \subseteq [n]$ with $0<|I|<n$ and let $J \subseteq [m]$ with $0<|J|<m$. 
$$
	\Pi^r(\C,A,I)^\perp  = \Sigma^r(\C^\perp, (A^\top)^{-1}, I), \qquad
	\Pi^c(\C,B,J)^\perp  = \Sigma^c(\C^\perp, (B^\top)^{-1}, J).
$$
\end{theorem}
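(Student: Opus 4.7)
The plan is to reduce to the case $A=I_n$ (respectively $B=I_m$) and then to recognize both sides of the identity as projections of suitable intersections, which are equal by a direct computation combined with a dimension count.

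First I would recall the standard fact that for any $A \in \GL(n,q)$ and any code $\mC \subseteq \mat$, one has $(A\mC)^\perp = (A^\top)^{-1}\mC^\perp$, which follows at once from $\mathrm{Tr}(AM((A^\top)^{-1}N)^\top) = \mathrm{Tr}(MN^\top)$ and dimension count. Using this and the identity $\Pi^r(\mC,A,I) = \pi(A\mC)$, where $\pi : \mat \to \Fq^{(n-|I|)\times m}$ is the row-projection $M \mapsto M_{\bar I}$, we have $\Pi^r(\mC,A,I)^\perp = \pi(A\mC)^\perp$ and $\Sigma^r(\mC^\perp,(A^\top)^{-1},I) = \{N_{\bar I} : N \in (A^\top)^{-1}\mC^\perp,\ N_I = 0\}$. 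So it suffices to prove the theorem for $A = I_n$, i.e.\ to show $\pi(\mC)^\perp = \pi(\mC^\perp \cap K)$, where $K := \{N \in \mat : N_I = 0\}$.

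Next I would verify the easy containment. Let $K' := \{M \in \mat : M_{\bar I} = 0\}$ be the set of matrices supported on the rows indexed by $I$, so that $K'^\perp = K$ with respect to the trace-product. For $M \in \mC$ and $N \in \mC^\perp \cap K$ one computes
\begin{equation*}
\langle M_{\bar I}, N_{\bar I}\rangle \;=\; \langle M, N\rangle \;-\; \langle M_I, N_I\rangle \;=\; 0,
\end{equation*}
because $\langle M, N\rangle = 0$ and $N_I = 0$. This gives $\pi(\mC^\perp \cap K) \subseteq \pi(\mC)^\perp$.

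To conclude, I would match dimensions. On the one hand, $\dim \pi(\mC) = \dim \mC - \dim(\mC \cap K')$, so
\begin{equation*}
\dim \pi(\mC)^\perp = (n-|I|)m - \dim \mC + \dim(\mC \cap K').
\end{equation*}
On the other hand, the map $\pi$ restricted to $K$ is a bijection onto $\Fq^{(n-|I|)\times m}$, so $\dim \pi(\mC^\perp \cap K) = \dim(\mC^\perp \cap K)$; applying the standard formulas $\dim(\mC^\perp \cap K) = \dim\mC^\perp + \dim K - \dim(\mC^\perp + K)$ together with $(\mC^\perp + K)^\perp = \mC \cap K'$ and $\dim \mC^\perp = nm - \dim \mC$ gives precisely the same value. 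The two subspaces therefore coincide, settling the row case. The column case is obtained by the same argument applied to the transpose (or, equivalently, by replacing $\pi$ with column-projection and $K$ with the subspace of matrices vanishing in the columns indexed by $J$), using $(\mC B)^\perp = \mC^\perp (B^\top)^{-1}$ in place of the left-multiplication identity. The main bookkeeping step is making sure the $A$ (or $B$) substitution matches the transpose-inverse appearing in the statement, but this is handled cleanly by the duality identity recalled at the beginning.
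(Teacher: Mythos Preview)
Your argument is correct: the reduction to $A=I_n$ via $(A\mC)^\perp=(A^\top)^{-1}\mC^\perp$ is valid, the containment $\pi(\mC^\perp\cap K)\subseteq\pi(\mC)^\perp$ is checked correctly, and the dimension count goes through exactly as you wrote (with $K^\perp=K'$ and $\dim K=(n-|I|)m$). The column case follows by the same computation on the other side.

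There is nothing to compare against in the paper itself: the statement is quoted from \cite[Theorem~3.5]{br17} and no proof is given here. Your proof is the standard ``projection/intersection plus dimension count'' argument that underlies that reference, so in spirit it matches what the cited result does; you have simply supplied the details that the present paper omits.
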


\begin{corollary}\label{prop:PCshortened}
Let $Y\in \Fq^{(nm-k)\times n \times m}$ be the parity check tensor of an $\Fqk$ code $\C$, let $A\in \GL(n,q)$ and let $B \in \GL(m,q)$. Let $I\subseteq [n]$ and $J \subseteq[m]$. Then 
\begin{enumerate}
\item \label{q1} $\ss_1((m_2(((A^\top)^{-1})_{\bar{I}},Y))=\Sigma^r(\C, A, I)^\perp$.

\item \label{q2} $\ss_1(m_3((B^{-1})^{\bar{J}},Y))=\Sigma^c(\C, (B^\top)^{-1}, J)^\perp$.
\end{enumerate}
\end{corollary}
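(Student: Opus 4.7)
The plan is to combine three ingredients established earlier in the paper. First, Proposition~\ref{prop:pcgendual} asserts that a parity check tensor $Y$ for $\C$ is precisely a generator tensor for the dual code $\C^\perp$, so $\ss_1(Y) = \C^\perp$. Second, equations~(\ref{eq:rowm2}) and~(\ref{eq:colm3}) express row- and column-punctured codes as slice spaces of tensors obtained via $m_2$ and $m_3$ from a generator tensor. Third, Theorem~\ref{thm:dualityPS} identifies the dual of a punctured code with a suitably shortened dual code. Together, these three facts translate slice-space statements for tensors derived from $Y$ directly into shortening statements for $\C$.

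For~(\ref{q1}) I would apply~(\ref{eq:rowm2}) with $Y$ in the role of the generator tensor, $(A^\top)^{-1}$ in the role of the row-selection matrix, and the same index set $I$, obtaining
\[
\ss_1\bigl(m_2(((A^\top)^{-1})_{\bar{I}},Y)\bigr) \;=\; \Pi^r(\C^\perp,(A^\top)^{-1},I).
\]
Next I would apply the first identity of Theorem~\ref{thm:dualityPS} with $\C^\perp$ playing the role of $\C$ and $(A^\top)^{-1}$ playing the role of $A$. Using the elementary calculation $(((A^\top)^{-1})^\top)^{-1} = A$ (transpose and inverse commute), this yields $\Pi^r(\C^\perp,(A^\top)^{-1},I)^\perp = \Sigma^r(\C,A,I)$, and dualizing both sides establishes~(\ref{q1}).

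For~(\ref{q2}) the strategy is entirely parallel, using~(\ref{eq:colm3}) in place of~(\ref{eq:rowm2}) and the second identity of Theorem~\ref{thm:dualityPS}. The column-selection matrix $B'$ for $\C^\perp$ must be chosen so that the shortening matrix produced by the duality theorem matches $(B^\top)^{-1}$ in the statement; the condition $(B'^\top)^{-1} = (B^\top)^{-1}$ forces $B' = B$, after which the desired slice-space expression reads off from~(\ref{eq:colm3}) and one more dualization finishes this part.

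The main obstacle here is not conceptual but notational: the three ingredients use slightly different conventions for subscript- versus superscript-selection, transposition, and inversion, and one has to chase these carefully to verify that the final tensor expressions match the ones displayed in the statement. Once those conventions are aligned, each part reduces to a single substitution into Theorem~\ref{thm:dualityPS}.
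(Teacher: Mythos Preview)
Your proposal is correct and follows essentially the same approach as the paper: identify $Y$ as a generator tensor for $\C^\perp$ via Proposition~\ref{prop:pcgendual}, express the slice space as a punctured code of $\C^\perp$ via~(\ref{eq:rowm2}) (resp.~(\ref{eq:colm3})), and then invoke Theorem~\ref{thm:dualityPS} to rewrite that punctured code as the dual of the appropriate shortening of $\C$. The paper merely reverses the order of presentation, beginning with the duality identity $\Sigma^r(\C,A,I)=\Pi^r(\C^\perp,(A^\top)^{-1},I)^\perp$ and then recognizing the right-hand side via~(\ref{eq:rowm2}); this is the same argument up to reordering.
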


\begin{proof}
By the duality statement of Theorem \ref{thm:dualityPS}, we have
\begin{equation*}
\label{eq:dualPS}
\Sigma^r(\C,A,I)=\Pi^r(\C^\perp,(A^\top)^{-1},I)^\perp.
\end{equation*}
By Proposition \ref{prop:pcgendual}, $Y$ is a generator tensor for $\C$ and so by Eq. (\ref{eq:rowm2}) we have
$$\ss_1(m_2(((A^\top)^{-1})_{\bar{I}},Y)) = \Pi^r(\C^\perp,(A^\top)^{-1},I),$$ showing that (\ref{q1}) holds. The proof that (\ref{q2}) holds is similar.
\end{proof}

\begin{proposition}\label{prop:shortcriterion}
 Let $\C\subseteq \mat$ be a rank-metric code and let $2\leq d \leq \min\{n,m\}$. The following are equivalent.
\begin{enumerate}
\item \label{qq1} $d(\C)\geq d$.
\item \label{qq2} For every $I\subseteq [n]$ with $|I|=n-d+1$, for every $A\in \GL(n,q)$, $\Sigma^r(\C,A,I)=\{0\}$.
\item  \label{qq3} For every $J\subseteq [m]$ with $|J|=m-d+1$, for every $B\in \GL(m,q)$, $\Sigma^c(\C,B,J)=\{0\}$.
\end{enumerate}
\end{proposition}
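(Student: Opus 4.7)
The plan is to establish (\ref{qq1}) $\Leftrightarrow$ (\ref{qq2}) directly and then to deduce (\ref{qq1}) $\Leftrightarrow$ (\ref{qq3}) by the symmetric argument with rows replaced by columns. The whole argument is just an unpacking of the definition of row-shortening together with the observation that invertible row operations preserve rank.

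First I would unfold the condition $\Sigma^r(\C,A,I)=\{0\}$. By definition, $\Sigma^r(\C,A,I)$ consists of the matrices $(AM)_{\bar I}$ as $M$ ranges over those elements of $\C$ satisfying $(AM)_I=0$. Since $A\in\GL(n,q)$, the simultaneous vanishing $(AM)_I=0$ and $(AM)_{\bar I}=0$ forces $AM=0$, hence $M=0$. So $\Sigma^r(\C,A,I)=\{0\}$ is equivalent to the statement that no nonzero $M\in\C$ satisfies $(AM)_I=0$.

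Next I would translate this last condition into a rank condition on $\C$. If $M\in\C$ satisfies $(AM)_I=0$, then $AM$ has at least $|I|=n-d+1$ zero rows, and since $A$ is invertible we get $\rk(M)=\rk(AM)\le n-|I|=d-1$. Conversely, suppose there exists $0\ne M\in\C$ with $\rk(M)=r\le d-1$. By Gaussian elimination there exists $A\in\GL(n,q)$ such that the last $n-r\ge n-d+1$ rows of $AM$ are zero; choosing $I\subseteq\{r+1,\ldots,n\}$ of size $n-d+1$ then gives $(AM)_I=0$ while $M\ne 0$. Hence the negation of (\ref{qq2}) is exactly the assertion that $\C$ contains a nonzero matrix of rank at most $d-1$, which in turn is the negation of (\ref{qq1}). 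This yields (\ref{qq1}) $\Leftrightarrow$ (\ref{qq2}).

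Finally, the equivalence (\ref{qq1}) $\Leftrightarrow$ (\ref{qq3}) follows by the obvious dualization. One way to present it is to apply the already-proved (\ref{qq1}) $\Leftrightarrow$ (\ref{qq2}) to the transposed code $\C^\top=\{M^\top:M\in\C\}$, using that $d(\C^\top)=d(\C)$, that $\GL(m,q)$ acts on $\C^\top$ by left multiplication via $B\mapsto B^\top$, and that row-shortening of $\C^\top$ by $(B^\top,J)$ coincides with column-shortening of $\C$ by $(B,J)$. The only real subtlety to keep track of is that invertibility of $A$ (respectively $B$) is used twice: once to transfer the rank from $AM$ back to $M$, and once to deduce $M=0$ from $AM=0$; beyond that, the proof is purely formal.
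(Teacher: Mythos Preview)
Your proof is correct and essentially follows the same idea as the paper's. The paper's proof is slightly more indirect: it invokes Proposition~\ref{prop:mindistancepunctured} to say that $d(\C)\ge d$ is equivalent to the puncturing map $M\mapsto (AM)_I$ (with $|\bar I|=d-1$, i.e.\ $|I|=n-d+1$) being injective, and then observes that the kernel of this map is trivial precisely when $\Sigma^r(\C,A,I)=\{0\}$. Your argument carries out the same rank computation directly, without routing through the puncturing statement, which makes it self-contained; the underlying content is identical.
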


\begin{proof}
From Proposition \ref{prop:mindistancepunctured}, $d_R(\C)\geq d$ if and only if $\Pi^r(\C,A,\bar{I})$ and $\C$ are isomorphic under the map
$:M\mapsto (AM)_{I}.$ The kernel of this map is $\Sigma^r(\C,A,I)$, which shows the equivalence of (\ref{qq1}) and (\ref{qq2}). The equivalence of (\ref{qq1}) and (\ref{qq3}) follows similarly.  
\end{proof}

As a direct consequence of Corollary \ref{prop:PCshortened} and Proposition \ref{prop:shortcriterion}, we get the following result that relates the minimum distance of a rank-metric code with any of its parity check tensors.

\begin{corollary}
Let $Y\in \Fq^{(nm-k)\times n \times m}$ be the parity check tensor of an $\Fqk$ code $\C$, and let $2\leq d \leq \min\{n,m\}$. The following are equivalent.
\begin{enumerate}
\item $d(\C)\geq d$.
\item for every $A\in\Fq^{(d-1)\times n}$ of full rank, $\ss_1(m_2(A,Y))=\Fq^{(d-1)\times m}$.
\item for every $B\in \Fq^{(d-1)\times m}$ of full rank, $\ss_1(m_3(B,Y))=\Fq^{n\times (d-1)}$.
\end{enumerate}
\end{corollary}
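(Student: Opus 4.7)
The plan is to derive this corollary as a direct consequence of the two preceding results: Proposition~\ref{prop:shortcriterion}, which characterizes $d(\C)\ge d$ via vanishing of shortened codes, and Corollary~\ref{prop:PCshortened}, which expresses duals of shortened codes as slice spaces of tensors obtained from the parity check tensor by multiplication in the second or third factor. I will prove the equivalence $(1)\Leftrightarrow(2)$ in full and note that $(1)\Leftrightarrow(3)$ follows by the analogous column argument.

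First, by Proposition~\ref{prop:shortcriterion}, condition $(1)$ is equivalent to $\Sigma^r(\C,A,I)=\{0\}$ for every $A\in\GL(n,q)$ and every $I\subseteq[n]$ with $|I|=n-d+1$. Since $\Sigma^r(\C,A,I)$ sits inside $\Fq^{(d-1)\times m}$, this vanishing is equivalent to $\Sigma^r(\C,A,I)^\perp=\Fq^{(d-1)\times m}$. Applying part~(\ref{q1}) of Corollary~\ref{prop:PCshortened}, this latter equality is exactly
$$\ss_1\bigl(m_2(((A^\top)^{-1})_{\bar I},Y)\bigr)=\Fq^{(d-1)\times m}.$$
Hence $(1)$ is equivalent to: for every $A\in\GL(n,q)$ and every $I\subseteq[n]$ with $|\bar I|=d-1$, the slice space $\ss_1(m_2(((A^\top)^{-1})_{\bar I},Y))$ equals the full matrix space $\Fq^{(d-1)\times m}$.

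It remains to match this family of matrices $((A^\top)^{-1})_{\bar I}$ with the family of all full-rank matrices $A'\in\Fq^{(d-1)\times n}$ appearing in~$(2)$. One direction is immediate: for any $A\in\GL(n,q)$ and any $\bar I$ of size $d-1$, the submatrix $((A^\top)^{-1})_{\bar I}$ has rank $d-1$, so $(2)$ implies the reformulated~$(1)$. For the converse, given any full-rank $A'\in\Fq^{(d-1)\times n}$, extend it to an invertible $B\in\GL(n,q)$ whose first $d-1$ rows are those of $A'$, set $A:=(B^{-1})^\top$ and $\bar I:=\{1,\ldots,d-1\}$; then $((A^\top)^{-1})_{\bar I}=B_{\bar I}=A'$, so every such $A'$ arises. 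This closes the equivalence $(1)\Leftrightarrow(2)$.

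The equivalence $(1)\Leftrightarrow(3)$ proceeds in an entirely parallel fashion, invoking part~(\ref{q2}) of Corollary~\ref{prop:PCshortened} together with the column-shortening criterion in Proposition~\ref{prop:shortcriterion}, using the same extension trick applied to columns rather than rows to realize an arbitrary full-rank $B\in\Fq^{(d-1)\times m}$ as $(B'^{-1})^{\bar J}$ for some $B'\in\GL(m,q)$. There is no genuine obstacle here; the only point requiring attention is the bookkeeping of transposes and inverses so that the family of matrices $((A^\top)^{-1})_{\bar I}$ (respectively $(B'^{-1})^{\bar J}$) is recognized as exhausting all full-rank matrices of the correct shape, which is handled by the elementary extension argument above.
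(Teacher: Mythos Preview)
Your proof is correct and follows exactly the route the paper intends: the paper presents this corollary without proof, stating only that it is a direct consequence of Corollary~\ref{prop:PCshortened} and Proposition~\ref{prop:shortcriterion}, and you have spelled out precisely those details, including the elementary extension argument identifying the family $((A^\top)^{-1})_{\bar I}$ with all full-rank $(d-1)\times n$ matrices.
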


\section*{Acknowledgement}
Alessandro Neri would like to thank the hospitality of University College Dublin during his 6-month stay there, supported by the Swiss National Science Foundation mobility Grant n. 169510/2.

\end{document}